\let\chapter\section
\newcommand{\cmark}{\ding{51}}%
\pgfplotsset{compat=1.8}
\DeclareMathOperator*{\E}{\mathbb{E}}
\newtheorem*{theorem*}{Theorem}
\newtheorem{theorem}{Theorem}[section]
\newtheorem{lemma}[theorem]{Lemma}
\newtheorem{claim}{Claim}[section]
\newtheorem{remark}{Remark}[section]
\theoremstyle{definition}
\newtheorem{definition}{Definition}
\DeclareMathOperator*{\argmax}{arg\,max}
\DeclareMathOperator*{\argmin}{arg\,min}
\title{
    The Sample Complexity of Auctions with Side Information%
    \thanks{
        This is the fifth arXiv version of the paper.
        Importantly, this version presents a correct proof for the strong revenue monotonicity property, while the previous proof has a fatal bug.
        We have also rewritten most parts of the paper to improve exposition, and to clarify the scopes within which the technical ingredients hold.
    }
}
\author{
Nikhil R.\ Devanur \thanks{Amazon}\and
Zhiyi Huang \thanks{University of Hong Kong}\and
Alexandros Psomas \thanks{Purdue University} }
\date{}
\begin{document}
\maketitle

\newcommand{\alloc}{x}
\newcommand{\allocset}{\mathcal{X}}
\newcommand{\matroid}{\mathcal{M}}

\renewcommand{\E}{\mathbf{E}}
\renewcommand{\Pr}{\mathbf{Pr}}

\newcommand{\vvec}{\bm{v}}
\newcommand{\emp}{\text{\rm emp}}
\newcommand{\sr}{\textsc{SRev}}
\newcommand{\rev}[2]{\textsc{Rev} \left( #1 , #2 \right)}
\newcommand{\approxopt}{\textsc{Apx}}
\newcommand{\opt}[1]{\textsc{Opt} \big( #1  \big)}
\newcommand{\vcg}[1]{\textsc{VCG} ( #1 )}
\newcommand{\topt}[1]{\textsc{Opt} ( #1 )}
\newcommand{\optt}[1]{\textsc{Opt}^\textrm{signals}\left( #1  \right)}
\newcommand{\opteps}[2]{\textsc{Opt}_{#1} \big( #2  \big)}
\newcommand{\Dbold}{\bm{D}}
\newcommand{\Dboldhat}{\bm{\widehat{D}}}
\newcommand{\Mhat}{\widehat{M}}
\newcommand{\Dhat}{\widehat{D}}
\newcommand{\Dboldtilde}{\bm{\tilde{D}}}
\newcommand{\Dboldbar}{\bm{\bar{D}}}
\newcommand{\Dcal}{\mathcal{D}}
\newcommand{\dsigma}[1]{D^{\sigma_{#1}}}
\newcommand{\dsigmahat}[1]{D^{\hat{\sigma}_{#1}}}
\newcommand{\dtrunc}[2]{{#1}^{\sf trunc}_{#2}}
\newcommand{\fdsigma}{F_{\Dcal,\sigma}}

\newcommand{\defeq}{\stackrel{\text{def}}{=}}

\def\signalsmodel{signals model\xspace}
\def\Signalsmodel{Signals Model\xspace}
\def\crmodel{Myerson model\xspace}
\def\CRmodel{Myerson Model\xspace}

\renewcommand{\vec}[1]{\bm{#1}}

\def\mat{\bm{M}}
\def\I{\mathcal{I}}

\newcommand{\zh}[1]{\textcolor{red}{(\textbf{Zhiyi:} #1})}
\newcommand{\anote}[1]{{\color{red} {(\sf Alex's Note:} {\sl{#1}}     {\sf )}}}
\newcommand{\nikhil}[1]{{\color{red} {(\sf Nikhil's Note:} {\sl{#1}}     {\sf )}}}

\begin{abstract}
    Traditionally, the Bayesian optimal auction design problem has been considered either when the bidder values are i.i.d., or when each bidder is individually identifiable via her value distribution. The latter is a reasonable approach when the bidders can be classified into a few categories, but there are many instances where the classification of bidders is a continuum.  For example, the classification of the bidders may be based on their annual income, their propensity to buy an item based on past behavior, or in the case of ad auctions, the click through rate of their ads. We introduce an alternate model that captures this aspect, where bidders are \emph{a priori} identical, but can be distinguished based (only) on some side information the auctioneer obtains at the time of the auction. 

We extend the sample complexity approach of Dhangwatnotai, Roughgarden, and Yan (2014) and Cole and Roughgarden (2014) to this model and obtain almost matching upper and lower bounds.
As an aside, we obtain a revenue monotonicity lemma which may be of independent interest. We also show how to use Empirical Risk Minimization techniques to improve the sample complexity bound of Cole and Roughgarden (2014) for the non-identical but independent value distribution case.

\end{abstract}

\thispagestyle{empty}

\clearpage

\tableofcontents
\thispagestyle{empty}

\clearpage

\setcounter{page}{1}


\section{Introduction}
A significant part of auction design theory is on auctions that maximize the revenue generated.
The predominant model here is that of \emph{Bayesian Mechanism Design}: 
the bidder values are assumed to be drawn from a given distribution, and the objective is the expected revenue, 
where the expectation is over the draw of values
and randomization in the mechanism itself, if any. 
The seminal result in this field is the much celebrated Myerson's \cite{Myerson} theory of optimal auctions. 
When bidders' values are independently and identically distributed (i.i.d.), it gives the beautiful conclusion that a second price auction with a reserve price is revenue optimal among essentially all reasonable auction formats. 
When the bidders are distinguishable, via a distinct value distribution for each bidder (which is still independent of other bidders' valuations), the theory completely characterizes the optimal auction, even though it is not as simple as in the i.i.d.\ case. We will refer to this as the \textbf{\crmodel}.

\paragraph{Sample Complexity}
The question that we consider in this paper is that of sample complexity: \emph{how many samples from the distribution are necessary/sufficient to approximate the optimal auction? }

The sample complexity approach to revenue maximization \cite{Dhangwatnotai2014revenue,ColeR14,HuangMR15,MR15} assumes that we are given access to samples from the value distribution and measures the number of samples that are 
necessary/sufficient to approximate the optimal auction. 
This study is motivated by the observation that the source of priors is really past data, 
both in principle and in practice \cite{Ostrovsky2011reserve}. 
The goal has been to put such practices on a sound theoretical footing, by 
identifying principled approaches and precisely quantifying the value of data. 

\citet*{Dhangwatnotai2014revenue} initiated this line of research, by giving almost tight bounds for the case of a single agent. 
This was later extended to the \crmodel by \citet{ColeR14}.\footnote{They made an assumption that the distributions are \emph{regular}, which implies that a certain revenue function is concave. This is a common assumption in the literature; see Section \ref{sec:prelims} for a definition.} 
In this paper we improve their \cite{ColeR14} upper bound,   from $\tilde{O}(n^{10} \epsilon^{-7})$ to $\tilde{O}(n\epsilon^{-4})$. 
\prettyref{sec:mainresults} presents a detailed discussion of the upper and lower bounds for different variants, and comparisons with other related work such as \citet{MR15} and \citet*{HuangMR15}.

\paragraph{Side Information}
The \crmodel could be applicable in practice when bidders can be classified into a few different categories, and market research 
provides a distribution for each of these. 
We observe that often, the classification of bidders is a continuum, rather than discrete. 
Suppose for instance that the auctioneer knows the incomes of all the bidders. 
It is reasonable to assume that bidders with higher incomes in general have higher valuations and thus use this information in the auction design. 

In this paper, we propose a model where the side information that can be used to distinguish between the bidders comes from a continuum.
\emph{A priori}, bidders are identical: 
all bidders draw two numbers, a value and a ``signal'', i.i.d.\ from a joint probability distribution. The signal is a real number in $[0,1]$ that captures the side information that the auctioneer has about the bidder; 
in other words \emph{the bidder can only lie about his value and not his signal}. 
We will refer to this as the \textbf{\signalsmodel}.

From the Bayesian Mechanism Design perspective, Myerson's theory readily extends to the \signalsmodel:
once you condition on the signal for each buyer, 
you have a different (and independent) distribution for each buyer, 
and the setting reduces to the \CRmodel. 
The same is not true from the sample complexity perspective. 
Even when the bidder categories are discrete,\footnote{Our model can handle discrete categories as well, where the signal is just an encoding of the categories.} there is a distribution over categories as well.
If bidders from some categories appear more often than others, then it is more important that the auction does the right thing for these categories.
This is represented in the samples as well: 
the number of samples for each category is different if you collect samples at random from the population. 
The sample complexity results in the \crmodel assume that the same number of samples are available for each category, 
or equivalently, the quantity in question is the minimum number of samples across all categories; this can lead to suboptimal guarantees.

\paragraph{Techniques}

\citet{ColeR14} approached the problem by approximating what is called the `Myerson virtual value function'\footnote{This is a function that depends on the probability distribution of values; see Section \ref{sec:prelims} for a definition.} 
based on samples.
We take quite a different approach: our improvement in the sample complexity upper bounds for the \CRmodel comes from an Empirical Risk Minimization (ERM)  approach. 
The main step is to show how to discretize the value space with a (relatively) small support size, 
while losing only an $\epsilon$ fraction of the revenue for some given $\epsilon \in (0,1)$. 
This then allows us to bound the number of different mechanisms that we need to consider. 
Then we use a concentration inequality to argue that for each mechanism in this class, 
its revenue on the sample closely approximates its actual revenue from the distribution. 
Thus, using the best auction on the sample suffices. 

As an important part of this approach we use a new concentration inequality.\footnote{We discovered this inequality without knowing that 
\citet*{BabichenkoBP/2016/MOR} had earlier discovered a very similar version.}
Consider the revenue of a mechanism with $n$ agents in the \CRmodel. 
Consider a matrix of values, each drawn independently, with the values in column $i$ drawn from the distribution corresponding to agent $i$.
Standard concentration inequalities such as Bernstein inequality tell us that the revenue of a mechanism averaged over the row vectors of this matrix is close to its expected revenue with respect to (w.r.t.) the distribution. 
The concentration inequality we use gives the same conclusion for the revenue of the mechanism 
w.r.t.\ the product of the empirical distributions over each column of the matrix. 
Due to this, instead of finding the optimal auction for a correlated distribution, 
it is sufficient to find the optimal auction for a product distribution, a computationally easy task. 

The \signalsmodel presents the following new conceptually important challenge:
we cannot rely on having seen samples with exactly the same signals 
as the participants in the auction. This requires us to be able to interpolate using samples from different distributions, rather than use samples from the same distribution as in the \CRmodel.\footnote{This introduces additional difficulties such as that convex combinations of regular distributions are not regular, e.g., see \citet{Sivan2013vickrey}. } 
We do this by using just the right number of signals in the sample immediately below the actual signal of an agent, and run a version of the {\em Empirical Myerson} auction: the Myerson auction run on the empirical value distributions given by the samples. The approximation factor is proved via a careful charging argument.
Our bounds depend on the following function of the distribution: function $q$ specifies that an $\epsilon$ fraction of the revenue is contributed by the top $q(\epsilon)$ fraction (in probability mass) of signals. 
For instance, when there is a single agent, we show that $O( \log(\tfrac 1 \epsilon) q(\epsilon)^{-1} \epsilon^{-3})$ samples are sufficient 
and $\Omega( \log(\tfrac 1 {q(\epsilon)} ) \epsilon^{-4}) $ are necessary.\footnote{This is not a contradiction since it turns out that $q(\epsilon)$ is always at most $\epsilon$ and $x\log (1/x)$ is a monotone increasing function for small enough $x$. }

An essential component of interpolating from different distributions is {\em revenue monotonicity}. Consider the optimal auction $M$ for a given product distribution $\Dbold$. If $M$ is run on values that come from a 
higher (component wise first order stochastic dominating) distribution, 
does it get a higher revenue? Surprisingly, as far as we know, this fundamental question about optimal auctions was unanswered prior to this paper; we resolve this in the affirmative.
Another application of this property is in devising mechanisms when only 
certain statistics (such as the medians) of the value distributions are known (\citet{Azar2013optimal}).\footnote{A natural mechanism to consider in such a setting is the optimal mechanism w.r.t.\ the minimal distribution with the given statistics. Given the value of the median $\mu$, the distribution which is 0 or $\mu$ with probability $\tfrac 1 2$ each is first order stochastically dominated by all other distributions with median $\mu$. }   
The proof of this property is a nontrivial amortization argument.%
\footnote{Direct approaches seem to fail here. 
Consider the following example, where increasing one component distribution first increases the revenue, and then increasing another component {\em decreases} the revenue (all w.r.t.\ an optimal mechanism for the original distribution). There are two bidders: bidder $1$ has value distribution $U[0,100]$ and bidder $2$ has distribution $U[0,1]$. The virtual value functions are $\phi_1 = 2v - 100$ and $\phi_2 = 2v - 1$ respectively. If bidder $1$'s distribution is replaced by $U[50,100]$, the revenue strictly increases, since the probability of allocating the item to her (for a high price) increases. In fact bidder $1$ always gets the item, except when she has a value in $[50,50.5]$ and bidder $2$ has a value in $[0.5,1]$ with higher virtual value. When this occurs the item is sold to bidder $2$ for a low price. The probability of this event strictly increases when bidder $2$'s distribution is replaced with a point mass at $1$, and thus the revenue slightly decreases. 
The net effect is still an increase in revenue, as guaranteed by our theorem.
This also shows that revenue monotonicity is not true w.r.t \emph{any} mechanism.}
We design two amortized gain functions whose expectations upper and lower bound an auction's expected revenue respectively.
We then couple the (randomized) revenue of Myerson's optimal auction $M$ on distribution $\Dbold$ and its revenue on a higher distribution based on the quantiles of agents' values.
We show that for any given quantiles, the first amortized gain of the former is less than or equal to the second amortized gain of the latter.
See Section~\ref{sec:rev_monotone_proof} for details.

For lower bounds, we extend the instances used by \citet{HuangMR15}, by packing scaled copies of them among the conditional distributions for different signals.
We show that in order to achieve high revenue overall it is necessary to achieve high revenue in most of the conditionals. Moreover, in order to get good revenue from any conditional (even with exact knowledge of the rest) a reduction to classification shows that $\Omega(\epsilon^{-3})$ samples are necessary; the bound follows.

\paragraph{Extensions}

Our results can be extended to more general single parameter families of auction environments.
We can also get better upper bounds for the case of $n$ agents when we are required to  optimize over 
a smaller class of auctions, such as VCG with reserve prices. 
We discuss these extensions in Section~\ref{sec:extensions}.

\subsection{Other Related Work}

\paragraph{Sample Complexity}
\citet{Elkind2007designing} also considered a learning question very similar to the sample complexity line of work \cite{Dhangwatnotai2014revenue,ColeR14,HuangMR15,MR15}, in the presence of an oracle that returns the expected profit of a given mechanism, for distributions with finite support. 
This line of work is also close in spirit to \citet{Balcan2008reducing}, 
who used PAC learning techniques to design prior-free auctions in a general setting. 
In fact, \citet{HuangMR15} noted that the results of \citet{Balcan2008reducing} could be used to deduce sample complexity bounds in the i.i.d.\ model. 
In addition to the lower bound of $\Omega(\epsilon^{-3})$ for regular distributions, 
\citet{HuangMR15} also showed tight bounds of $\tilde\Theta (\epsilon^{- 3/ 2})$ for 
MHR distributions. 
While all these results are for single parameter environments, 
\citet*{Dughmi2014sampling} showed an exponential lower bound for a multi parameter setting. 

\paragraph{Ad Auctions}
Billions of auctions are run each day for ads,  for search and display advertising. 
Setting reserve prices in these auctions has received a lot of attention \cite{Even2008position,Thompson2013revenue,Roberts2013ranking,Ostrovsky2011reserve}, and has been one of the prime applications of Myerson's theory \cite{Ostrovsky2011reserve}.
Common techniques in practice include using machine learning algorithms, to
map {\em features} of an auction into one of few categories, and 
use the corresponding reserve price.  
Most commonly, the mapping results in a continuous variable, and an extra effort is required to map this into discrete categories.
It would be more natural to use the continuous variable itself, which would correspond to the signal in our model.

\subsection{Future Directions}

Much of the auction theory since Myerson, in particular in the algorithmic game theory literature, is devoted to  cases where the distributions are unknown.
One can ask several `simple versus optimal' questions in the \signalsmodel, such as an extension of the famous result of \citet{BulowKlemperer}: how much does the market size have to increase for the Vickrey auction to outperform the optimal auction.
Another interesting direction is to ask for the design of a prior-independent mechanism in this model, which might be easier than a similar attempt in the prior-free setting by \citet{LeonardiRoughgarden2012} and \citet{Bhattacharya2013NearOptimal}. 
Finally, extending this model to align it even closer with how machine learning algorithms are used, by directly considering high dimensional signal space is a really exciting direction.

\section{Preliminaries}
\label{sec:prelims}

\paragraph{Notations}
For any positive integer $\ell$, let $[\ell] \defeq \{ 1, 2, \dots, \ell \}$ denote the set of positive integers that are at most $\ell$.
For any binary vector $\vec{\alloc} \in \{0, 1\}^n$, we abuse notation and let $\vec{\alloc}$ denote the set of bidders $\{ i \in [n] : \alloc_i = 1 \}$ as well.
Hence, if we write $\vec{\alloc}' = \vec{\alloc} + \{ i \} - \{ j \}$, it means that $\vec{\alloc}'$ is identical to $\vec{\alloc}$, except that its $i$-th coordinate is set to $1$ while its $j$-th coordinate is set to $0$.
Similarly, $\vec{\alloc} \setminus \vec{\alloc}'$ is a vector whose $i$-th coordinate is $1$ if and only if $\alloc_i = 1$ and $\alloc'_i = 0$.

\subsection{Single Parameter Revenue Maximization}

Let there be an auctioneer with a single type of item for sale.
Let there be $n$ bidders who are interested in getting a copy of the item (we simply say getting an item for brevity hereafter).
The value of bidder $i$ for the item is $v_i \ge 0$, which is private information known only to bidder $i$ but not to the other bidders and the auctioneer. 
Let there be a feasible set of allocations $\allocset \subseteq [0, 1]^n$;
for any allocation $\vec{\alloc} \in \allocset$, each coordinate $\alloc_i$ denotes the probability that bidder $i$ gets an item.

This paper considers three families of feasible sets $\allocset$, with each one more general than the previous one.
\begin{itemize}
    \item \textbf{Single-item setting:~} 
    $\allocset = \{ \vec{\alloc} \in [0, 1]^n : \sum_{i=1}^n \alloc_i \le 1 \}$.
    It corresponds to the case when the auctioneer has a single item for sale.
    This is arguably the simplest single parameter problem.
    \item \textbf{Matroid setting:~}
    $\allocset$ is the convex hull of a matroid $\matroid \subseteq \{0, 1\}^n$.
    Here, $\matroid$ is a matroid if it satisfies the following properties.
    (1) It includes the empty allocation, i.e., $\vec{0} = (0, 0, \dots, 0) \in \matroid$ \textbf{(nonemptiness)}.
    In fact, we will further assume that all singleton sets are feasible; it is without loss of generality (wlog) up to the removal of irrelevant bidders.
    (2) It is downward-closed, i.e., for any $\vec{\alloc}' \le \vec{\alloc}$ (coordinate-wise), such that $\vec{\alloc} \in \matroid$, we have $\vec{\alloc}' \in \matroid$ \textbf{(hereditary property)}.
    (3) For any distinct $\vec{\alloc}, \vec{\alloc}' \in \matroid$, such that $\vec{\alloc}$ allocates to more bidders than $\vec{\alloc}'$, there exists $i \in \vec{\alloc} \setminus \vec{\alloc}'$ such that $\vec{\alloc}' + \{ i \} \in \matroid$ \textbf{(augmentation property)}.
    For any problem in the matroid setting, all maximal feasible allocations have the same number of bidders, which is called the rank of the matroid.
    Let $k$ denote the rank of the matroid.
    The single-item constraint, and the more general case when the seller has $k$ copies of the item and hence can allocate to up to $k$ bidders, are both special cases of the matroid setting.
    \item \textbf{Downward-closed setting:~}
    Finally, we drop the third property in the matroid constraint and consider arbitrary feasible sets that are downward-closed.
    Let $k = \max_{\vec{\alloc} \in \allocset} \|\vec{\alloc}\|_1$ denote the maximum number of allocated bidders  in any feasible allocation.
\end{itemize}

All results in this paper apply to the single-item setting.
Most of them further generalize to the matroid setting, or even the downward-closed setting.
Our theorem statements will explain the scope in which the results hold.

From the revelation principle, it is sufficient to consider direct revelation mechanisms: 
a mechanism $M$ is a pair of functions $(\vec{\alloc}, \vec{p})$, both taking as input a reported value profile, often referred to as bids, $\vec{b} = \left( b_1, b_2, \dots, b_n \right)$. 
The allocation function $\vec{\alloc}$ has for each bidder $i$, a component $x_i(\vec{b})$ that represents her probability of winning the item. 
It further ensures feasibility, i.e., for any bids $\vec{b}$, we have $\vec{x}(\vec{b}) \in \allocset$.
Similarly, the payment function $\vec{p}$ has a component $p_i(\vec{b})$ for the payment of each bidder $i$. 
We will omit the bids $\vec{b}$ for brevity when it is clear from the context.
The utility of bidder $i$ is $v_i x_i - p_i$.

A mechanism is \textit{dominant strategy incentive compatible} (DSIC) if the utility of a bidder $i$ is maximized, no matter what the other bidders report, by reporting her true value. A mechanism is \textit{individually rational} (IR) if the utility of each bidder is non-negative, for all valuation profiles.
In general, the bids $\vec{b}$ need not equal the true values $\vvec$.
Nonetheless, throughout the paper we only consider DSIC and IR mechanisms, and therefore omit qualifying mechanisms as DSIC and IR everywhere, and abuse notation and use $\vvec$ to denote the bids as well.

The objective we consider in this paper is maximizing the revenue of a mechanism, when the values are drawn from a given probability distribution. 
We let $\textsc{Rev} \left( M , \Dbold \right)$ denote the expected revenue of mechanism $M$ when $\vvec$ is drawn from distribution $\Dbold$: 
\[
    \textsc{Rev} \left( M , \Dbold \right) \defeq \E_{\vvec \sim \Dbold} \left[ \sum_{i = 1}^n p_i\left( \vvec \right) \right] 
    ~.
\]

The mechanism that maximizes $\rev {\cdot} {\Dbold}$ for a given distribution $\Dbold$, among all DSIC and IR mechanisms, is called the {\em optimal} auction, and its revenue is denoted by $\opt \Dbold$.

\paragraph{Myerson's Optimal Auction:}
\citet{Myerson} characterizes the optimal auction for the case of product distributions $\vec{D} = D_1 \times D_2 \times \dots \times D_n$, which is also the focus of this paper.
The optimal auction relies on a concept called {\em virtual value}.
For simplicity, we explain it under the assumption that the distribution is continuous, as in the original paper of \citet{Myerson}.
Let $F_i$ and $f_i$ denote the cumulative distribution function (cdf) and probability density function (pdf) respectively of distribution $D_i$.
The virtual value of a bidder $i$ when his value is $v$ is:
\[
\phi_i (v) \defeq v - \frac{1 - F_i(v)}{f_i(v)} ~.
\]

The optimal auction is simple to describe under the assumption that each $D_i$ is {\em regular}, which means that the virtual value $\phi_i (v)$ is monotonically non-decreasing.
\citet{Myerson} shows that the expected revenue of any truthful auction equals the expected virtual welfare, i.e., 
\[
\E_{\vec{v} \sim \Dbold} \left[ \sum_{i=1}^n x_i(\vec{v}) \phi_i(v_i) \right] ~.
\]

Hence, Myerson's optimal auction allocates to a subset of bidders such that the above sum of virtual values is maximized;
this is equivalent to allocating to the bidder with the highest non-negative virtual value in the single-item setting (breaking ties arbitrarily). 
Each winner pays the minimum bid that will make her win.
For example, in the single-item setting, the payment by the winner is:
\begin{eqnarray*}
p_i &\defeq& \min \big\{ p:~ \forall ~j\neq i,~ {\phi}_i(p) \geq \phi_j (v_j),  ~\text{and}~ \phi_i(p) \geq 0 \big\} \\
& = & \max _{j\neq i} \big\{ {\phi}_i^{-1} \left( {\phi}_j (v_j) \right) \} 
\cup \big\{ {\phi}_i^{-1} ( 0 ) \big\} ~.
\end{eqnarray*}

When the distributions are not regular, on the other hand, the optimal auctions must ``iron'' the virtual value function $\phi_i$ of each bidder $i$ to get an ironed virtual value, denoted as $\bar{\phi}_i$.
Then, it allocates to a subset of bidders so that the sum of ironed virtual values is maximized (breaking ties arbitrarily); 
each winner pays the minimum bid that will make her win.

For the purpose of understanding most results and proofs in this paper, it is unimportant how the ironed virtual value is defined and how to generalize the definition to distribution with point masses;
we will only use the fact that the optimal auction is an ironed virtual value maximizer characterized by the mapping from values to ironed virtual values for each of the $n$ bidders.
The only place that makes use of the definition of ironed virtual value and the generalization to distribution with point masses is Section~\ref{sec:rev_monotone_proof} and, hence, the formal definition is deferred to that section.

\subsection{\CRmodel}

\citet{ColeR14} introduced a sample complexity approach to the design of optimal auctions.%
\footnote{See, also, \citet{Dhangwatnotai2014revenue} and \citet{Elkind2007designing} for some earlier works with a similar flavor.}
Assume that the distribution $\Dbold$ is unknown to the auctioneer, who instead has to design a mechanism $M$ with access to data only in the form of $m$ i.i.d.\ samples $\vec{s}_1, \vec{s}_2, \dots , \vec{s}_m \in [0, +\infty)^n$ drawn from $\Dbold$. 
The mechanism $M$ is then run on a fresh vector of values (the real input) drawn from $\Dbold$. 

Given such an algorithm that maps collections of samples to mechanisms, the {\em sample complexity} of it against a given class of distributions is specified as a function of $\epsilon \in [0,1]$, the number of bidders $n$, and potentially other attributes of the class of distributions and the set of feasible allocations.
It is defined to be the smallest number of samples $m$ such that for all distributions $\Dbold$ in the given class, it holds that:
\[ 
    \E_{\vec{s}_1, \vec{s}_2, \dots , \vec{s}_m \sim \Dbold} \big[ \rev{M}{\Dbold} \big] 
\geq (1-\epsilon) \opt \Dbold ~.
\] 

Here it is implicit that the mechanism $M$ depends on the samples $\vec{s}_1, \vec{s}_2, \dots, \vec{s}_m$.
Further, even though the above definition states the approximation guarantee only in expectation, essentially all positive results in the literature, including those in this paper, achieve the approximation ratio with high probability.

We will refer to it as the \crmodel in this paper.

\subsection{\Signalsmodel}

In this paper, we introduce a model where the auctioneer has some information correlated with the bidders' values. 
We assume that each bidder $i$ has a value-signal pair $(v_i,\sigma_i)$, drawn from a joint distribution $\Dcal$. 
As before, $v_i$ is her valuation for the item and $\sigma_i \in \left[ 0 , 1 \right]$ is a signal observed by the auctioneer, e.g., her annual income, or her propensity to buy an item based on past behavior, or in case of ad auctions the click-through rate of her ad. 
For a given signal $\sigma$, we denote the conditional distribution over values by $D^{\sigma}$. 
It is convenient to think of $\Dcal$ as a family of $1$-dimensional distributions $\{\dsigma{}: \sigma \in [0,1]\}$, along with the marginal distribution over $\sigma$, given by the cdf $\fdsigma$.  
We assume throughout this paper that $\Dcal$ is such that if $\sigma_i > \sigma_j$, then $\dsigma{i}$ first order stochastically dominates $\dsigma j$ (denoted by $\dsigma{i} \succeq \dsigma{j}$).
The mechanism now has as input the {\em actual} signals, in addition to the reported values.

We extend the sample complexity approach to this model: 
the auctioneer first observes $m$ i.i.d.\ value-signal pairs sampled from the given distribution, and the signals of $n$ bidders whose value-signal pairs are fresh samples (the real input);  
then, based on the $m$ samples, the auctioneer designs a mechanism that is tailored to the realized signals of the $n$ bidders.
The sample complexity will as before depends on $\epsilon \in (0,1)$ (among other parameters), and is defined as the minimum number of samples $m$ required to achieve a $1-\epsilon$ fraction of the optimal revenue $\optt \Dcal $, defined as:  
\[ 
    \optt \Dcal \defeq \E_{\sigma_1,\sigma_2,\ldots,\sigma_n \sim \fdsigma} \big[ \opt{\dsigma 1 \times \dsigma{2} \times \cdots \times \dsigma{n}} \big]
    ~.
\] 

The role of signals in this model is similar to that of contexts in the contextual bandit problem.
Therefore, we will refer to this model as the \signalsmodel.

The \signalsmodel~is incomparable to the \crmodel.
Note that in this model, the bidders are \emph{a priori} identical, but become non-identical once the signals are observed. This allows a mechanism to be a priori symmetric, and distinguish between the bidders based only on the signals. 
Also in this model, the samples may contain entirely different signals than the actual ones.
In other words, it might be that the actual signals are never seen in the samples;
the algorithm must nonetheless be able to learn approximately optimal auctions w.r.t.\ the actual signals from the samples. 
Further, there is a distribution over signals themselves which affects the definition of $\optt \Dcal$. 
Both models however generalize the special case of the \crmodel when bidders have i.i.d.\ value distributions.

We call a distribution $\Dcal$ {\em regular} if $\dsigma{} $ is regular for any $\sigma \in [0, 1]$.

\section{Main Results}
\label{sec:mainresults}

Table~\ref{tab:summary} summarizes the main results in this paper. 
The results come in two flavors: improved sample complexity upper bounds for the \crmodel without signals, as well as lower and upper bounds for the \signalsmodel introduced in this paper.
Finally, we highlight some of the technical lemmas that would be of independent interest. 

\begin{table}
	\centering
	\def\arraystretch{2}
	\renewcommand*{\thefootnote}{\alph{footnote}}
	\begin{tabular}{|l|l|l|l|l|}
		\hline
		{\bf Model} & \makecell[cl]{\bf Number\\ \bf of Agents} & {\bf Distributions} & {\bf Lower Bound} & {\bf Upper Bound} \\
		\hline
		\multirow{5}{*}{\CRmodel}  & \makecell[cl]{$1$} & \makecell[cl]{Regular} & \makecell[cl]{$\Omega\left( \epsilon^{-3}\right)$ \cite{HuangMR15}} & \makecell[cl]{$O \left( {\epsilon^{-3}} \log \frac{1}{\epsilon}\right)$ \cite{Dhangwatnotai2014revenue}} \\
		\cline{2-5}
		& $n$ & Regular & \makecell[cl]{$\Omega(n\epsilon^{-1})$ \cite{ColeR14} \\ $\Omega(\epsilon^{-3})$ \cite{HuangMR15}} & \makecell[cl]{$\tilde{O} \left( {n}{\epsilon^{-4}} \right)$\\ (\prettyref{sec:no_signal})} \\	
		\cline{2-5}
		& $n$ & MHR & \makecell[cl]{$\tilde{\Omega}(n \epsilon^{-1/2})$ \cite{ColeR14} \\ $\Omega(\epsilon^{-3/2})$ \cite{HuangMR15}} & \makecell[cl]{$ \tilde{O}  \left( {n}{\epsilon^{-3}} \right)$\\ (\prettyref{sec:no_signal}, see also~\cite{MR15}\footnotemark[1])} \\
		\cline{2-5}
		& $n$ & Support $\subseteq [1,h]$ &  $\Omega\left( h\epsilon^{-2}\right)$ \cite{HuangMR15} & \makecell[cl]{$\tilde{O} \left( {nh}{\epsilon^{-3}} \right)$\\ (\prettyref{sec:no_signal}, see also \cite{MR15}\footnotemark[1]\footnotemark[2])} \\
		\cline{2-5}
		& $n$ & \makecell[cl]{Support $\subseteq [0,h]$\\ additive error} &  \makecell[cl]{$\Omega\left( h^2\epsilon^{-2}\right)$\\ (folklore)} & \makecell[cl]{$\tilde{O} \left( {nh^3}{\epsilon^{-3}} \right)$\\ (\prettyref{sec:no_signal}, see also~\cite{GonczarowskiN/2017/STOC}\footnotemark[2])} \\	
		\hline
		\multirow{2}{*}{\Signalsmodel} & 1 & \makecell[cl]{Regular\\ $q$-bounded tail} & \makecell[cl]{$ \tilde{\Omega} \big( \epsilon^{-4} \big) $\\ (Section \ref{app:signal_lb})} & \makecell[cl]{$\tilde{O} \left( q(\epsilon)^{-1} \epsilon^{-3} \right)$\\ (\prettyref{sec:signal_ub_single_agent})} \\
		\cline{2-5}
		& $n$ & \makecell[cl]{Regular\\ $q$-bounded tail} &     &  \makecell[cl]{$ \tilde{O}  \left( {n^2} {q(\epsilon)^{-1} \epsilon^{-4} } \right) $ \\ (\prettyref{sec:signal_ub_multi})} \\
		\hline 
		\multicolumn{5}{l}
		{\small \makecell[cl]{
			\footnotemark[1] The bounds from previous work are information theoretic while ours are computationally efficient.\\
			\footnotemark[2] The bounds were not explicit in the papers but to our understanding can be derived from their methods.
		}}
	\end{tabular}
	\caption{Summary of sample complexity upper and lower bounds for various auction settings.}
	\label{tab:summary}
\end{table}

\subsection{Sample Complexity: Regular Distributions}
We show a sample complexity upper bound of $\tilde{O}(n \epsilon^{-4})$ for learning revenue optimal auctions with non-i.i.d.\ regular value distributions.
It significantly reduces the gap between the upper and lower bounds from \citet{ColeR14}, which were respectively $\tilde{O}(n^{10} \epsilon^{-7})$ and 
$\Omega(n\epsilon^{-1})$.
For the case of $n=1$, \citet{HuangMR15} showed a lower bound of $\Omega(\epsilon^{-3})$.

\subsection{Sample Complexity: Beyond Regular Distributions and a Single Item}

The conference version of this paper focused mainly on regular distributions, 
but our techniques are general enough to handle other classes of distributions considered in the literature, including monotone hazard rate (MHR) distributions, distributions with bounded support in $[1, h]$, and distributions with bounded support in $[0, h]$ but aiming for an additive $\epsilon$ approximation instead of a multiplicative one.

The first two classes were studied by \citet{MR15}, who showed information theoretic upper bounds of $\tilde{O}(n h^2 \epsilon^{-3})$ for the $[1, h]$-bounded distributions, and $\tilde{O}(n/\epsilon^3)$ for MHR distributions.
They used the technique of bounding the pseudo-dimension, from statistical learning theory, as compared to our more elementary technique of building an $\epsilon$-net explicitly; 
these two approaches are similar in spirit.  
To our understanding, the upper bound from their approach for the $[1, h]$-bounded case can be improved to $\tilde{O}(n h \epsilon^{-3})$ by using Bernstein instead of Chernoff as the concentration inequality.

A simple modification of our techniques obtains matching upper bounds of $\tilde{O}(n h \epsilon^{-3})$ for $[1, h]$-bounded distributions, and $\tilde{O}(n/\epsilon^3)$ for MHR distributions. 
Importantly, with a concentration inequality for product distributions that we shall explain shortly, our results are constructive and computationally efficient while the ones in \citet{MR15} are not.  

Subsequent to the conference version of this paper, \citet{GonczarowskiN/2017/STOC} proved a sample complexity upper bound of $\text{poly}(h, n, \frac{1}{\epsilon})$ for the $[0, h]$-bounded case.
To the best of our understanding, the precise bound from their approach is $\tilde{O}(n h^3 \epsilon^{-3})$, which is also what the techniques in this paper give.
The main difference lies in the proof of why a discretized Myerson auction can give almost optimal revenue up to an $\epsilon$ additive factor.
Our proof builds on an interpretation of the Myerson auction in the quantile space, while \citet{GonczarowskiN/2017/STOC} gave an explicit rounding of the Myerson auction  based on the virtual values.
\citet{GonczarowskiN/2017/STOC} extended their results beyond single-item auction with a worse yet still polynomial sample complexity upper bound.
We explain in Section~\ref{sec:extensions} why our techniques apply to $[0, h]$-bounded case;
interested readers may apply our techniques to get essentially the same bound (with minor differences in logarithmic factors) as \citet{GonczarowskiN/2017/STOC}.

\citet{RoughgardenS/2016/EC} considered the special case of i.i.d.\ bidders with $[0, h]$-bounded support, and showed a sample complexity of $\tilde{O}(n^2 h^2 \epsilon^{-2})$.
Their bound is incomparable with ours in that ours has a worse dependence in $\epsilon$ while theirs has a worse dependence in $n$.
We stress that our upper bound holds for the more general non-i.i.d.\ case.

We rewrite our technical sections (Section \ref{sec:no_signal} and Section \ref{sec:extensions}) to clarify how our techniques are general enough to be applied to these other classes of distributions.

\paragraph{Beyond the Techniques of This Paper}
Subsequent results by \citet*{GuoHZ:STOC:2019}, \citet{GuoHTZ:COLT:2021},  and \citet{ChenHMY:arXiv:2022} built on the notion of revenue monotonicity introduced in this paper to characterize the optimal sample complexity up to logarithmic factors for all single parameter auction settings.
For the more general multi parameter auctions, \citet{GonczarowskiW:JACM:2021} proved the first polynomial sample complexity upper bounds when bidders have additive or unit demand valuations, which were later further improved by \citet{GuoHTZ:COLT:2021}.

\subsection{Signals Model}
Our sample complexity bounds for the \signalsmodel depend on the following property of the joint distribution of value-signal pairs. 
The property is defined via a function $q$ such that 
ignoring the top $q(\epsilon)$ quantiles in the signal space reduces the revenue by less than an $\epsilon$ fraction. 
\begin{definition}[Small-tail in signal space]
	\label{def:onebuyeralg3}
	Given a function $q:[0,1] \rightarrow [0,1]$, we say that a distribution $\Dcal$ has a $q$-bounded tail in the signal space if:
	\[	   
	\E_{\sigma_1, \ldots, \sigma_n \sim\fdsigma} \big[\mathds{1}\{\forall~ i, \fdsigma(\sigma_i) \le 1-q(\epsilon)\} \cdot \opt{ \dsigma1 \times \cdots \times \dsigma n} \big] \ge
	(1- \epsilon) \optt{\Dcal} ~.
	\]
\end{definition}

Intuitively, the sample complexity needs to have at least a linear dependence on $1/q(\epsilon)$, since we need at least that many samples to even see a single $\sigma$ in the top $q(\epsilon)$ quantile. 
If you do not see any signal from this range, then there is no hope of capturing their contribution to the revenue,
which could be an $\epsilon$ fraction. 
Another important observation is that because of the monotonicity of $\Dcal$ in the signal space, the top $\epsilon$ quantile contributes at least an $\epsilon$ fraction of the revenue. 
This implies that $q(\epsilon)$ is always no larger than $\epsilon$. 

For the case of a single bidder, we show a sample complexity upper bound of $\tilde{O}(q(\epsilon)^{-1} \epsilon^{-3})$ in the \signalsmodel, which is larger than the corresponding optimal bound in the \crmodel by essentially a factor of $q(\epsilon)^{-1}$.
We also complement our positive result with a lower bound of $\Omega(\epsilon^{-4} \log(q(\epsilon)^{-1}))$, which is larger than the corresponding optimal bound in the \crmodel by a factor of $\epsilon^{-1} \log\left(1/ {q(\epsilon)} \right)$.
It is to be expected that the \signalsmodel requires more samples, since we need to cover the entire spectrum of different $\dsigma{}$'s.

With $n$ bidders, we show an upper bound of $\tilde{O}  \left( {n^2} {q(\epsilon)^{-1} \epsilon^{-4} } \right)$.
Note that one sample in the \crmodel is actually $n$ values, whereas 
one sample in the \signalsmodel is only one value. Taking this into account, 
the upper bounds still differ by a factor of $n q(\epsilon)^{-1}$. 
We currently do not have any lower bounds for the \signalsmodel in the multi-bidder case, since in this model, having more agents might actually help! 
Closing the gap between the bounds, or parameterizing the bounds using other reasonable quantities are interesting open questions.

\subsection{Strong Revenue Monotonicity}
We say that a product distribution $\Dbold = D_1 \times D_2 \times \dots \times D_n$ component-wise stochastically dominates another product distribution $\Dboldhat = \Dhat_1 \times \Dhat_2 \times \dots \times \Dhat_n$, and denote it by $\Dbold \succeq \Dboldhat$, if for each $i$, we have that $D_i \succeq \widehat{D}_i$.
We prove the following theorem which is of independent interest:
\begin{theorem}[Strong Revenue Monotonicity]
\label{thm:revenueMonotonicity} 
Consider any single parameter mechanism design problem in the matroid setting.
Let $\Dboldhat$ be a product distribution, and $\Mhat$ be an optimal auction for $\Dboldhat$.
Then, for any product distribution $\Dbold \succeq \Dboldhat$, we have:
\[
\rev{\Mhat}{\Dbold} \ge \rev{\Mhat}{\Dboldhat} = \opt{\Dboldhat} ~.
\]
\end{theorem}

\begin{remark}\label{rem:uniqueMechanism}
All optimal auctions are virtual welfare maximizers, identical up to tie breaking rules. We can make the optimal auction for a product distribution $\Dboldhat$ unique as follows: add an infinitesimal $\epsilon_i$ to all values in the (finite) support of $\widehat{D}_i$, such that $\epsilon_i > \epsilon_{i+1}$. In the new product distribution the revenue is the same, but the only optimal tie breaking rule is lexicographical.
\end{remark}

\citet{HartR/2015/TE} considered a weaker notion of revenue monotonicity, namely, the optimal revenue w.r.t.\ the dominating distribution $\vec{D}$ is weakly higher than that of the dominated distribution $\Dboldhat$.
The weaker version holds for arbitrary single parameter mechanism design problems.
The main insight from \citet{HartR/2015/TE} on this topic is that even the weaker version of revenue monotonicity ceases to hold in the multi parameter setting, even with only one bidder.
Note that the stronger notion in Theorem~\ref{thm:revenueMonotonicity} directly implies the weaker one.

Subsequent work by \citet{ChenHMY:arXiv:2022} showed that the stronger notion no longer holds beyond the matroid setting.
Nevertheless, they recovered an approximate version of strong revenue monotonicity which still suffices for proving sample complexity upper bounds.
Whether an approximate version of weak/strong revenue monotonicity notion holds for multi parameter mechanism design problems is an interesting open question.

\subsection{Concentration Inequality for Product Distributions}
\label{sec:concentration_intro}
We prove the following concentration inequality, which is useful for showing that the revenue of a mechanism on a given product distribution is close to its revenue on the {\em empirical product distribution}, which is defined to be a product distribution whose components are uniform distributions over the corresponding sample values.
Subsequent to the conference version, we found out that a very similar theorem was known due to \citet{BabichenkoBP/2016/MOR}.
The main difference is that we use Bernstein instead of Chernoff in the proof to get a better bound when the expectation of the random variable is much smaller than the maximum value in its support.
We include the proof in Section~\ref{sec:prodconcentrate} for completeness. 
\begin{theorem}
	\label{thm:prodconcentrate}
    Let $V$ be an arbitrary measurable set, $f$ be an arbitrary function from $V^n$ to $[0, 1]$.
    Let $D_1, D_2, \dots, D_n$ be $n$ distributions over $V$, and $\vec{D} = D_1 \times D_2 \times \dots \times D_n$.
	Suppose $p = \E_{\vec{x} \sim \Dbold} [ f(\vec{x}) ]$.
    Let $x_{i1}, \dots, x_{im}$ be $m$ i.i.d.\ samples from $D_i$.
    Let $E_i$ be the uniform distribution over $\{ x_{i1}, \dots, x_{im} \}$ for each $i \in [n]$, and $\vec{E} = E_1 \times E_2 \times \dots \times E_n$.
	Then, we have:
	\[
	\textstyle
	\Pr_{x_{ij} \sim D_i, \forall i\in [n], \forall j \in [m]} \Big[ ~ \big| \E_{\vec{y} \sim \vec{E}} [ f(\vec{y}) ] - p \big| \ge 2 \delta ~ \Big] \le 2 e^{- \frac{2 m \delta^2}{4p + \delta} - \ln(\delta)} ~.
	\]
	\end{theorem}

\subsection{Organization of the Proofs}
We begin with the proofs of the technical lemmas, i.e., Theorem~\ref{thm:revenueMonotonicity} and Theorem~\ref{thm:prodconcentrate}, in Section~\ref{sec:rev_monotone_proof} and Section~\ref{sec:prodconcentrate} respectively.
Eager readers may skip these two sections without any problems understanding the sample complexity analysis in the rest of the paper.
The sample complexity bounds for single item auctions in the \crmodel are in Section \ref{sec:no_signal}, 
and the bounds for the \signalsmodel are in Section \ref{sec:signal}. 
The extensions to more general single parameter environments are in Section \ref{sec:extensions}.

\section{Proof of Strong Revenue Monotonicity}
\label{sec:rev_monotone_proof}

This section proves Theorem~\ref{thm:revenueMonotonicity}.
The proof in this version of the paper is different from the one in the original conference version, which has a subtle bug.
The new proof further allows us to handle general distributions (instead of those with discrete supports as in the original version) and the more general matroid constraint (instead of single-item auction as in the original version).

\subsection{Ironed Virtual Value}

We start with the formal definition of ironed virtual values, given the value distribution $D$ of a bidder. 
First, we define the \emph{quantile} (w.r.t.\ $D$) of a value $v$ in the support of $D$ to be:
\[
    q(v) = \Pr_{v' \sim D} \big[ v' > v \big]
    ~.
\]
For continuous distributions, we have $q_D(v) = 1 - F(v)$ where $F$ is the cdf of distribution $D$.
It induces an inverse mapping from quantiles to values such that given any quantile $0 \le q < 1$:
\[
    v(q) = \inf \{ v : q(v) \le q \}
    ~.
\]

Next, we define \emph{revenue curves}.
Let us start with the simpler case when the distribution is continuous.
The revenue curve, in the quantile space, is a function:
\[
    R(q) = q \cdot v(q)
    ~.
\]
That is, $R(q)$ equals the revenue obtained by setting a take-it-or-leave-it price such that the sale probability is equal to $q$.
Given any point $\big( q, R(q) \big)$ on the revenue curve, the slope of the line connecting the point with the origin is equal to the value $v(q)$ by definition.
Further, it is known that the slop of the tangent line equals the corresponding virtual value when the distribution is continuous.
See Figure~\ref{fig:revenue-curve} for an illustrative picture of the revenue curve.

Defining the revenue curve and virtual values for distributions with point masses is more subtle.
To begin with, plot points $(p, p \cdot v)$ for any value $v$ in the support and the corresponding sale probability $p = \Pr_{v' \sim D} \big[ v' \ge v \big]$.
Note that $p = q(v)$ if $v$ is not a point mass.
However, there will be discontinuities in the plot that correspond to the point masses.
Concretely, for any point mass $v$ with $q = \Pr_{v' \sim D} \big[ v' > v \big]$ and $p = \Pr_{v' \sim D} \big[ v' \ge v \big]$, the curve is currently undefined on the semi-open interval $(q, p]$;
extend the definition of the revenue curve by letting it be the straight segment connecting $\big( q, q \cdot v(q) \big)$ and $\big( p, p \cdot v \big)$ on this semi-open interval.
Note that $R(q)$ can be interpreted as the expected revenue obtained by setting a take-it-or-leave-it price that randomizes between $v$ and $v(q)$ appropriately such that the sale probability is equal to $q$.
Then, the virtual value of any value $v$ with quantile $q$ is the right derivative of the revenue curve at $q$.

For example, consider a discrete distribution with support $v_1 > v_2 > \dots > v_k$ with probability  masses  $p_1, p_2, \dots, p_k > 0$.
Let $P_i = \sum_{j=1}^i p_j$ for any $1 \le i \le k$.
Then, first plot the points $\big( P_i, P_i \cdot v_i \big)$, $1 \le i \le k$; let $\big( P_0, P_0 \cdot v_0 \big) = \big( 0, 0 \big)$ for notational convenience.
Note that $0 = P_0 < P_1 < \dots < P_k = 1$.
Connecting neighboring points, the revenue curve $R(q)$ is defined to be the following when the quantile $q$ is between $P_{i-1}$ and $P_i$:
\[
    R(q) = \frac{q - P_{i-1}}{p_i} \cdot P_i \cdot v_i + \frac{P_i - q}{p_i} \cdot P_{i-1} \cdot v_{i-1}
    ~.
\]
The virtual value of $v_i$ is equal to:
\[
    \frac{P_i \cdot v_i - P_{i-1} v_{i-1}}{p_i}
    ~.
\]
This is consistent with the definition of virtual values for discrete distributions, e.g., by \citet{Elkind2007designing}.

Following the notation in auction theory, let $\phi(v)$ denote the virtual value of a given value $v$.
The next lemma follows by the above definition of virtual values.

\begin{lemma}
    \label{lem:virtual_value}
    For any quantile $q'$, we have:
    \[
        \int_0^{q'} \phi\big(v(q)\big) dq = R(q')
        ~.
    \]
\end{lemma}

Finally, let $\bar{R}$ be the convex hull of $R$.
$\bar{R}(q)$ can be viewed as the maximum revenue achievable via a randomized take-it-or-leave-it price, subject to having an overall sale-probability of $q$.
Then, the \emph{ironed virtual value} of the value with quantile $q$ is equal to the slope of the tangent line w.r.t.\ $\bar{R}$.
See Figure~\ref{fig:revenue-curve-convex-hull} for an illustrative picture.

\begin{figure}
    \centering
    \begin{subfigure}{.48\textwidth}
        \includegraphics[width=\textwidth]{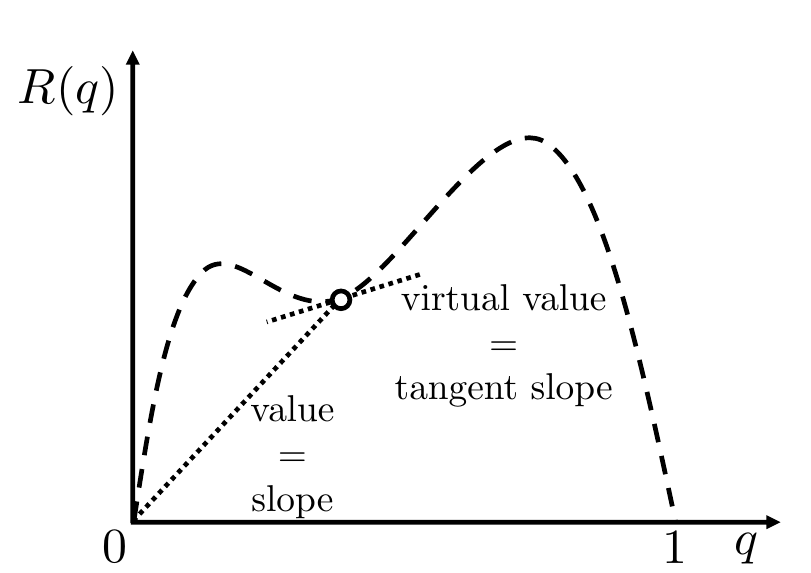}
        \caption{}
        \label{fig:revenue-curve}
    \end{subfigure}
    \begin{subfigure}{.48\textwidth}
        \includegraphics[width=\textwidth]{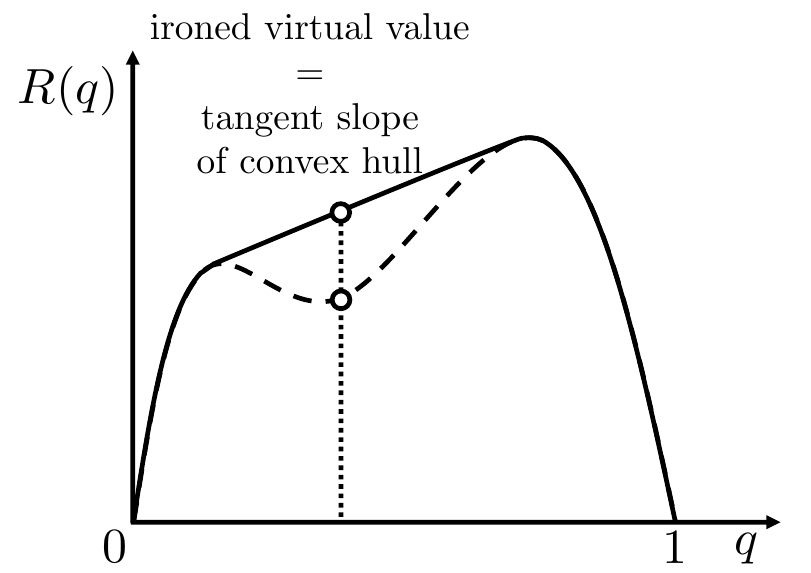}
        \caption{}
        \label{fig:revenue-curve-convex-hull}
    \end{subfigure}
    \caption{Illustrative pictures of the revenue curve (dashed curve) and its convex hull (solid curve) and how they relate to the value, virtual value, and ironed virtual values.}
\end{figure}

\subsection{Simplifications}
We start with some simplifying treatments that are wlog.
Recall that our goal is to show that an optimum auction for $\Dboldhat$ 
gets at least as much revenue when run on $\Dbold$. 
Let $\Mhat$ be the optimal Myerson auction for $\Dboldhat$. 
Let $\phi_i(v_i)$ and $\hat{\phi}_i(v_i)$ denote the virtual values that correspond to a value $v_i$ w.r.t.\ distribution $D_i$ and $\Dhat_i$, respectively, for any bidder $i \in [n]$.
Given any value profile, $\Mhat$ picks the bidder with the highest \emph{ironed} virtual value w.r.t.\ $\Dboldhat$ as the winner. 
We assume that $\Mhat$ breaks ties deterministically by the lexicographical order of bidders.
We write $\hat{\phi}_i(v_i) > \hat{\phi}_j(v_j)$ to mean that either the virtual value of $i$ is strictly larger than that of $j$, or they are equal and $i < j$.

For each bidder $i$, let $V_i$ denote the subset of values whose corresponding points on the revenue curve are on the convex hull.
We may assume wlog that both $\Dbold$ and $\Dboldhat$ have support $\vec{V} = V_1 \times V_2 \times \dots \times V_n$; the reason is as follows.
The {ironed} virtual value of bidder $i$ is the derivative of the convex hull of the revenue curve of $\Dhat_i$, in the quantile space.
We can interpret $\Mhat$ as rounding the value of each bidder $i$ down to the closest one in $V_i$, and then using the virtual values w.r.t.\ the distributions of the rounded values to choose the winner.
We do the same for values in the support of $\Dbold$ but not $\Dboldhat$.  
With this treatment, $\Dboldhat$ is regular in the sense that for every bidder $i$, every value in the support of $\Dhat_i$ is on the convex hull of its revenue curve and, thus, $\hat{\phi}_i$ is also the ironed virtual value.
Note that, however, $\Dbold$ may not be regular.

\subsection{Failing Strategies}

To build up the intuition and to motivate our approach, we continue with some brief discussions on how two seemingly natural strategies fail, even in the case of a single item.

\paragraph{Failing Strategy 1: Direct Comparison of Payments.}
Suppose we realize the values of bidders as follows.
First, draw a quantile profile $\vec{q}$ such that each $q_i$ is i.i.d.\ from the uniform distribution over $[0, 1)$.
Then, for every bidder $i \in [n]$, let $v_i(q_i)$ denote the value that corresponds to quantile $q_i$ w.r.t.\ $D_i$;
define $\hat{v}_i(q_i)$ similarly w.r.t.\ $\Dhat_i$.
Let $\vec{v}(\vec{q})$ and $\vec{\hat{v}}(\vec{q})$ be the corresponding value profiles.
Then, the expected revenues of $\Mhat$ w.r.t.\ $\Dbold$ and $\Dboldhat$ can be written as:
\[
\rev{\Mhat}{\Dbold} = \int \sum_{i = 1}^n \hat{p}_i \big( \vec{v}(\vec{q}) \big) d\vec{q}
~,
\]
and
\[
\rev{\Mhat}{\Dboldhat} = \int \sum_{i = 1}^n \hat{p}_i \big( \vec{\hat{v}}(\vec{q}) \big) d\vec{q}
~. 
\]

A naïve strategy is to show that $\sum_{i = 1}^n \hat{p}_i \big( \vec{v}(\vec{q}) \big) \ge \sum_{i = 1}^n \hat{p}_i \big( \vec{\hat{v}}(\vec{q}) \big)$ for any quantile profile $\vec{q}$.
This inequality does not hold in general because it is known that the optimal mechanism may involve price discrimination.
It is possible that, for example, as bidder $1$'s value gets larger moving from $\Dboldhat$ to $\Dbold$, she may become the winner even though her payment is still lower than what the original winner pays.

\paragraph{Failing Strategy 2: Amortization via Virtual Values.}

An alternative approach is to amortize the gains in the quantile space using virtual values.
Note that $\hat{\phi}_i$ is also the ironed virtual value 
due to the aforementioned treatments, i.e., it is monotonically non-decreasing in $\hat{v}_i$.
Note that similar claims do not hold for $\phi_i$ in general.
Let $\vec{\hat{x}}(\vec{v})$ denote the set of winners that $\Mhat$ picks when the value profiles is $\vec{v}$.
By Myerson's characterization~\cite{Myerson}, the expected revenues can be written as follows:
\[
    \rev{\Mhat}{\Dbold} = \int \sum_{i \in \vec{\hat{x}} ( \vec{v}(\vec{q}) )} \phi_i\big( v_i(q_i) \big) ~ d\vec{q}
    ~,
\]
and
\[
    \rev{\Mhat}{\Dboldhat} = \int \sum_{i \in \vec{\hat{x}} ( \vec{\hat{v}}(\vec{q}) )} \hat{\phi}_i \big( \hat{v}_i(q_i) \big) ~ d\vec{q}
    ~.
\]

Then, one may try to show that the amortized gain is weakly higher in $\Dbold$ than in $\Dboldhat$ for any quantile profile $\vec{q}$, i.e.:
\begin{equation}
\label{eqn:failing_strategy}
\sum_{i \in \vec{\hat{x}} ( \vec{v}(\vec{q}) )} \phi_i\big( v_i(q_i) \big) 
\ge 
\sum_{i \in \vec{\hat{x}} ( \vec{\hat{v}}(\vec{q}) )} \hat{\phi}_i \big( \hat{v}_i(q_i) \big)
~.
\end{equation}

However, this inequality does not hold in general either, even in the special case of selling a single item, i.e., when $\vec{\hat{x}}(\vec{v})$ is either empty or a singleton set for any value profile $\vec{v}$.
If $\Mhat$ picks the same winner for $\vec{v}(\vec{q})$ and $\vec{\hat{v}}(\vec{q})$, inequality \eqref{eqn:failing_strategy} may not hold because stochastic dominance does not imply dominance in terms of virtual values, in the quantile space, that is: 
\[
    \phi_i \big( v_i(q_i) \big) \ge \hat{\phi}_i \big( \hat{v}_i(q_i) \big)
\] 
may not hold in general.

If $\Mhat$ picks different winners, e.g., it picks bidder $i$ when the value profile is $\vec{v}(\vec{q})$, and picks bidder $j$ when the value profile is $\vec{\hat{v}}(\vec{q})$.
Then, inequality \eqref{eqn:failing_strategy} becomes:
\[
    \phi_i\big( v_i(q_i) \big) \ge \hat{\phi}_j \big( \hat{v}_j(q_j) \big)
    ~.
\]
However, the definition of $\Mhat$ indicates a different inequality that:
\[
    \hat{\phi}_i\big(v_i(q_i)\big) \ge \hat{\phi}_j\big( v_j(q_j) \big)
    ~,
\]
since it picks the winner with the highest virtual value w.r.t.\ $\Dboldhat$, when the value profile is $\vec{v}(\vec{q})$.
Again, inequality \eqref{eqn:failing_strategy} may not hold in general due to the subtle mismatches on both sides of the two inequalities.

\subsection{Proof of Theorem~\ref{thm:revenueMonotonicity}}

\subsubsection{Our Amortization}
%
We will design a more subtle amortization.
Specifically, we will design non-negative amortized gain functions $\vec{g}$ and $\vec{\hat{g}}$ for distributions $\Dbold$ and $\Dboldhat$ respectively, with the following properties.
%
\begin{enumerate}[label=(\alph*)]
	\item
	\label{property:amortization_revenue_bounds}
	The expected revenue is bounded by the expectation amortized gain for both distributions, in the respective directions:
    \begin{equation}
        \label{eqn:amortization_revenue_D}
        \rev{\Mhat}{\Dbold} \ge \int \sum_{i \in \vec{\hat{x}} ( \vec{v}(\vec{q}) )} g_i(\vec{q}) ~ d\vec{q}
        ~.
    \end{equation}
	and
    \begin{equation}
        \label{eqn:amortization_revenue_Dhat}
        \rev{\Mhat}{\Dboldhat} \le \int \sum_{i \in \vec{\hat{x}} ( \vec{\hat{v}}(\vec{q}) )} \hat{g}_i(\vec{q}) ~ d\vec{q} 
        ~.
    \end{equation}		
	\item
	\label{property:amortization_same_winner}
    For any quantile profile, and any common winner picked by the mechanism $\Mhat$ for both distributions, the amortized gain w.r.t.\ $\Dbold$ is weakly higher than that w.r.t.\ $\Dboldhat$.
    That is, for any bidder $i \in \vec{\hat{x}} ( \vec{v}(\vec{q}) ) \cap \vec{\hat{x}} ( \vec{\hat{v}}(\vec{q}) )$, we have:
    \begin{equation}
        \label{eqn:amortization_common_winner}
        g_i(\vec{q}) \ge \hat{g}_i(\vec{q})
        ~.
    \end{equation}
	\item
	\label{property:amortization_different_winners}
    For any bidder $i \in \vec{\hat{x}}\big( \vec{v}(\vec{q}) \big) \setminus \vec{\hat{x}} \big( \vec{\hat{v}}(\vec{q}) \big)$ who is a winner only w.r.t.\ $\Dbold$, and any bidder $j \in \vec{\hat{x}}\big(\vec{\hat{v}}(\vec{q}) \big) \setminus \vec{\hat{x}}\big(\vec{v}(\vec{q})\big)$ who is a winner only w.r.t.\ $\Dboldhat$, such that replacing bidder $i$ with bidder $j$ in the set of winners w.r.t.\ $\Dbold$ is feasible, i.e., $\vec{\hat{x}}\big( \vec{v}(\vec{q}) \big) - \{i\} + \{j\} \in \allocset$, we have that 
	\[
	g_i(\vec{q}) \ge \hat{g}_j(\vec{q})
	~.
	\]
	We do not know how to do this directly; instead we do this in two steps. 
    For any bidder $i \in \vec{\hat{x}} \big( \vec{v}(\vec{q}) \big) \setminus \vec{\hat{x}} \big( \vec{\hat{v}}(\vec{q}) \big)$ who is a winner only w.r.t.\ $\Dbold$, and any non-winner $j \notin \vec{\hat{x}}\big(\vec{v}(\vec{q})\big)$ who could have been swapped with bidder $i$ while maintaining feasibility, i.e., $\vec{\hat{x}}\big(\vec{v}(\vec{q})\big) - \{i\} + \{j\} \in \allocset$, we have:
    \begin{equation}
        \label{eqn:amortization_D_winner}
        g_i(\vec{q}) \ge \hat{\phi}_j \big( v_j(q_j) \big)
        ~.
    \end{equation}
    For any bidder $j \in \vec{\hat{x}} \big( \vec{\hat{v}}(\vec{q}) \big) \setminus \vec{\hat{x}} \big( \vec{v}(\vec{q}) \big)$ who is a winner only w.r.t.\ $\Dboldhat$, we have:
    \begin{equation}
        \label{eqn:amortization_Dhat_winner}
        \hat{g}_j(\vec{q}) \le \hat{\phi}_j \big( v_j(q_j) \big)
        ~.
    \end{equation}

\end{enumerate}

\paragraph{Amortized Analysis.}
Next, we present the proof of the theorem assuming the existence of such amortized gain functions $\vec{g}$ and $\vec{\hat{g}}$, deferring the construction of these functions to the next subsection.
By property~\ref{property:amortization_revenue_bounds}, to prove that $\rev{\Mhat}{\Dbold} \ge \rev{\Mhat}{\Dboldhat}$, it suffices to show:
\[
    \int \sum_{i \in \vec{\hat{x}} ( \vec{v}(\vec{q}) )} g_i(\vec{q}) ~ d\vec{q}
    \ge
    \int \sum_{i \in \vec{\hat{x}} ( \vec{\hat{v}}(\vec{q}) )} \hat{g}_i(\vec{q}) ~ d\vec{q} 
    ~.
\]

In fact, we will prove a stronger claim that the inequality holds for any given quantile profile $\vec{q}$, rather than only in expectation over the randomness of $\vec{q}$. 
That is, for any $\vec{q}$:
\begin{equation}
    \label{eqn:amortization_fixed_quantile}
    \sum_{i \in \vec{\hat{x}} ( \vec{v}(\vec{q}) )} g_i(\vec{q}) ~ d\vec{q}
    \ge
    \sum_{i \in \vec{\hat{x}} ( \vec{\hat{v}}(\vec{q}) )} \hat{g}_i(\vec{q}) ~ d\vec{q} 
    ~.
\end{equation}

We use the well known exchange property of matroids to start from the 
winners in $\Dbold$ and move towards the winners in $\Dboldhat$. 
\begin{lemma}[Exchange property, e.g., \citet{Schrijver/2003/Springer}, Corollary~39.12a]
	\label{lem:matroid_exchange}
	For any $\vec{x}, \vec{x'} \in \allocset$ of equal size, i.e., $\big| \vec{x} \big| = \big| \vec{x'} \big|$, there is a one-to-one mapping $\sigma$ from $\vec{x} \setminus \vec{x'}$ and $\vec{x'} \setminus \vec{x}$ such that for any $i \in \vec{x} \setminus \vec{x'}$, we have:
	\[
	\vec{x} - \big\{ i \big\} + \big\{ \sigma(i) \big\} \in \allocset
	~.
	\]
\end{lemma}

We need to show one more thing, that the number of winners in $\Dbold$ is at least the number of winners in $\Dboldhat$.
This is fairly easy, and follows from the 
monotonicity of ironed virtual values, and the exchange property.

\begin{lemma}
    \label{lem:amortization_size}
    For any quantile profile $\vec{q}$, we have:
    \[
        \big| \vec{\hat{x}} ( \vec{v}(\vec{q}) ) \big|
        \ge
        \big| \vec{\hat{x}} ( \vec{\hat{v}}(\vec{q}) ) \big|
        ~.
    \]
\end{lemma}

\begin{proof}
    Suppose not.
    Then, by the augmentation property of matroids, we get that there exists a bidder $j \in \vec{\hat{x}} (\vec{\hat{v}}(\vec{q}) ) \setminus \vec{\hat{x}} ( \vec{v}(\vec{q}) )$ who is a winner only w.r.t.\ $\Dboldhat$ such that mechanism $\Mhat$ could have added bidder $j$ to the set of winners w.r.t.\ $\Dbold$, i.e.:
    \[
            \vec{\hat{x}} \big( \vec{v}(\vec{q}) \big) + \big\{ j \big\} \in \allocset
        ~.
    \]
   
    Recall that $\Mhat$ is the optimal auction w.r.t.\ $\Dboldhat$, which picks a feasible set of winners to allocate to to maximize the sum of their ironed virtual values.
    Here, and in the rest of the proof, ironed virtual values are evaluated according to distributions $\Dboldhat$.
    The decision of not including bidder $j$ in the set of winners indicates that her ironed virtual value, for value $v_j(q_j)$, is negative.

    However, by that $\Dbold \succeq \Dboldhat$, we have $v_j(q_j) \ge \hat{v}_j(q_j)$.
    Monotonicity of ironed virtual values implies that the ironed virtual value of $v_j(q_j)$ is at least  that of $\hat{v}_j(q_j)$, which must be non-negative since mechanism $\Mhat$ chooses bidder $j$ when the value profile is $\vec{\hat{v}}(\vec{q})$.
    We have a contradiction.
\end{proof}

%

By Lemma~\ref{lem:amortization_size}, there are weakly more winners in $\vec{\hat{x}}(\vec{v}(\vec{q}))$ than in $\vec{\hat{x}}(\vec{\hat{v}}(\vec{q}))$.
Let $\vec{x'} = \vec{\hat{x}}(\vec{\hat{v}}(\vec{q}))$.
Let $\vec{x}$ be a subset of $\vec{\hat{x}}(\vec{v}(\vec{q}))$, obtained by removing some bidders in $\vec{\hat{x}}(\vec{v}(\vec{q})) \setminus \vec{\hat{x}}(\vec{\hat{v}}(\vec{q}))$ so that it has the same size as $\vec{x'}$.
Let $\sigma$ be the mapping from $\vec{x} \setminus \vec{x'}$ to $\vec{x'} \setminus \vec{x}$ in Lemma~\ref{lem:matroid_exchange}.
Then, the next lemma follows as a corollary of property~\ref{property:amortization_different_winners}.

\begin{lemma}
    \label{lem:amortization_exchange}
    For any $i \in \vec{x} \setminus \vec{x'}$, we have:
    \[
        \hat{g}_i(\vec{q}) \ge \hat{g}_{\sigma(i)}(\vec{q})
        ~.
    \]
\end{lemma}

%
%
Then, Eqn.~\eqref{eqn:amortization_fixed_quantile} follows from a sequence of inequalities below:
\begin{align*}
    \sum_{i \in \vec{\hat{x}} ( \vec{v}(\vec{q}) )} g_i(\vec{q}) 
    &
    \ge \sum_{i \in \vec{x}} g_i(\vec{q})
    && \text{(non-negativity of $\vec{g}$)} \\
    & 
    = \sum_{i \in \vec{x} \cap \vec{x'}} g_i(\vec{q}) + \sum_{i \in \vec{x} \setminus \vec{x'}} g_i(\vec{q}) \\
    & 
    = \sum_{i \in \vec{x} \cap \vec{x'}} \hat{g}_i(\vec{q}) + \sum_{i \in \vec{x} \setminus \vec{x'}} g_i(\vec{q}) 
    && 
    \text{(property~\ref{property:amortization_same_winner})} \\
    & 
    \ge \sum_{i \in \vec{x} \cap \vec{x'}} \hat{g}_i(\vec{q}) + \sum_{i \in \vec{x} \setminus \vec{x'}} \hat{g}_{\sigma(i)}(\vec{q})
    && 
    \text{(Lemma~\ref{lem:amortization_exchange})} \\
    &
    = \sum_{i \in \vec{x} \cap \vec{x'}} \hat{g}_i(\vec{q}) + \sum_{i \in \vec{x'} \setminus \vec{x}} \hat{g}_i(\vec{q}) 
    &&
    \text{($\sigma$ is a one-to-one mapping)} \\
    &
    = \sum_{i \in \vec{x'}} \hat{g}_i(\vec{q}) \\
    &
    = \sum_{i \in \vec{\hat{x}}(\vec{\hat{v}}(\vec{q}))} \hat{g}_i(\vec{q})
    ~.
    &&
    \text{(definition of $\vec{x'}$)}
\end{align*}

\subsubsection{Construction of the Amortized Gain Functions}

It remains to show that there are such non-negative amortized gain functions $\vec{g}$ and $\vec{\hat{g}}$ with the aforementioned properties.
We start with some notations and simplifications.

The first simplification relies on the notion of threshold quantile above which a bidder becomes a winner.
For any bidder $i \in [n]$, and any quantile profile $\vec{q}_{-i}$ of the bidders other than $i$, let $\theta_i(\vec{q}_{-i})$ denote the threshold quantile below which $i$ is the winner when $\Mhat$ is run on $\Dbold$. 
That is:
\[
    \theta_i(\vec{q}_{-i}) = \sup \Big\{ ~ 0 \le q < 1 : i \in \vec{\hat{x}} \big( v_i(q), \bm{v}_{-i}(q_{-i}) \big) ~ \Big\}
    ~.
\]

Define $\hat{\theta}_i(\vec{q}_{-i})$ similarly as the threshold quantile when $\Mhat$ is run on $\Dboldhat$. 
That is:
\[
    \hat{\theta}_i(\vec{q}_{-i}) = \sup \Big\{ 0 \le q < 1 : i \in \vec{\hat{x}} \big( \hat{v}_i(q), \bm{\hat{v}}_{-i}(q_{-i}) \big) \Big\}
    ~.
\]

Then, we may wlog let $g_i(q_i, \vec{q}_{-i}) = 0$ for any quantile $q_i > \theta_i(\vec{q}_{-i})$ of bidder $i$ that is below the threshold, since $x_i \big( \vec{v}(q_i, \vec{q}_{-i}) \big) = 0$ and thus, they are irrelevant to the properties that need to be satisfied.
Similarly, we may wlog let $\hat{g}_i(q_i, \vec{q}_{-i}) = 0$ for any $q_i > \hat{\theta}_i(\vec{q}_{-i})$.
It remains to design $g_i$ and $\hat{g}_i$ for quantile $q_i$ below the corresponding thresholds.

Property~\ref{property:amortization_revenue_bounds} is where the amortization happens: 
in particular, we only need to amortize over $i$'s own randomization, 
for any fixed profile of all other bidders. We formalize that below. 
We begin by rewriting the expected revenue of running $\Mhat$ on $\Dbold$ and $\Dboldhat$ respectively using the virtual values \emph{à la} \citet{Myerson}:
\[
    \rev{\Mhat}{\Dbold} = \sum_{i = 1}^n \int_{[0, 1]^{n-1}} \int_0^{\theta_i(\vec{q}_{-i})} \phi_i(v_i(q_i)) ~ d q_i ~ d \vec{q}_{-i}
	~,
\]
%
and
%
\[
    \rev{\Mhat}{\Dboldhat} = \sum_{i = 1}^n \int_{[0, 1]^{n-1}} \int_0^{\hat{\theta}_i(\vec{q}_{-i})} \hat{\phi}_i(\hat{v}_i(q_i)) ~ d q_i ~ d \vec{q}_{-i}
~.
\]

Therefore, Inequalities~\eqref{eqn:amortization_revenue_D} and \eqref{eqn:amortization_revenue_Dhat} in property~\ref{property:amortization_revenue_bounds} become:
\[
    \sum_{i = 1}^n \int_{[0, 1]^{n-1}} \int_0^{\theta_i(\vec{q}_{-i})} \phi_i(v_i(q_i)) ~ d q_i ~ d \vec{q}_{-i} 
    \ge
    \sum_{i = 1}^n \int_{[0, 1]^{n-1}} \int_0^{\theta_i(\vec{q}_{-i})}  g_i(\vec{q}) ~ d q_i ~ d \vec{q}_{-i}  
    ~,
\]
and
\[
    \sum_{i = 1}^n \int_{[0, 1]^{n-1}} \int_0^{\hat{\theta}_i(\vec{q}_{-i})} \hat{\phi}_i(\hat{v}_i(q_i)) ~ d q_i ~ d \vec{q}_{-i}
    \le
    \sum_{i = 1}^n \int_{[0, 1]^{n-1}} \int_0^{\hat{\theta}_i(\vec{q}_{-i})} \hat{g}_i(\vec{q}) ~ d q_i ~ d \vec{q}_{-i}
    ~.
\]

To satisfy property~\ref{property:amortization_revenue_bounds}, it suffices to ensure that for any bidder $i$ and any quantile profile $\vec{q}_{-i}$ of the other bidders, we have:
\begin{equation}
\label{eqn:amortization_bound_D}
\int_0^{\theta_i(\vec{q}_{-i})} \phi_i(v_i(q_i)) ~ d q_i 
\ge
\int_0^{\theta_i(\vec{q}_{-i})}  g_i(\vec{q}) ~ d q_i 
~,
\end{equation}
and
\begin{equation}
\label{eqn:amortization_bound_Dhat}
\int_0^{\hat{\theta}_i(\vec{q}_{-i})} \hat{\phi}_i(\hat{v}_i(q_i)) ~ d q_i
\le
\int_0^{\hat{\theta}_i(\vec{q}_{-i})} \hat{g}_i(\vec{q}) ~ d q_i
~.
\end{equation}

\paragraph{Construction of $\vec{\hat{g}}$.}
The properties concerning only $\vec{\hat{g}}$, i.e., Eqn.~\eqref{eqn:amortization_Dhat_winner} and Eqn.~\eqref{eqn:amortization_bound_Dhat}, suggest an obvious choice of the amortized gain function $\vec{\hat{g}}$ for distribution $\Dboldhat$ when the quantile is above the threshold, namely, letting it equal the virtual value w.r.t.\ $\Dboldhat$.
Concretely, we let:
\[
    \hat{g}_i(\vec{q}) = 
    \begin{cases}
        \hat{\phi}_i(\hat{v}_i(q_i)) & \text{if $q_i \le \hat{\theta}_i(\vec{q}_{-i})$} \\
        0 & \text{if $q_i > \hat{\theta}_i(\vec{q}_{-i})$}
    \end{cases}
\]

It is easy to verify that it satisfies Eqn.~\eqref{eqn:amortization_bound_Dhat} with equality.
Further, Eqn.~\eqref{eqn:amortization_Dhat_winner} becomes:
\[
    \hat{g}_i(\vec{q}) = \hat{\phi}_i(\hat{v}_i(q_i)) \le \hat{\phi}_i(v_i(q_i))
    ~,
\]
which holds due to the monotonicity of $\hat{\phi}_i$ and that $\hat{v}_i(q_i) \le v_i(q_i)$.
The latter follows by $\Dbold \succeq \Dboldhat$.


\paragraph{Construction of $\vec{g}$.}
%
This part is the crux of the proof and needs to be further divided into two cases.
Concretely, let us fix any bidder $i$, and any quantile profile $\vec{q}_{-i}$ of the bidders other than $i$.
We will construct the amortized gain function $g_i$ for quantile profiles of the form $(q_i, \vec{q}_{-i})$ for any $0 \le q_i < \theta_i(\vec{q}_{-i})$ to ensure the relevant inequalities in the properties.
In the following discussions, let $R_i(q_i)$ be the revenue curve, in the quantile space, w.r.t.\ distribution $D_i$.
Define $\widehat{R}_i$ similarly w.r.t.\ distribution $\Dhat_i$.
The cases depend on the order of the threshold quantiles for bidder $i$ in the two distributions given $\vec{q}_{-i}$.

\paragraph{Case 1: $\theta_i(\vec{q}_{-i}) \le \hat{\theta}_i(\vec{q}_{-i})$.}
This is the easy in which we just let $g$ be the virtual value.
For any $0 \le q_i < \theta_i(\vec{q}_i)$, let:
\[
    g_i(q_i, \vec{q}_{-i}) = \hat{\phi}_i\big(\hat{v}_i(q_i)\big)
    ~.
\]

The relevant inequality for property~\ref{property:amortization_revenue_bounds}, i.e., Eqn.~\eqref{eqn:amortization_bound_D}, holds as follows: 
\begin{align*}
    \int_0^{\theta_i(\vec{q}_{-i})} \phi_i(v_i(q_i)) ~ d q_i 
    & = R_i \big( \theta_i(q_{-i}) \big)
    && \text{(definition of virtual values)} \\[1ex]
& \ge \widehat{R}_i \big( \theta_i(q_{-i}) \big)
&& \text{($\Dbold \succeq \Dboldhat$)} \\[1ex]
& = \int_0^{\theta_i(q_{-i})} \hat{\phi}_i(\hat{v}_i(q_i)) ~ d q_i
&& \text{(definition of virtual values)} \\
& = \int_0^{\theta_i(q_{-i})} g_i(q_i, \vec{q}_{-i}) ~ d q_i ~.
&& \text{(definition of $g_i$)} 
\end{align*}

The relevant inequality for property~\ref{property:amortization_same_winner} is Eqn.~\eqref{eqn:amortization_common_winner} when $0 \le q_i \le \theta_i(\vec{q}_{-i})$, i.e.:
\[
    g_i(q_i, \vec{q}_{-i}) \ge \hat{g}_i(q_i, \vec{q}_{-i})
    ~.
\]
By our construction, it holds with equality.

We do not need to worry about property~\ref{property:amortization_different_winners}, 
since in this case we have $q_i < \theta_i(\vec{q}_{-i}) \le \hat{\theta}_i(\vec{q}_{-i})$ and thus bidder $i$ is a winner  in both $\Dbold$ 
and $\Dboldhat$. 


\paragraph{Case 2: $\theta_i(\vec{q}_{-i}) > \hat{\theta}_i(\vec{q}_{-i})$.}
\begin{figure}
	\centering
	\includegraphics[width=0.7\textwidth]{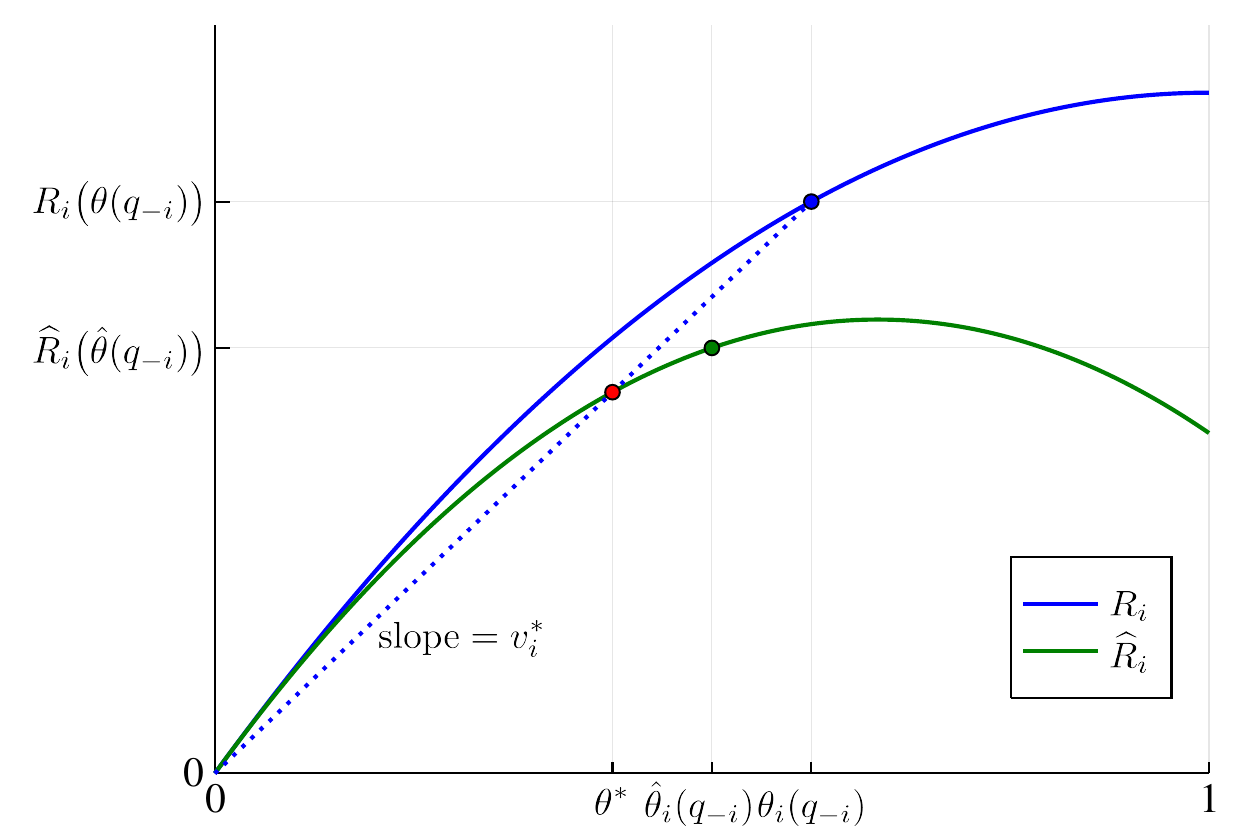}
	\caption{An illustrative picture of the 2nd case. The amortized gain $g_i$ moves along the revenue curve $\widehat{R}_i$ from $0$ to $\theta^*$, and then moves along the dotted straight line with slope $v_i^*$ from $\theta^*$ to $\theta_i(\vec{q}_{-i})$, meeting the revenue curve $R_i$ at $\theta_i(\vec{q}_{-i})$. The amortized gain function $\hat{g}_i$ simply moves along the revenue curve of $\widehat{R}_i$ from $0$ to $\hat{\theta}_i(\vec{q}_{-i})$.}
	\label{fig:revenue_monotone_proof}
\end{figure}

We first give the geometric intuition behind the construction. 
Suppose we define $g_i$ as the slope of a curve that starts at 
the origin. 
We first reinterpret what the three properties mean in terms of this curve and the revenue curves $R_i$ and $\widehat{R}_i$. 
\begin{description}
	\item[{Property~\ref{property:amortization_revenue_bounds}:}]
	The curve should meet the revenue curve $R_i$ at $\theta_i$. 
	\item[Property~\ref{property:amortization_same_winner}:]
	The slope of the curve should be at least that of $\widehat{R}_i$, in the interval $[0,\hat{\theta}_i]$. 
	\item[Property~\ref{property:amortization_different_winners}:]
	This corresponds to the interval $[\hat{\theta}_i, \theta_i]$.
	This property is hard to interpret geometrically, but a sufficient condition will be that the slope is at least  
	the threshold \emph{value} above which bidder $i$ is a winner in $\Dbold$. We denote this threshold value by $v_i^*$. 
\end{description}
Refer to Figure~\ref{fig:revenue_monotone_proof} for an illustration. 
The value $v_i^*$ is the slope of the line joining the origin to 
the revenue curve $R_i$ at $\theta_i$. 
The curve defining $g_i$ follows $\widehat{R}_i$ until 
it touches this line, and then follows this straight line until 
it touches $R_i$. 
It is easy to verify that this curve satisfies the geometric interpretation of the 3 properties as described above. 
We now give a formal description of this construction, and complete the proof.

Let $v_i^*$ be $v_i \big( \theta_i(\vec{q}_{-i}) \big)$, i.e., the threshold value that corresponds to the threshold quantile $\theta_i(\vec{q}_{-i})$ w.r.t.\ $D_i$ such that $i$ becomes the winner when its value is at least $v_i^*$, when mechanism $\Mhat$ is run on $\Dbold$, and when the other bidders quantile profile is $\vec{q}_{-i}$.
By its definition, we have:
\[
    v_i^* = \inf \Big\{ v : i \in \vec{\hat{x}} \big(v, \vec{v}_{-i}(\vec{q}_{-i}) \big) \Big\}
    ~.
\]

Let $\theta^*$ denote the sale probability of $v_i^*$ w.r.t.\ $\Dhat_i$.

By $\Dbold \succeq \Dboldhat$, we have $\theta^* \le \theta_i(\vec{q}_{-i})$.
We further need the following technical lemma.

\begin{lemma}
	$\theta^* \le \hat{\theta}_i(\vec{q}_{-i})$.
\end{lemma}

\begin{proof}
	Similarly, let $\hat{v}_i^*$ be $\hat{v}_i \big( \hat{\theta}_i(\vec{q}_{-i}) \big)$, i.e., the threshold value above which $i$ wins when $\Mhat$ is run on $\Dboldhat$ and when the other bidders' quantile profile is $\vec{q}_{-i}$.
	By its definition, we have that:
	\[
        \hat{v}_i^* = \inf \Big\{ v : i \in \vec{\hat{x}} \big(v, \vec{\hat{v}}_{-i}(\vec{q}_{-i}) \big) \Big\}
	\]
	
	For any $j \ne i$, by $\Dbold \succeq \Dboldhat$ and the monotonicity of $\hat{\phi}_j$, we have that $v_j(q_j) \ge \hat{v}_j(q_j)$ and, thus, $\hat{\phi}_j\big(v_j(q_j)\big) \ge \hat{\phi}_j\big(\hat{v}_j(q_j)\big)$.
	Therefore, we have that $v_i^* \ge \hat{v}_i^*$ by their definitions, i.e., the threshold value above which $i$ wins against $\vec{v}_{-i}(\vec{q}_{-i})$ is weakly higher than that against $\vec{\hat{v}}_{-i}(\vec{q}_{-i})$, because the former is weakly larger coordinate-wise.
	The lemma follows since $\theta^*$ and $\hat{\theta}_i(\vec{q}_{-i})$ are the quantiles of $v_i^*$ and $\hat{v}_i^*$ in $\Dhat_i$ respectively.
\end{proof}

We now define the amortized gain function $g_i$ as follows.
%
\[
    g_i(q_i, \vec{q}_{-i}) = 
    \begin{cases}
        \hat{\phi}_i\big(\hat{v}_i(q_i)\big) & \text{if $0 \le q_i \le \theta^*$} \\
        v_i^* & \text{if $\theta^* < q_i \le \theta_i(\vec{q}_{-i})$} \\
        0 & \text{if $\theta_i(\vec{q}_{-i}) \le q_i < 1$}
    \end{cases}
\]
%

The relevant inequality for property~\ref{property:amortization_revenue_bounds}, i.e., Eqn.~\eqref{eqn:amortization_bound_D}, holds with equality as follows:
\begin{align*}
    \int_0^{\theta_i(\vec{q}_{-i})} \phi_i(v_i(q_i)) ~ d q_i & = R_i \big( \theta_i(\vec{q}_{-i}) \big)
    && \text{(definition of virtual values)} \\[1ex]
    & = \theta_i(\vec{q}_{-i}) v_i^* 
    && \text{(definition of $v_i^*$)} \\[1ex]
    & = \theta^* v_i^* + \int_{\theta^*}^{\theta(\vec{q}_{-i})} v_i^* ~ dq_i \\
    & = \widehat{R}_i \big( \theta^* \big) + \int_{\theta^*}^{\theta(\vec{q}_{-i})} v_i^* ~ dq_i
    && \text{(definition of $\theta^*$)} \\
    & = \int_0^{\theta^*} \hat{\phi}_i(q_i) ~ d q_i + \int_{\theta^*}^{\theta(\vec{q}_{-i})} v_i^* ~ d q_i 
    && \text{(definition of virtual values)} \\
    & = \int_0^{\theta(\vec{q}_{-i})} g_i(q_i, \vec{q}_{-i}) ~ dq_i
    && \text{(definition of $g_i$)}
    ~.
\end{align*}


The relevant inequality for property~\ref{property:amortization_same_winner} is Eqn.~\eqref{eqn:amortization_common_winner} when $0 \le q_i \le \hat{\theta}_i(\vec{q}_{-i})$.
It holds trivially with equality for $0 \le q_i \le \theta^*$ by our construction.
It remains to consider $\theta^* < q_i \le \hat{\theta}_i(\vec{q}_{-i})$.
For any $q_i$ in this range, we have:
\[
    g_i(q_i, \vec{q}_{-i}) = v_i^* \ge \hat{v}_i(q_i) \ge \hat{\phi}_i(\hat{v}_i(q_i)) = \hat{g}_i(q_i, \vec{q}_{-i})
~.
\]

Finally, the relevant inequality for property~\ref{property:amortization_different_winners} is Eqn.~\eqref{eqn:amortization_bound_D} when $\hat{\theta}_i(\vec{q}_{-i}) < q_i \le \theta_i(\vec{q}_{-i})$.
For any $q_i$ in this range, and any $j$ that could have been swapped with $i$ while maintaining feasibility, we have:
\[
    g_i(q_i, \vec{q}_{-i}) = v_i^* \ge \hat{\phi}_i(v_i^*) \ge \hat{\phi}_j\big( v_j(q_j) \big)
    ~,
\]
where the last inequality follows by the definition of $v_i^*$ and that $\Mhat$ is a virtual value maximizer.

\section{Proof of Concentration Inequality for Product Distributions}
\label{sec:prodconcentrate}

In this section we prove Theorems~\ref{thm:prodconcentrate}.
The following lemma follows by Bernstein inequality.

\begin{lemma}
	\label{lem:chernoff}
	Let $\vec{y}_1, \vec{y}_2 \dots, \vec{y}_m$ be $m$ i.i.d.\ samples from $\vec{D}$.
	Then, we have
	\[
	\textstyle \Pr \left[ \big| \frac{1}{m} \sum_{j = 1}^m f(\vec{y}_j) - p \big| \ge \delta \right] \le 2 e^{- \frac{2 m \delta^2}{4p + \delta}} .
	\]
\end{lemma}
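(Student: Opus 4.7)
The statement is a straightforward application of Bernstein's inequality to the i.i.d. real-valued random variables $Z_j := f(\vec{y}_j)$, so the plan is essentially to set up the right variables and invoke the standard tail bound. First I would note that since the samples $\vec{y}_1, \dots, \vec{y}_m$ are i.i.d.\ from $\vec{D}$ and $f : V^n \to [0,1]$, the $Z_j$ are i.i.d.\ random variables taking values in $[0,1]$, with common mean $\E[Z_j] = \E_{\vec{x} \sim \mathbf{D}}[f(\vec{x})] = p$. The empirical average $\tfrac{1}{m}\sum_j f(\vec{y}_j)$ is then exactly $\tfrac{1}{m} S_m$, where $S_m := \sum_{j=1}^m Z_j$, so the event in the lemma is $|S_m - mp| \ge m\delta$.

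The key quantitative input is the ``Bernoulli-style'' variance bound: because $Z_j \in [0,1]$ we have $Z_j^2 \le Z_j$ pointwise, hence $\mathrm{Var}(Z_j) \le \E[Z_j^2] \le \E[Z_j] = p$. This sharper-than-$\tfrac14$ variance bound is what ultimately produces the $p$ in the denominator of the exponent, making the concentration inequality much stronger in the regime $p \ll 1$ (which is precisely the regime in which it will later be applied inside the proof of Theorem \ref{thm:prodconcentrate}). I would then apply Bernstein's inequality with $|Z_j - p| \le 1$ almost surely and total variance $\sum_{j=1}^m \mathrm{Var}(Z_j) \le m p$, and deviation parameter $t = m\delta$, yielding a bound of the form
\[
    \Pr\bigl[|S_m - mp| \ge m\delta\bigr] \;\le\; 2\exp\!\left(-\frac{(m\delta)^2/2}{m p + m\delta/3}\right).
\]

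Finally I would simplify and loosen the denominator inside the exponent so as to match the stated form $\tfrac{2m\delta^2}{4p + \delta}$ (the precise constants in the denominator depend on which version of Bernstein is cited; any of the standard forms suffice up to an absorbable constant), and divide through by $m$ to rewrite the event in terms of $\tfrac{1}{m}\sum_j f(\vec{y}_j) - p$. There is no real obstacle here: the lemma is essentially a statement that Bernstein's inequality sharpens Chernoff when the variance is small, and the only ``step of thought'' beyond turning the crank is to notice the $[0,1]$-valued variance bound $\mathrm{Var}(Z_j) \le p$.
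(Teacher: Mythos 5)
Your approach is essentially the paper's: the paper gives no proof beyond the one-line remark that the lemma "follows by Bernstein's inequality," and your setup (treating $Z_j=f(\vec{y}_j)\in[0,1]$, using $Z_j^2\le Z_j$ to get $\mathrm{Var}(Z_j)\le p$, range bound $1$, deviation $t=m\delta$) is exactly the intended instantiation. One small caution: the last step is not actually a \emph{loosening} --- the textbook Bernstein form yields exponent $\tfrac{m\delta^2}{2p+2\delta/3}$, which is marginally \emph{weaker} than the displayed $\tfrac{2m\delta^2}{4p+\delta}=\tfrac{m\delta^2}{2p+\delta/2}$, so to recover the exact stated constants one would either restate the lemma with the Bernstein constants (harmless for every downstream use, where only the qualitative shape of the exponent matters) or argue via the Chernoff/KL bound after reducing to the Bernoulli case by convexity of $e^{\lambda z}$; this is a constant-factor bookkeeping point rather than a gap in the idea, and the paper itself glosses over it.
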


We relate the random variables in Theorem \ref{thm:prodconcentrate} and Lemma \ref{lem:chernoff} as follows.
First, draw samples $x_{i1}, x_{i2}, \dots, x_{im}$ i.i.d.\ from $D_i$ for all $i \in [n]$, and draw $n$ permutations $\pi_1, \dots, \pi_n$ of $[m]$ independently and uniformly at random.
Then, let $\vec{y}_j(\vec{x}, \vec{\pi})$ be $(x_{1\pi_1(j)}, x_{2\pi_2(j)}, \dots, x_{n\pi_n(j)})$ for all $j \in [m]$.
We have the following properties which will be useful in the proof:
\begin{enumerate}
	\item[a)] $\vec{y}_j(\vec{x}, \vec{\pi})$ are i.i.d.\ samples from $\vec{D}$ over the randomness of both $\vec{x}$ and $\vec{\pi}$.
	\item[b)] Fixed any $\vec{x}$ (and, thus, $\vec{E}$), for any $j$, $\vec{y}_j(\vec{x}, \vec{\pi})$ follows distribution $\vec{E}$ over the randomness of $\vec{\pi}$.
	Different $\vec{y}_j$'s are, however, correlated. 
\end{enumerate}

Next, we bound the difference between the expectation over the product empirical distribution $\vec{E}$ and the true expectation $p$.
We have the following inequalities:
%
%
%
\begin{align*}
\textstyle \big| \E_{\vec{y} \sim \vec{E}} [ f(\vec{y}) ] - p \big|
& \textstyle = \frac{1}{m} \sum_{j = 1}^m \big| \E_{\pi} [ f \big( \vec{y}_j(\vec{x}, \vec{\pi}) \big) ] - p \big| && \text{(Property b: $\vec{y}_j(\vec{x}, \vec{\pi}) \sim \vec{E}$)} \\
& \le \textstyle	\E_{\vec{\pi}} \left[ \frac{1}{m} \sum_{j = 1}^m \big| f \big( \vec{y}_j(\vec{x}, \vec{\pi}) \big) - p \big| \right] && \text{(Convexity of absolute value)} \\
& \le \textstyle	\Pr_{\vec{\pi}} \left[ \big| \frac{1}{m} \sum_{j = 1}^m f \big( \vec{y}_j(\vec{x}, \vec{\pi}) \big) - p \big| \ge \delta \right] \cdot 1 \\
& \textstyle \quad + \Pr_{\vec{\pi}} \left[ \big| \frac{1}{m} \sum_{j = 1}^m f \big( \vec{y}_j(\vec{x}, \vec{\pi}) \big) - p \big| < \delta \right] \cdot \delta && \text{(By $f(\cdot) \in [0, 1]$)} \\
& \le \textstyle \Pr_{\vec{\pi}} \left[ \big| \frac{1}{m} \sum_{j = 1}^m f \big( \vec{y}_j(\vec{x}, \vec{\pi}) \big) - p \big| \ge \delta \right] + \delta ~.
\end{align*}

Hence, we get that:
\begin{equation}
\label{eqn:prodconcentrate_reduction}
\textstyle
\Pr_{\vec{x}} \left[ \big| \E_{\vec{y} \sim \vec{E}} [ f(\vec{y}) ] - p \big| \ge 2 \delta \right] \le \Pr_{\vec{x}} \left[ \Pr_{\vec{\pi}} \big[ \big| \frac{1}{m} \sum_{j = 1}^m f \big( \vec{y}_j(\vec{x}, \vec{\pi}) \big) - p \big| \ge \delta \big] \ge \delta \right] ~.
\end{equation}

It remains the bound the right hand side.
By property a) and Lemma \ref{lem:chernoff}, we have:
\[
\textstyle \Pr_{\vec{x}, \vec{\pi}} \left[ \big| \frac{1}{m} \sum_{j = 1}^m f \big( \vec{y}_j(\vec{x}, \vec{\pi}) \big) - p \big| \ge \delta \right] \le 2 e^{- \frac{2 m \delta^2}{4p + \delta}}.
\]

Equivalently,
\[
\textstyle \E_{\vec{x}} \left[ \Pr_{\vec{\pi}} \left[ \big| \frac{1}{m} \sum_{j = 1}^m f \big( \vec{y}_j(\vec{x}, \vec{\pi}) \big) - p \big| \ge \delta \right]  \right] \le 2 e^{- \frac{2 m \delta^2}{4p + \delta}} .
\]

Thus, by Markov's inequality,
\begin{equation}
\label{eqn:markov}
\textstyle \Pr_{\vec{x}} \left[ \Pr_{\vec{\pi}} \left[ \big| \frac{1}{m} \sum_{j = 1}^m f \big( \vec{y}_j(\vec{x}, \vec{\pi}) \big) - p \big| \ge \delta \right] > \delta \right]
\le 2 e^{- \frac{2 m \delta^2}{4p + \delta} - \ln(\delta)} ~.
\end{equation}
%
%

Putting together \eqref{eqn:prodconcentrate_reduction} and \eqref{eqn:markov}, the theorem follows.

\section{Single-item Auctions in the \CRmodel}
\label{sec:no_signal}

\subsection{Finite and Bounded-support Distributions}

In this subsection, consider the case that the support of the distributions is a finite subset $V \subset [0, h]$ for some $h > 0$ and suppose we aim to an additive approximation up to a $\beta$ factor.
We propose an algorithm for constructing an empirical Myerson auction (Algorithm~\ref{alg:empirical_myerson_finite}) and present a meta analysis of its sample complexity upper bound. 
They will serve as important building blocks in the analysis of the other classes of distributions.

Concretely, we will show the following theorem, whose proof follows the standard concentration plus union bounds combo, using the concentration bound for product distributions introduced in Section~\ref{sec:concentration_intro}, whose proof is in Section~\ref{sec:prodconcentrate}.

\begin{theorem}
\label{thm:CR_finite}
For any upper bound of the values $h > 0$, any finite support of the value distributions $V \subset [0, h]$, any upper bound of the optimal revenue $0 < \alpha \le h$, and any additive error bound $0 < \beta \le \alpha$, Algorithm~\ref{alg:empirical_myerson_finite} takes $m$ samples and learns an empirical Myerson auction that gets an expected revenue of at least $\opt{\vec{D}} - \beta$ with probability at least $1 - \gamma$ if $m$ is at least: 
\[
    \Theta \big( \alpha h \beta^{-2} \big( n |V| \log (n |V|) + \log \gamma^{-1} \big) \big) ~.
\]
\end{theorem}

\begin{remark}
Importantly, all the parameters $\alpha$, $\beta$, $\gamma$, and $h$ are used only in the analysis. 
The algorithm does not need to know them in advance. 
\end{remark}

 \begin{proof}
 	For any product distribution over $V^n$, Myerson's optimal auction picks the bidder with the largest non-negative ironed virtual value.
 	As any tie-breaking rules give the same expected revenue, we assume wlog that the optimal auction breaks ties deterministically.
 	Then, the auction is characterized by a mapping $\sigma : ([n] \times V) \cup \{ \perp \} \mapsto \{ 1, 2, \dots, n |V|+1 \}$ such that $\sigma(i, v)$ is the rank of bidder $i$ with value $v$ among all possible bidder-value pairs, and $\sigma(\perp)$ is the rank of having zero ironed virtual value. 
 	Given any value profile $(v_1, v_2, \dots, v_n) \in V^n$, Myerson's optimal auction does not allocate the item to anyone if $\min_{i \in [n]} \sigma(i, v_i) > \sigma(\perp)$, and otherwise allocates the item to bidder $i^* = \argmin_{i \in [n]} \sigma(i, v_i)$.
 	Hence, there are at most $(n|V|+1)^{n|V|+1}$ different Myerson's auctions for different product distributions over $V^n$.
 	Let $M_\sigma$ denote the auction that corresponds to such a mapping $\sigma$.
 	
    For any given $\sigma$, we apply Theorem~\ref{thm:prodconcentrate} with $f(\vec{v})$ being the revenue of $M_\sigma$ when the values are $\vec{v}$, dividing it by $h$ so that $f(\vec{v}) \in [0, 1]$, with $\delta = \beta / 4h$, and with $m$ at least:
    \[
        \Theta \big( \alpha h \beta^{-2} \big( n |V| \log (n |V|) + \log \gamma^{-1} \big) \big)
    \]
    samples 
    (for a sufficiently large constant inside the big-$\Theta$ notation).  
	Note that the expectation $p = \E_{\mathbf{x} \in \mathbf{D}} [ f(\mathbf{x}) ]$ is the expected revenue of a mechanism, which is at most $\opt{\vec{D}}/h$ and, thus, $p \le \alpha/h$ by the definition of $\alpha$.
	We get the following inequalities (where the probability is over the randomness of the $m$ samples):
 	\begin{align*}
        \Pr \left[ \big| \rev{M_\sigma}{\vec{E}} - \rev{M_\sigma}{\vec{D}} \big| \ge \tfrac{\beta}{2} \right] 
        &
        \le 2 e^{- \frac{2 m \delta^2}{4p + \delta} - \ln(\delta)} \\
        & 
        \le 2 e^{- \frac{m \beta^2}{40 \alpha h} + \ln(4h/\beta)} \\[1ex]
        &
        \le \gamma (n|V|+1)^{-n|V|-1} 
        ~.
 	\end{align*}

 	%
 	By union bound, we have that with probability at least $1 - \gamma$, for any $\sigma$ it holds that:
 	\[
        \big| \rev{M_\sigma}{\vec{E}} - \rev{M_\sigma}{{\vec{D}}} \big| < \tfrac{\beta}{2} 
        ~.
 	\]
 	%
 	
    Since Algorithm \ref{alg:empirical_myerson_finite} returns the optimal auction w.r.t.\ $\vec{E}$, it gets at least $\opt{\bar{\vec{D}}} - \beta$ revenue in expectation on $\Dbold$.
\end{proof}

\begin{algorithm}[t]
	\begin{algorithmic}[1]
		%
        \STATE \textbf{Input:~} $m$ i.i.d.\ samples $\vec{v}_1, \dots, \vec{v}_{m} \sim \vec{D}$.
		\STATE For every $i \in [n]$, let $E_i$ denote the empirical distribution of buyer $i$, i.e., the uniform distribution over $v_{1i}, v_{2i}, \dots, v_{m i}$.
		\STATE \textbf{Output:~} Myerson's auction w.r.t.\ $\mathbf{E} = E_1 \times E_2 \times \dots \times E_n$.\\
		(If a bidder $i$ submits a bid unseen in the samples, round it down to the closest sample.)
	\end{algorithmic}
	\caption{Empirical Myerson (Finite-support)}
	\label{alg:empirical_myerson_finite}
\end{algorithm}

\subsection{Bounded-support Distributions: Additive Approximation}

In this subsection, consider the case that the distributions have a bounded support $[0, h]$ for some $h > 0$.
We seek to learn a mechanism that is optimal up to an $\epsilon$ additive loss in the expected revenue.
We show the following theorem.

\begin{theorem}
\label{thm:CR_bounded_additive}
Suppose the value distributions $\Dbold$ have bounded supports in $[0, h]$. 
Then, for any $\epsilon > 0$, there is an algorithm that takes $m$ samples and learns a mechanism with revenue at least $\opt{\vec{D}} - \epsilon$ with probability at least $1 - \gamma$ if $m$ is at least:
\[
    \Theta \big(h^2 \epsilon^{-2} ( n h \epsilon^{-1} \log(n h \epsilon^{-1}) + \log \gamma^{-1}) \big) 
    = 
    \tilde{\Theta} \big(n h^3 \epsilon^{-3} \big) 
    ~.
\]
\end{theorem}

We will prove the theorem by reducing it to the finite-support case.
To do so, we first introduce a technical lemma showing that a standard discretization of the value space decreases the optimal revenue by at most $\epsilon$ additively.

\begin{lemma}[Additive discretization of the value space]
 	\label{lem:CR_bounded_additive}
 	Given any product value distribution $\Dbold$, and any $\delta > 0$, let $\Dboldhat$ be the distribution obtained by rounding the values from $\Dbold$ to the closest multiple of $\delta$ from below.
 	Then, we have:
 	\[
 	\opt{\Dboldhat} \ge \opt{\Dbold} - \delta ~.
 	\]
    The lemma holds for any single parameter problem, replacing $\delta$ with $k \delta$ on the right-hand-side.
\end{lemma}

Assume for simplicity of exposition that $h$ itself is a multiple of $\delta$.

\begin{proof}
%
We will prove the lemma by explicitly constructing a mechanism that gets expected revenue at least $\opt{\Dbold} - \delta$ on distribution $\Dboldhat$.
%
%
%
Let $M$ be the optimal mechanism w.r.t.\ $\Dbold$.
We will assume wlog that $M$ breaks ties deterministically over bidders with equal ironed virtual values.
Consider a mechanism $\Mhat$ for distribution $\Dboldhat$ with an allocation rule that proceeds as follows: 
\begin{enumerate}
\item Suppose the reported value profile is $\vec{\hat{v}} = (\hat{v}_1, \hat{v}_2, \dots, \hat{v}_n) \in \{0, \delta, 2\delta, \dots, h \}^n$ from $\Dboldhat$.
\item For any bidder $i \in [n]$, sample $v_i$ independently from $D_i$ conditioned on $\hat{v}_i \le v_i < \hat{v}_i + \delta$.
%
%
\item Use the allocation of mechanism $M$ for values profile $\vec{v}$.
\end{enumerate}

Suppose we fix the randomness used in step 2 of the above allocation rule.
Then, note that $M$ is a $0$-$1$ step function of every bidder $i$'s value fixing the other bidders' values.
Further, step $2$ maps values in the support of $\Dboldhat$ to values in the support of $\Dbold$ monotonically.
Hence, the above allocation rule is a $0$-$1$ step function of every bidder $i$'s value fixing the other bidders' values.
Therefore, we can make it a DSIC mechanism by using a payment rule that charges the winner the threshold value above which she is the winner.
Since it is DSIC for every realization of its random bits, the overall mechanism is DSIC (in fact, uniformly truthful).

Next, we analyze the expected revenue of $\Mhat$.
We will do so by coupling two random variables, the revenue of $M$ over a valuation profile sampled $\vec{v}$ from $\Dbold$, and the revenue of $\Mhat$ over a valuation profile $\vec{\hat{v}}$ sampled from $\Dboldhat$.
In particular, we will couple the value profiles $\vec{v}$ and $\vec{\hat{v}}$, such that $\vec{v}$ is the output of step 2 of the allocation rule of $\Mhat$.

For such a pair of value profiles, by the definition of $\Mhat$, the allocation of $M$ on $\vec{v}$ is the same as that of $\Mhat$ on $\vec{\hat{v}}$.
Suppose bidder $i^*$ is the winner.
Further, the payment of each mechanism is equal to the threshold value of  above which bidder $i^*$ is a winner.
By definition, the threshold value in $\Mhat$ is obtained by rounding the threshold value in $M$ down to the closet multiple of $\delta$.
Therefore, for any bidder who get an item, her payments in both mechanisms differ by no more than $\delta$.
So the lemma follows by that there are at most $k$ winners in any feasible allocation.
\end{proof}

By rounding the values to their closest multiples of $\delta = O(\epsilon)$ from below, we reduce the problem to the finite-support case with support $V = \{0, \delta, 2\delta, \dots, h \}$, which has size $O(h/\epsilon)$.
As a result, we prove Theorem~\ref{thm:CR_bounded_additive} with Algorithm~\ref{alg:empirical_myerson_bounded_additive}.

\begin{algorithm}[t]
 	\begin{algorithmic}[1]
 		\STATE \textbf{Parameters:~} $h > 0$ (upper bound of values); $\epsilon \in (0, h]$ (additive revenue loss);\\ \quad $\gamma > 0$ (failure probability).
        \STATE \textbf{Input:~} $m$ i.i.d.\ samples $\vec{v}_1, \dots, \vec{v}_{m} \sim \vec{D}$.
 		\STATE Let $\delta = \frac{\epsilon}{2}$.
 		\STATE Run Algorithm~\ref{alg:empirical_myerson_finite} with $V = \{0, \delta, 2\delta, \dots, h \}$, rounding the samples down to the\\ closest multiple of $\delta$, $\alpha = h$, $\beta = \frac{\epsilon}{2}$ ($h$ and $\gamma$ remain what they are).
 		\STATE \textbf{Output:~} The empirical Myerson auction outputted by Algorithm~\ref{alg:empirical_myerson_finite}, treating any value\\ as the closest multiple of $\delta$ from below.
 	\end{algorithmic}
 	\caption{Empirical Myerson (Bounded-support, Additive Approximation)}
 	\label{alg:empirical_myerson_bounded_additive}
\end{algorithm}

\begin{proof}[Proof of Theorem~\ref{thm:CR_bounded_additive}]
	Let $\Dboldhat$ denote the distribution obtained by rounding sample values from $\Dbold$ down to the closest multiple of $\delta$.
	Since the output mechanism, denoted as $\Mhat$, treats any value as the closest multiple of $\delta$ from below, running it on $\Dbold$ and $\Dboldhat$ gives the same revenue. 
	That is,
	\[
	\rev{\Mhat}{\Dbold} = \rev{\Mhat}{\Dboldhat} ~.
	\]
	
	By Theorem~\ref{thm:CR_finite}, we have that with probability at least $1 - \gamma$:
	\[
	\rev{\Mhat}{\Dboldhat} \ge \opt{\Dboldhat} - \frac{\epsilon}{2} ~.
	\] 
	
	Further, by Lemma~\ref{lem:CR_bounded_additive}, and that $\delta = \frac{\epsilon}{2}$, we get that:
	\[
	\opt{\Dboldhat} \ge \opt{\Dbold} - \delta = \opt{\Dbold} - \frac{\epsilon}{2} ~.
	\] 
	
	Putting together the above equations proves the lemma. 
\end{proof}

\subsection{Bounded-support Distributions: Multiplicative Bound}

In this subsection, we consider the case that the value distributions have bounded supports in $[1, h]$ for some $h > 1$.
We seek to learn a mechanism that is optimal up to a $1-\epsilon$ multiplicative factor in term of the expected revenue.
We show the following theorem.

\begin{theorem}
\label{thm:CR_bounded_multiplicative}
Suppose the value distributions $\vec{D}$ have bounded supports in $[1, h]$. 
Then, for any $0 < \epsilon < 1$, there is an algorithm that takes $m$ samples and learns a mechanism with revenue at least $\big(1 - \epsilon\big) \opt{\vec{D}}$ with probability at least $1 - \gamma$ if $m$ is at least:
\[
    \Theta \big( h \epsilon^{-2} ( n \epsilon^{-1} \log h \log(n \epsilon^{-1} \log h) + \log \gamma^{-1}) \big) 
    =
    \tilde{\Theta} \big( h n \epsilon^{-3} \big)
    ~.
\]
\end{theorem}

We will again prove the theorem by reducing it to the finite support case.
First, we introduce a technical lemma showing that a standard discretization of the value space, tailored for an multiplicative approximation, decreases the optimal revenue by at most a $1 - \epsilon$ multiplicative factor.

\begin{lemma}[Multiplicative discretization of the value space]
 	\label{lem:CR_bounded_multiplicative}
 	Given any product value distribution $\Dbold$, and any $\delta > 0$, let $\Dboldhat$ be the distribution obtained by rounding the values from $\Dbold$ down to the closest power of $1 + \delta$.
 	Then, we have:
 	\[
 	\opt{\Dboldhat} \ge (1 - \delta) \opt{\Dbold} ~.
 	\]
    The lemma holds for any single parameter problem.
\end{lemma}

The proof is almost a verbatim of that of Lemma~\ref{lem:CR_bounded_additive}, replacing the additive discretization with a multiplicative one.
We include it for completeness.
Assume for simplicity of exposition that $h$ itself is a power of $1 + \delta$.

\begin{proof}
	Similar to the proof of Lemma~\ref{lem:CR_bounded_additive}, we will prove the lemma by explicitly constructing a mechanism that gets expected revenue at least $(1 - \delta) \opt{\Dbold}$ on distribution $\Dboldhat$.
	%
	%
	%
    Let $M$ be the optimal mechanism w.r.t.\ $\Dbold$.
	We will assume wlog that $M$ breaks ties deterministically over bidders with equal ironed virtual values.
	Consider a mechanism $\Mhat$ for distribution $\Dboldhat$ with an allocation rule that proceeds as follows: 
	\begin{enumerate}
		\item Suppose the reported value profile is $\vec{\hat{v}} = (\hat{v}_1, \hat{v}_2, \dots, \hat{v}_n) \in \big\{1, 1+\delta, (1+\delta)^2, \dots, h \big\}^n$ from $\Dboldhat$.
		\item For any $i \in [n]$, sample $v_i$ independently from $D_i$ conditioned on $\hat{v}_i \le v_i < (1+\delta) \hat{v}_i$.
		%
		%
		\item Use the allocation of mechanism $M$ for values profile $\vec{v}$.
	\end{enumerate}
	
	Suppose we fix the randomness used in step 2 of the above allocation rule.
	Then, note that $M$ is a $0$-$1$ step function of every bidder $i$'s value fixing the other bidders' values.
	Further, step $2$ maps values in the support of $\Dboldhat$ to values in the support of $\Dbold$ monotonically.
	Hence, the above allocation rule is a $0$-$1$ step function of every bidder $i$'s value fixing the other bidders' values.
	Therefore, we can make it a DSIC mechanism by using a payment rule that charges the winner the threshold value above which she is the winner.
	Since it is DSIC for every realization of its random bits, the overall mechanism is DSIC (in fact, uniformly truthful).
	
	Next, we analyze the expected revenue of $\Mhat$.
    Similar to the corresponding part in the proof of Lemma~\ref{lem:CR_bounded_additive}, we will do so by coupling two random variables, the revenue of $M$ over a valuation profile sampled $\vec{v}$ from $\Dbold$, and the revenue of $\Mhat$ over a valuation profile $\vec{\hat{v}}$ sampled from $\Dboldhat$.
	In particular, we will couple the value profiles $\vec{v}$ and $\vec{\hat{v}}$, such that $\vec{v}$ is the output of step 2 of the allocation rule of $\Mhat$.
	
	For such a pair of value profiles, by the definition of $\Mhat$, the allocation of $M$ on $\vec{v}$ is the same as that of $\Mhat$ on $\vec{\hat{v}}$.
	Suppose bidder $i^*$ is the winner.
	Further, the payment of each mechanism is equal to the threshold value of  above which bidder $i^*$ is a winner.
	By definition, the threshold value in $\Mhat$ is obtained by rounding the threshold value in $M$ down to the closet power of $1+\delta$.
	Therefore, the payments of both mechanisms differ by no more than $(1+\delta)^{-1} \ge 1 - \delta$.
	So the lemma follows.
\end{proof}

By rounding the values to their closest powers of $1 + \delta$ from below for some $\delta = O(\epsilon)$, we reduce the problem to the finite-support case with support $V = \big\{1, 1+\delta, (1+\delta)^2, \dots, h \big\}$, which has size $O \big( \tfrac{h \log h}{\epsilon} \big)$.
As a result, we prove Theorem~\ref{thm:CR_bounded_multiplicative} with Algorithm~\ref{alg:empirical_myerson_bounded_multiplicative}.

\begin{algorithm}[t]
 	\begin{algorithmic}[1]
 		\STATE \textbf{Parameters:~} $h > 0$ (upper bound of values); $\epsilon \in (0, 1)$ (multiplicative loss);\\ \quad $\gamma > 0$ (failure probability).
        \STATE \textbf{Input:~} $m$ i.i.d.\ samples $\vec{v}_1, \dots, \vec{v}_{m} \sim \vec{D}$.
 		\STATE Let $\delta = \frac{\epsilon}{2}$.
 		\STATE Run Algorithm~\ref{alg:empirical_myerson_finite} with $V = \big\{1, 1+\delta, (1+\delta)^2, \dots, h \big\}$, rounding the samples down to the\\ closest powers of $1+\delta$, $\alpha = \opt{\vec{D}}$, and $\beta = \frac{\epsilon}{2} \opt{\vec{D}}$ ($h$ and $\gamma$ remain what they are).
 		\STATE \textbf{Output:~} The empirical Myerson auction outputted by Algorithm~\ref{alg:empirical_myerson_finite}, treating any value\\ as the closest power of $1 + \delta$ from below.
 	\end{algorithmic}
 	\caption{Empirical Myerson (Bounded Support, Multiplicative Approximation)}
 	\label{alg:empirical_myerson_bounded_multiplicative}
\end{algorithm}

\begin{proof}[Proof of Theorem~\ref{thm:CR_bounded_multiplicative}]
Let $\Dboldhat$ denote the distribution obtained by rounding sample values from $\Dbold$ down to the closest power of $1+\delta$.
Since the output mechanism, denoted as $\Mhat$, treats any value as the closest power of $1+\delta$ from below, running it on $\Dbold$ and $\Dboldhat$ gives the same revenue. 
That is,
\[
\rev{\Mhat}{\Dbold} = \rev{\Mhat}{\Dboldhat} ~.
\]

By Theorem~\ref{thm:CR_finite}, we have that with probability at least $1 - \gamma$:
\[
\rev{\Mhat}{\Dboldhat} \ge \opt{\Dboldhat} - \frac{\epsilon}{2} \opt{\Dbold} ~.
\] 

Further, by Lemma~\ref{lem:CR_bounded_multiplicative}, and that $\delta = \frac{\epsilon}{2}$, we get that:
\[
\opt{\Dboldhat} \ge (1 - \delta) \opt{\Dbold} = \opt{\Dbold} - \frac{\epsilon}{2} \opt{\Dbold} ~.
\] 

Putting together the above equations proves the lemma.
\end{proof}

\subsection{Regular Distributions}
\label{sec:cr_regular}

In this subsection, we show an $\tilde{O} \left( n \epsilon^{-4} \right)$ upper bound on the sample complexity of learning the optimal auction in the \crmodel when the value distributions are regular.
Given what we have shown for the bound-support distributions, the new challenge of this case is that the support of the distributions could be unbounded. 
The main technical ingredient that overcomes this challenge is to show that we can truncate the values down to a high enough finite quantity without losing too much revenue. 
This is captured in the following lemma, whose proof is deferred to Subsection~\ref{sec:tailbound}.

\begin{lemma}
	\label{lem:tailbound_regular}
	For any product regular distribution $\Dbold$, any $\frac{1}{4} \ge \delta > 0$, and any $\bar{v} \ge \frac{1}{\delta} \opt{\Dbold}$, let $\bar{D}_1, \bar{D}_2, \dots, \bar{D}_n$ be the distributions obtained by truncating $D_1, D_2, \dots, D_n$ at $\bar{v}$, i.e., a sample $\bar{v}_i$ from $\bar{D}_i$ is obtained by first sampling $v_i$ from $D_i$ and then letting $\bar{v}_i = \min \{ v_i, \bar{v} \}$.
	Then, we have:
	\[
        \opt{\Dboldbar} \ge (1 - 2 \delta) \opt{\Dbold}
        ~.
	\]
    This lemma holds generally in the downward-closed setting.
\end{lemma}

\paragraph{Analysis Assuming $\opt{\Dbold}$ is Given.}
Let us first explain what the analysis looks like assuming that $\opt{\mathbf{D}}$ is known.
Let $\delta = \Theta(\epsilon)$ with a sufficiently small constant inside the big-$\Theta$ notation.
Let $\Dboldbar$ be the distribution obtained by truncating values at $\bar{v} = \frac{1}{\delta} \opt{\Dbold}$ as in Lemma~\ref{lem:tailbound_regular}.
The lemma gives:
\[
\opt{\Dboldbar} \ge (1 - 2 \delta) \opt{\Dbold} ~.
\]

Let $\Dboldtilde$ be the distribution obtained by first sampling from $\Dboldbar$ and then rounding values smaller than $\delta \opt{\Dbold}$ down to $0$.
Since a bidder with value less than $\delta \opt{\Dbold}$ cannot pay more than her value, we have:
\[
\opt{\Dboldtilde} \ge \opt{\Dboldbar} - \delta \opt{\Dbold} ~.
\]

Let $\Dboldhat$ be the distribution obtained by rounding samples from $\Dboldtilde$ down to the closest power of $1+\delta$.
Lemma~\ref{lem:CR_bounded_multiplicative} gives:
\[
\opt{\Dboldhat} \ge (1-\delta) \opt{\Dboldtilde} ~.
\]

By Theorem~\ref{thm:CR_finite}, with $h = \frac{1}{\delta} \opt{\Dbold}$, $|V| = \tilde{O}(\frac{1}{\delta})$, $\alpha = \opt{\Dbold}$, and $\beta = \delta \opt{\Dbold}$, running Algorithm~\ref{alg:empirical_myerson_finite} with $m \ge \tilde{O} \big( n \delta^{-4} \big)$ samples from $\Dboldhat$ returns a mechanism $\Mhat$ that with high probability gets an expected revenue at least:
\[
\rev{\Mhat}{\Dboldhat} \ge \opt{\Dboldhat} - \delta \opt{\Dbold} ~.
\]

Finally, note that if we run $\Mhat$ on $\Dbold$ truncating and rounding values to get a value in the support of $\Dboldhat$, the expected revenue is the same as we run $\Mhat$ on $\Dboldhat$ by the definition of $\Dboldhat$.
That is,
\[
\rev{\Mhat}{\Dbold} = \rev{\Mhat}{\Dboldhat} ~.
\]

Putting together the sequence of equations and $\delta = c \cdot \epsilon$, for a sufficiently small constant $c > 0$, proves the desired sample complexity upper bound for regular value distributions.


\paragraph{Rest of the Subsection.}
It is not difficult to see that it suffices to have an estimate of $\opt{\Dbold}$ up to a constant factor in order to instantiate the above analysis of $\tilde{O}(n \epsilon^{-4})$ sample complexity upper bound.
We will proceed with Subsection~\ref{sec:regular_estimate_opt} which explains how to get a constant approximation of $\opt{\Dbold}$ using a small number of samples from $\Dbold$.
Then, Subsection~\ref{sec:regular_algorithm} will formally present the theorem statement, the algorithm, and the corresponding proof for the case of regular distributions.
Finally, we prove the above technical lemma (Lemma~\ref{lem:tailbound_regular}) in Subsection~\ref{sec:tailbound}.

\subsubsection{Estimating the Optimal Revenue}
\label{sec:regular_estimate_opt}

Now we design an algorithm that estimates the optimal revenue w.r.t.\ a regular value distribution $\Dbold$ up to a constant factor using a small number of samples from it.
To achieve this, we employ a bootstrapping approach that makes use of two families of simple mechanisms:
\begin{itemize}
	\item 
	The first one considers having $n$ copies of the item and sell to each bidder separately. 
	\begin{itemize}
		\item[-] Its optimal revenue is an $n$-approximate of $\opt{\Dbold}$ (folklore, also see Lemma~\ref{lem:napproxopt} below)
		\item[-] The optimal revenue w.r.t.\ each bidder can be estimated up to a factor $2$ with probability at least $1 - \gamma$ with $O \big( \log(1/\epsilon \gamma) \big)$ samples (e.g., \citet{HuangMR15}).
	\end{itemize}
	\item The second one is running the VCG mechanism with two copies of each bidder, whose values are independently sampled.\footnote{Another potential approach is to use VCG with monopoly reserves, which also gives similar guarantees.}
	\begin{itemize}
		\item[-] Its optimal revenue is a $2$-approximate of $\opt{\Dbold}$ (\citet{HR09}, Theorem~4.4).
		%
		%
	\end{itemize}
\end{itemize}


We run these simpler mechanisms with $\Theta\big(n \log(1/\gamma) \big)$ samples from $\Dbold$ to find a constant approximation of optimal revenue using Algorithm \ref{alg:approxopt}, whose formal approximation guarantee is given in \prettyref{lem:approxopt} below.
 
\begin{algorithm}
 	\begin{algorithmic}[1]
 		\medskip
		\item[] \textbf{Stage 1:}
		\medskip
		
 		\STATE For each bidder $i \in [n]$, let there be $m_1 = \Theta \big(\log(n/\gamma)\big)$ samples from $D_i$ in sorted order:
 		\[
 		v_{i1} \ge v_{i2} \ge \dots \ge v_{im_1} ~.
 		\]
 		
 		\STATE Compute a $4$-approximation of the optimal revenue when $i$ is the only bidder, denoted as $\sr_i$, as follows (e.g., \cite{HuangMR15}):
 		\[
 		\textstyle
 		\sr_i = \max_{\frac{m_1}{2} \le j \le m_1} \frac{j}{m_1} \cdot v_{ij}
 		\]
 		
 		\STATE Let $\sr = 2 \sum_{i = 1}^{n} \sr_i$.
 		
 		\bigskip
 		
 		\item[] \textbf{Stage 2:}
 		\medskip
 		
 		\STATE Run the VCG mechanism with two copies of each bidder for $m_2 = \Theta\big(n \log(1/\gamma) \big)$ times, truncating values at larger than $4 \sr$ down to $4 \sr$.
 		\STATE Output the average revenue of the previous step, denoted as $\approxopt$.
 		\medskip
 	\end{algorithmic}
 	%
 	\caption{Computing an $O(1)$ approximation of the optimal revenue}
 	\label{alg:approxopt}
\end{algorithm}
 
We will first prove the following straightforward lemma that bounds the approximation guarantee from the first stage of the above algorithm.

\begin{lemma}
	\label{lem:napproxopt}
    Suppose $\sr$ is computed via Algorithm~\ref{alg:approxopt}.
    Then, with probability $1 - \gamma/2$, we have:
	\[
	\opt{\mathbf{D}} \le \sr \le 4n \cdot \opt{\mathbf{D}}
	\]
    This lemma holds generally in the downward-closed setting.
\end{lemma}

\begin{proof}
	First of all, step 2 is estimating the expected revenue of the $\frac{1}{2}$-guarded empirical pricing with $\Theta \big(\log(n/\gamma) \big)$, which gives a $4$-approximation $\sr_i$ of the optimal revenue of selling only to a single bidder $i$ for each $i \in [n]$ with probability at least $1 - \frac{\gamma}{2n}$ (e.g., \citet{HuangMR15}). 
	That is, we have:
	\[
	\frac{1}{2} \opt{D_i} \le \sr_i \le 2 \opt{D_i} ~.
	\]
	
	Then, by the union bound, the above holds for all $i$ with probability at least $1 - \frac{\gamma}{2}$.
	
    Note that $\opt{D_i} \le \opt{\Dbold}$ since ignoring all bidders other than $i$ and posting the monopoly price w.r.t.\ $i$ is a feasible mechanism for the case with $n$ bidders.
	Hence, we have:
%
	\[
	\sr = 2 \sum_{i=1}^n \sr_i \le 4 \sum_{i=1}^n \opt{D_i} \le 4n \cdot \opt{\Dbold} ~.
	\]
	
	On the other hand, $\sum_{i=1}^n \opt{D_i}$ equals the optimal revenue when the auctioneer has $n$ copies of the item, which is weakly highly than $\opt{\Dbold}$ when she has only one copy.
	Therefore, we have:
	\[
	\sr = 2 \sum_{i=1}^n \sr_i \ge \sum_{i=1}^n \opt{D_i} \ge \opt{\Dbold} ~.
	\]
\end{proof}
 
Now we are ready to analyze the approximation guarantee of the final output $\approxopt$ given by Algorithm~\ref{alg:approxopt}, and show that it is a constant approximation of $\opt{\Dbold}$ with high probability. 

\begin{lemma}
	\label{lem:approxopt}
    Suppose $\approxopt$ is computed via Algorithm~\ref{alg:approxopt}.
    Then, with probability $1 - \gamma$, we have:
	\[
	\frac{1}{8} \opt{\mathbf{D}} \le \approxopt \le 4 \opt{\mathbf{D}} ~.
	\]
    This lemma holds up to the matroid setting.
\end{lemma}

\begin{proof}
We will assume that the conclusion of Lemma~\ref{lem:napproxopt} holds, which happens with probability at least $1 - \frac{\gamma}{2}$.
Next, we show that under this assumption, $\approxopt$ satisfies the claimed approximation with probability at least $1 - \frac{\gamma}{2}$. 
The lemma then follows by the union bound.

Let $\Dboldbar$ be the distribution obtained by first sampling from $\Dbold$ and then truncating values higher than $4 \sr$ down to $4 \sr$ as in step 4 of the algorithm.
By Lemma~\ref{lem:napproxopt} we know that $\sr \ge \opt{\Dbold}$.
Then, by Lemma~\ref{lem:tailbound_regular} (with $\delta = \frac{1}{4}$), truncating values at $4 \sr \ge 4 \opt{\Dbold}$ decreases the revenue by at most a factor of $2$.
That is, we have:
\[
\opt{\Dboldbar} \ge \frac{1}{2} \opt{\Dbold} ~.
\]

Further, the truncated distributions are also regular due to the followings.
For any $i \in [n]$, the virtual value of $4\sr$ in the truncated distribution $\bar{D}_i$ is now $4\sr$, which is weakly higher than the virtual value in original distribution $D_i$, the virtual value $\phi_i(v) = v - \frac{1-F_i(v)}{f_i(v)}$ of any value $v < 4\sr$ remains unchanged by the definition of virtual values (e.g., for continuous value distributions, neither $F_i(v)$ nor $f_i(v)$ is affected by the truncation).
Putting together we get that $\phi_i$ is still monotonically non-decreasing for all $i \in [n]$.

As a result, the expected revenue of running the VCG mechanism with two copies of each bidder $i$ with values from $\Dboldbar$, denoted as $\overline{\textsc{VCG}}$, is a $2$-approximation of the optimal revenue of $\Dboldbar$ (\citet{HR09}, Theorem~4.4).
That is, we have:
\begin{equation}
\label{eqn:approxopt_1}
\overline{\textsc{VCG}} \ge \frac{1}{2} \opt{\Dboldbar} \ge \frac{1}{4} \opt{\Dbold} ~,
\end{equation}

On the other hand, having two copies of each bidder at best doubles the optimal revenue.
Hence, we have that:
\begin{equation}
\label{eqn:approxopt_2}
\overline{\textsc{VCG}} \le 2 \opt{\Dboldbar} \le 2 \opt{\Dbold} ~.
\end{equation}

Finally, the maximum value in the support of $\Dboldbar$ is %
\begin{align*}
4 \sr & \le 16n \cdot \opt{\Dbold} && \text{(Lemma~\ref{lem:napproxopt})} \\
& \le 64 n \cdot \overline{\textsc{VCG}} ~, && \text{(Eqn.~\eqref{eqn:approxopt_1})}
\end{align*}
which also upper bounds the revenue from each sample run in step 4 of the algorithm.
In other words, the maximum revenue from a single sample run is at most $O(n)$ times larger than the expected revenue of the VCG mechanism with two copies of each bidder.
Therefore, with $m_2 = \Theta \big(n \log(1/\gamma) \big)$ sample runs, Bernstein inequality shows that with probability at least $1 - \gamma$, we have:
\[
\frac{1}{2} \cdot \overline{\textsc{VCG}} \le \approxopt \le 2 \cdot \overline{\textsc{VCG}} ~.
\]

Putting together with Eqn.~\eqref{eqn:approxopt_1} and Eqn.~\eqref{eqn:approxopt_2} proves the lemma.
%
%
\end{proof}

\subsubsection{Sample Complexity Upper Bound}
\label{sec:regular_algorithm}

The final algorithm is essentially what we have sketched at the beginning of Section~\ref{sec:cr_regular}, replacing the precise value of $\opt{\Dbold}$ with the constant approximation $\approxopt$ obtained through Algorithm~\ref{alg:approxopt} in the previous subsection.
This is formalized as Algorithm~\ref{alg:empirical_myerson_regular}.

\begin{algorithm}[t]
 	\begin{algorithmic}[1]
	  \STATE \textbf{Parameters:~} $\epsilon \in (0, 1)$ (multiplicative revenue loss); $\gamma > 0$ (failure probability).
      \STATE \textbf{Input:~} $m$ i.i.d.\ samples $\vec{v}_1, \dots, \vec{v}_{m} \sim \vec{D}$.
 	  \STATE Compute $\approxopt$ using Algorithm \ref{alg:approxopt} with $\Theta \big( n \log(1/\gamma) \big)$ samples and failure probability $\frac{\gamma}{2}$.
 	  \STATE Let $\delta = \frac{\epsilon}{32}$.
	  \STATE Let $V = \big\{ 0 \big\} \cup \big\{ \delta \approxopt, (1 + \delta) \delta \approxopt, (1 + \delta)^2 \delta \approxopt, \dots, \frac{1}{\delta} \approxopt \big\}$.
 	  \STATE Run Algorithm~\ref{alg:empirical_myerson_finite} with the remaining samples, rounding values down to the ones in $V$.
 	  \STATE \textbf{Output:~} The empirical Myerson auction outputted by Algorithm~\ref{alg:empirical_myerson_finite}, treating any value\\ as the closest value in $V$ from below.
 	\end{algorithmic}
 	\caption{Empirical Myerson (Regular Distributions)}
 	\label{alg:empirical_myerson_regular}
\end{algorithm}
 
\begin{theorem}
 	\label{thm:CR_regular}
 	Suppose the value distribution $\vec{D}$ is regular. 
 	Then, for any $0 < \epsilon < 1$, Algorithm \ref{alg:empirical_myerson_regular} takes $m$ samples and learns a mechanism with revenue at least $\big(1 - \epsilon\big) \opt{\vec{D}}$ with probability at least $1 - \gamma$ if $m$ is at least:
 	%
 	\[
        \Theta \big( \epsilon^{-3} ( n \epsilon^{-1} \log \epsilon^{-1} \log (n \epsilon^{-1}) + \log \gamma^{-1} ) \big)
        = 
        \tilde{\Theta} \big( n \epsilon^{-4} \big)
        ~.
 	\]
\end{theorem}

\begin{proof}
Let $\Dboldbar$ be the distribution obtained by first sampling from $\Dbold$ and than rounding values larger than  $\bar{v} = \frac{1}{\delta} \approxopt \ge \frac{1}{8\delta} \opt{\vec{D}}$ down to $\bar{v}$.
Lemma~\ref{lem:tailbound_regular} gives:
\[
\opt{\Dboldbar} \ge (1 - 16 \delta) \opt{\Dbold} ~.
\]

Let $\Dboldtilde$ be the distribution obtained by first sampling from $\Dboldbar$ and then rounding values smaller than $\delta \approxopt$ down to $0$.
Since a bidder with value less than $\delta \approxopt$ cannot pay more than her value, we have:
\begin{align*}
\opt{\Dboldtilde} & \ge \opt{\Dboldbar} - \delta \approxopt \\
& \ge \opt{\Dboldbar} - 4 \delta \opt{\Dbold} ~.
\end{align*}

Let $\Dboldhat$ be the distribution obtained by rounding samples from $\Dboldtilde$ down to the closest power of $1+\delta$.
Lemma~\ref{lem:CR_bounded_multiplicative} gives:
\[
\opt{\Dboldhat} \ge (1-\delta) \opt{\Dboldtilde} ~.
\]

By Theorem~\ref{thm:CR_finite}, with $h = \frac{1}{\delta} \approxopt \le \frac{4}{\delta} \opt{\vec{D}}$, $|V| = O \big( \frac{\log(1/\delta)}{\delta} \big)$, $\alpha = \opt{\Dbold}$, and $\beta = \delta \opt{\Dbold}$, and failure probability $\frac{\gamma}{2}$, running Algorithm~\ref{alg:empirical_myerson_finite} with at least:
\[
\Theta \big( \delta^{-3} \big( n \delta^{-1} \log(\tfrac{1}{\delta}) \log (\tfrac{n}{\delta}) + \log(\tfrac{1}{\delta \gamma}) \big) \big)
\]
samples from $\Dboldhat$ returns a mechanism $\Mhat$ that, with probability at least $1 - \frac{\gamma}{2}$, gets an expected revenue at least:
\[
\rev{\Mhat}{\Dboldhat} \ge \opt{\Dboldhat} - \delta \opt{\Dbold} ~.
\]

Finally, note that if we run $\Mhat$ on $\Dbold$ rounding values down to the closest element of $V$, effectively we are running $\Mhat$ on $\Dboldhat$ by the definition of $\Dboldhat$.
Hence, we have:
\[
\rev{\Mhat}{\Dbold} = \rev{\Mhat}{\Dboldhat} ~.
\]

Putting together the sequence of equations and $\delta = \frac{\epsilon}{32}$ proves the desired sample complexity upper bound for regular value distributions.
%
\end{proof}

\subsubsection{Proof of Lemma~\ref{lem:tailbound_regular}}
\label{sec:tailbound}

Recall that we seek to prove the lemma not only for the single-item setting which is the focus of this section, but also more generally for an arbitrary problem in the downward-closed setting.

When there is only one bidder and, thus, we may assume wlog that there is a single item, the lemma is folklore as it follows by concavity of the revenue curves of regular distributions.
What are the extra challenges with multiple bidders?
If the bidders' values are all below the truncation point $\bar{v}$, the virtual values are the same in both $\Dbold$ and $\Dboldbar$ and, thus, the revenue remains the same.
If exactly one of the bidders' value is larger than $\bar{v}$, the analysis is similar to the single bidder case.
It remains to bound the revenue when more than one bidders' values are larger than $\bar{v}$, in which case the competition among the bidders may give an extra edge to the untruncated distribution $\Dbold$ over the truncated one $\Dboldbar$.
The key observation is that, the condition of $\bar{v} \ge \frac{1}{\delta} \opt{\Dbold}$ implies that truncations are so rare such that the difference in revenue from the third case can be charged to the revenue of $\Dboldbar$ from the second case.
Below we formalize this intuition.

\paragraph{Truncations are Rare.}
Let $\bar{q}_i$ denote the quantile of value $\bar{v}$ w.r.t.\ $D_i$ for any bidder $1 \le i \le n$.
We abuse notation and let $\phi_i(q) = \phi_i(v_i(q))$ and $\bar{\phi}_i(q) = \bar{\phi}_i(\bar{v}_i(q))$ denote the virtual value of the corresponding value $v_i(q)$ and $\bar{v}_i(q)$ with quantile $q$ in $D_i$ and $\bar{D}_i$ respectively.
Then, we have:
\[
\bar{\phi}_i(q_i) = 
\begin{cases}
\phi_i(q_i) & \text{if $\bar{q}_i < q_i \le 1$;} \\
\bar{v} & \text{if $0 \le q_i \le \bar{q}_i$.}
\end{cases}
\]
%

First, observe the following straightforward bound of $\bar{q}_i$'s.

\begin{lemma}
	\label{lem:highvalueprob1_app}
	For any bidder $i \in [n]$, we have:
	\[
        \bar{q}_i \le \delta 
        ~.
	\]
\end{lemma}

\begin{proof}
	Note that we could have ignored all bidders other than $i$ and offer a take-it-or-leave-it price of $\bar{v}$ to bidder $i$.
	The resulting revenue would be $\bar{q}_i \bar{v}$.
	This must be at most $\opt{\Dbold}$.
	The lemma then follows by $\bar{v} \ge \frac{1}{\delta} \opt{\Dbold}$.
\end{proof}

We further show the following more refined bounds on $\bar{q}_i$'s. 

\begin{lemma}
\label{lem:highvalueprob2_main}
The probability that at least one buyer has value greater than or equal to $\bar{v}$ is at most $\delta$ and is at least $(1 - \delta) \sum_{i \in [n]} \bar{q}_i$, i.e.,
\[
\delta \ge 1 - \prod_{i = 1}^n (1 - \bar{q}_i) \ge (1 - \delta) \sum_{i = 1}^n \bar{q}_i ~.
\]
\end{lemma}

\begin{proof}
	We first show the first inequality.
	Suppose we offer a take-it-or-leave-it price of $\bar{v}$ to the bidders one by one in lexicographical order and give it to the first bidder who accepts the offer (if such a bidder exists).
	Then, the expected revenue equals the price $\bar{v}$ multiplied by the probability that there is at least one bidder with value at least $\bar{v}$.
	The latter equals
	\[
        1 - \prod_{i=1}^n (1 - \bar{q}_i) ~.
	\] 
	
	Note that this expected revenue must be no larger than the optimal $\opt{\Dbold}$.
	The inequality then follows by $\bar{v} \ge \frac{1}{\delta} \opt{\Dbold}$.
	
	Next, we turn to the second inequality.
	The probability that there is at least one bidder with value at least $\bar{v}$ is lower bounded by the probability that there is exactly one bidder with value at least $\bar{v}$.
	That is, we have:
	\[
	1 - \prod_{i=1}^n (1 - \bar{q}_i) \ge \sum_{i = 1}^n \bar{q}_i \prod_{j \ne i} (1 - \bar{q}_j)
	\]
	
	Further, note that for any bidder $i \in [n]$, we have:
	\[
	\prod_{j \ne i} (1 - \bar{q}_j) \ge \prod_{j = 1}^n (1 - \bar{q}_j) \ge 1 - \delta ~.
	\]
	
	Here, the last inequality is what we have shown in the first part of the proof.
	Therefore, the second inequality in the lemma follows.
\end{proof}

\paragraph{Charging Argument.}
Note that both $\Dbold$ and $\Dboldbar$ are regular.
In both cases, the revenue optimal auction is Myerson's auction, which allocates to the bidder with the largest non-negative virtual value.
Further, the revenue is the virtual welfare.
Define $z^+ = \max \{z, 0\}$ to simplify the notations in the following arguments.
We have:
\[
    \opt{\Dbold} = \int_{[0, 1]^n} \max_{\vec{x} \in \allocset} \sum_{i \in \vec{x}} \phi_i(q_i) d \vec{q} 
    ~,
\]
and
\[
    \opt{\Dboldbar} = \int_{[0, 1]^n} \max_{\vec{x} \in \allocset} \sum_{i \in \vec{x}} \bar{\phi}_i(q_i) d \vec{q} 
    ~.
\]
     
We partition the quantile space $[0, 1]^n$ into two parts, the area with at least one high value, i.e., $[0, 1]^n \setminus \prod_{i} (\bar{q}_i, 1]$, and the area with all low values, i.e., $\prod_{i} (\bar{q}_i, 1]$.
We will account for their contributions to $\opt{\Dboldbar}$ separately.
\[
    \opt{\Dboldbar} = \int_{[0, 1]^n \setminus \prod_{i} (\bar{q}_i, 1]} \max_{\vec{x} \in \allocset} \sum_{i \in \vec{x}} \bar{\phi}_i(q_i) d \vec{q} + \int_{\prod_{i} (\bar{q}_i, 1]} \max_{\vec{x} \in \allocset} \sum_{i \in \vec{x}} \bar{\phi}_i(q_i) d \vec{q}
    ~.
\]

On the other hand, we will upper bound the optimal revenue w.r.t.\ $\Dbold$ by allowing it to choose two feasible sets of winners, one from the bidders whose values are at least $\bar{v}$, and the other from the bidders whose values are less than $\bar{v}$.
Concretely, given a quantile vector $\vec{q}$, let:
\[
    H(\vec{q}) = \big\{ i \in [n] : q_i \le \bar{q}_i \big\}
\]
denote the subset of bidders whose values are at least $\bar{v}$. 
Let:
\[
    L(\vec{q}) = [n] \setminus H(\vec{q})
\]
denote the set of bidders whose values are less than $\bar{v}$.
Then, we have:
\[
    \opt{\Dbold} \le \int_{[0, 1]^n} \max_{\vec{x} \in \allocset} \sum_{i \in \vec{x} \cap H(\vec{q})} \phi_i(q_i) d \vec{q} + \int_{[0, 1]^n} \max_{\vec{x} \in \allocset} \sum_{i \in \vec{x} \cap L(\vec{q})} \phi_i(q_i) d \vec{q}
    ~.
\]



It suffices to show the following inequalities:
%
\begin{equation}
    \label{eqn:tailboundhigh_main}
    \int_{[0, 1]^n \setminus \prod_{i} (\bar{q}_i, 1]} \max_{\vec{x} \in \allocset} \sum_{i \in \vec{x}} \bar{\phi}_i(q_i) d \vec{q}
    \ge 
    (1 - 2 \delta) \int_{[0, 1]^n} \max_{\vec{x} \in \allocset} \sum_{i \in \vec{x} \cap H(\vec{q})} \phi_i(q_i) d \vec{q}
    ~,
\end{equation}
and 
\begin{equation}
    \label{eqn:tailboundlow_main}
    \int_{\prod_{i} (\bar{q}_i, 1]} \max_{\vec{x} \in \allocset} \sum_{i \in \vec{x}} \bar{\phi}_i(q_i) d \vec{q}
    \ge 
    (1 - 2 \delta) \int_{[0, 1]^n} \max_{\vec{x} \in \allocset} \sum_{i \in \vec{x} \cap L(\vec{q})} \phi_i(q_i) d \vec{q}
    ~.
\end{equation}

%

\paragraph{Proof of Eqn.~\eqref{eqn:tailboundhigh_main}:}
Note that for any bidder $i$, whenever its quantile satisfies that $q_i < \bar{q}_i$, we have $\bar{\phi}_i(q_i) = \bar{v}$.
Thus, for any quantile profile $\vec{q} \in [0, 1]^n \setminus \prod_i (\bar{q}_i, 1]$, we have:
\[
    \max_{\vec{x} \in \allocset} \sum_{i \in \vec{x}} \bar{\phi}_i(q_i) d \vec{q} \ge \bar{v}
    ~.
\]

Hence, the left-hand-side of Eqn.~\eqref{eqn:tailboundhigh_main} is lower bounded by:
\begin{align*}
    \int_{[0, 1]^n \setminus \prod_{i} (\bar{q}_i, 1]} \max_{\vec{x} \in \allocset} \sum_{i \in \vec{x}} \bar{\phi}_i(q_i) d \vec{q}
    & 
    \ge \bigg( 1 - \prod_{i \in [n]} (1 - \bar{q}_i) \bigg) \bar{v} \\
& \ge (1 - \delta) \sum_{i \in [n]} \bar{q}_i \bar{v} ~. && \text{(Lemma~\ref{lem:highvalueprob2_main})}
\end{align*}
%

On the other hand, the right-hand-side (omitting the $1 - 2 \delta$ factor) is upper bounded by:
\begin{align*}
    \int_{[0, 1]^n} \max_{\vec{x} \in \allocset} \sum_{i \in \vec{x} \cap H(\vec{q})} \phi_i(q_i) d \vec{q}
    &
    \le \int_{[0, 1]^n} \sum_{i \in H(\vec{q})} \big( \phi_i(q_i) \big)^+ d \vec{q} \\
    &
    = \sum_{i = 1}^n \int_{[0, \bar{q}_i]} \big( \phi_i(q_i) \big)^+ d q_i 
    ~.
\end{align*}


Therefore, to show Eqn.~\eqref{eqn:tailboundhigh_main}, it suffices to show that for any bidder $i \in [n]$:
\[
\bar{q}_i \bar{v} \ge (1 - \delta) \int_{[0, \bar{q}_i]} \big( \phi_i(q_i) \big)^+ d q_i ~.
\]

Let $R_i(q)$ denote the revenue of a single-bidder auction with a reserve price that has quantile $q$ w.r.t.\ $D_i$.
The left-hand-side of the above inequality is the revenue of a single-bidder auction with reserve price $\bar{v}$ w.r.t.\ $D_i$, i.e., $R_i(\bar{q}_i)$.
The right-hand-side is the maximum revenue of a single-bidder auction with a reserve price at least $\bar{v}$, i.e., $\max_{q \in [0, \bar{q}_i]} R_i(q)$.
Note that $D_i$ is regular, which implies that $R_i(q)$ is a concave function over $[0, 1]$ and $R_i(1) \ge 0$.
Thus, we have that $R_i(\bar{q}_i) \ge (1 - \bar{q}_i) \max_{q \in [0, \bar{q}_i]} R_i(q)$.
The inequality then follows by $\bar{q}_i \le \delta$ (Lemma \ref{lem:highvalueprob1_app}).

\paragraph{Proof of Eqn.~\eqref{eqn:tailboundlow_main}:}
Note that for any $\vec{q} \in \prod_{j \in [n]} (\bar{q}_j, 1]$, we have $\bar{\phi}_i(q_i) = \phi_i(q_i)$ for all $i \in [n]$, and $L(\vec{q}) = [n]$.
So we can rewrite the left-hand-side of Eqn.~\eqref{eqn:tailboundlow_main} as follows:
\begin{align*}
    \int_{\prod_j (\bar{q}_j, 1]} \max_{\vec{x} \in \allocset} \sum_{i \in \vec{x}} \bar{\phi}_i(q_i) d \vec{q}
    &
    = \int_{\prod_j (\bar{q}_j, 1]} \max_{\vec{x} \in \allocset} \sum_{i \in \vec{x}} \phi_i(q_i) d \vec{q} \\
    &
    = \int_{\prod_j (\bar{q}_j, 1]} \max_{\vec{x} \in \allocset} \sum_{i \in \vec{x} \cap L(\vec{q})} \phi_i(q_i) d \vec{q}
    ~.
\end{align*}

Comparing the above integration with that in the right-hand-side of Eqn.~\eqref{eqn:tailboundlow_main}, the only difference lies in the domain.
The formal is over $\prod_{j=1}^n (\bar{q}_j, 1]$ while the latter is over the entire quantile space $[0, 1]^n$.
It remains to bound their difference by a factor of $1 - 2\delta$.

We will do so with a hybrid argument.
Let $Q_0 = [0, 1]^n, Q_1 = (\bar{q}_1, 1] \times [0, 1]^{n-1}, \dots, Q_n = \prod_{j = 1}^n (\bar{q}_j, 1]$.
We will show that for any $j \in [n]$:
\begin{equation}
    \label{eqn:tailbound_hybrid}
    \int_{Q_j} \max_{\vec{x} \in \allocset} \sum_{i \in \vec{x} \cap L(\vec{q})} \phi_i(q_i) d \vec{q} 
    \ge
    (1 - \bar{q}_j) \int_{Q_{j-1}} \max_{\vec{x} \in \allocset} \sum_{i \in \vec{x} \cap L(\vec{q})} \phi_i(q_i) d \vec{q} 
    ~.
\end{equation}

For any $\vec{q}_{-j} \in [0, 1]^{n-1}$, any $q_i < \bar{q}_i \le \hat{q}_i$, consider $\vec{q} = (q_i, \vec{q}_{-i})$ and $\vec{\hat{q}} = (\hat{q}_i, \vec{q}_{-i})$. 
We have that $L(\vec{q}) \subset L(\vec{\hat{q}})$, and for any $j \in L(\vec{q})$, $q_j = \hat{q}_j$.
Thus, we get the following inequality:
\[
\max_{i \in L(\vec{q})} \phi_i(q_i) \le \max_{i \in L(\vec{\hat{q}})} \phi_i(\hat{q}_i) ~,
\]
which implies Eqn.~\eqref{eqn:tailbound_hybrid}.

Then, putting together Eqn.~\eqref{eqn:tailbound_hybrid} for $j = 1, 2, \dots, n$, we get that:
\[
    \int_{\prod_j (\bar{q}_j, 1]} \max_{\vec{x} \in \allocset} \sum_{i \in \vec{x} \cap L(\vec{q})} \phi_i(q_i) d \vec{q} 
    \ge 
    \prod_{j = 1}^n (1 - \bar{q}_j) \int_{[0, 1]^n} \max_{\vec{x} \in \allocset} \sum_{i \in \vec{x} \cap L(\vec{q})} \phi_i(q_i) d \vec{q} 
    ~.
\]

Eqn.~\eqref{eqn:tailboundlow_main} now follows by Lemma~\ref{lem:highvalueprob2_main}.


\subsection{MHR Distributions}

In this subsection, we show an $\tilde{O} \left( n \epsilon^{-3} \right)$ upper bound on the sample complexity of learning a mechanism that is a $1 - \epsilon$-approximation when the value distribution $\Dbold$ is MHR. 
The algorithm is almost identical to that for regular distributions, except that we have a better tail bound due to the MHR assumption.
In particular, we will make use of the following extreme value theorem by~\citet{CaiD15extreme}.

\begin{lemma}[\citet{CaiD15extreme}, Theorem~18]
	\label{lem:tailbound_mhr}
	For any MHR value distribution $\Dbold$, any $0 < \delta < \frac{1}{4}$, any $\bar{v} \ge C \cdot \log(\frac{1}{\delta}) \opt{\mathbf{D}}$ for a sufficiently large constant $C$, 
	let $\Dboldbar$ be the distribution obtained by first sampling from $\Dbold$ and then rounding values larger than $\bar{v}$ down to $\bar{v}$.
	Then, we have that:
	\[
	\opt{\Dboldbar} \ge \big(1 - \delta \big) \opt{\Dbold}
	\]
\end{lemma}

We present the formal definition of the algorithm in Algorithm~\ref{alg:empirical_myerson_mhr}.
Both the algorithm and its analysis are almost verbatim to their counterparts for regular value distributions, differing only in the upper bound above which we truncate the values, and the corresponding parameters when we apply the base-case theorem (Theorem~\ref{thm:CR_finite}).

\begin{algorithm}[t]
 	\begin{algorithmic}[1]
	  	\STATE \textbf{Parameters:~} $\epsilon \in (0, 1)$ (multiplicative revenue loss); $\gamma > 0$ (failure probability).
	  	
        \STATE \textbf{Input:~} $m$ i.i.d.\ samples $\vec{v}_1, \dots, \vec{v}_{m} \sim \vec{D}$.
		
		\STATE Compute $\approxopt$ using Algorithm \ref{alg:approxopt} with $\Theta \big( n \log(1/\gamma) \big)$ samples and failure probability $\frac{\gamma}{2}$.
		
		\STATE Let $\delta = \frac{\epsilon}{8}$.
		
		\STATE Let $V = \big\{ 0 \big\} \cup \big\{ \delta \approxopt, (1 + \delta) \delta \approxopt, (1 + \delta)^2 \delta \approxopt, \dots, 8C \cdot \log(\frac{1}{\delta}) \approxopt \big\}$.
		
		\STATE Run Algorithm~\ref{alg:empirical_myerson_finite} with the remaining samples, rounding values down to the ones in $V$.
		
		\STATE \textbf{Output:~} The empirical Myerson auction outputted by Algorithm~\ref{alg:empirical_myerson_finite}, treating any value\\ as the closest value in $V$ from below.
 	\end{algorithmic}
 	\caption{Empirical Myerson (MHR Distributions)}
 	\label{alg:empirical_myerson_mhr}
\end{algorithm}

\begin{theorem}
 	\label{thm:CR_mhr}
	Suppose the value distribution $\vec{D}$ is MHR. 
 	Then, for any $0 < \epsilon < 1$, Algorithm \ref{alg:empirical_myerson_mhr} takes $m$ samples and learns a mechanism with revenue at least $\big(1 - \epsilon\big) \opt{\vec{D}}$ with probability at least $1 - \gamma$ if $m$ is at least:
 	\[
 	\Theta \big( \epsilon^{-2} \log(\tfrac{1}{\epsilon}) \big( n \epsilon^{-1} \log(\tfrac{1}{\epsilon}) \log (\tfrac{n}{\epsilon}) + \log(\tfrac{1}{\epsilon \gamma}) \big) \big) ~.
 	\]
\end{theorem}

\begin{proof}
Let $\Dboldbar$ be the distribution obtained by first sampling from $\Dbold$ and than rounding values larger than  $\bar{v} = 8 C \cdot \log(\frac{1}{\delta}) \approxopt \ge C \cdot \log(\frac{1}{\delta}) \opt{\vec{D}}$ down to $\bar{v}$.
Lemma~\ref{lem:tailbound_mhr} gives:
\[
\opt{\Dboldbar} \ge (1 - \delta) \opt{\Dbold} ~.
\]

Let $\Dboldtilde$ be the distribution obtained by first sampling from $\Dboldbar$ and then rounding values smaller than $\delta \approxopt$ down to $0$.
Since a bidder with value less than $\delta \approxopt$ cannot pay more than her value, we have:
\begin{align*}
\opt{\Dboldtilde} & \ge \opt{\Dboldbar} - \delta \approxopt \\
& \ge \opt{\Dboldbar} - 4 \delta \opt{\Dbold} ~.
\end{align*}

Let $\Dboldhat$ be the distribution obtained by rounding samples from $\Dboldtilde$ down to the closest power of $1+\delta$.
Lemma~\ref{lem:CR_bounded_multiplicative} gives:
\[
\opt{\Dboldhat} \ge (1-\delta) \opt{\Dboldtilde} ~.
\]

By Theorem~\ref{thm:CR_finite}, with $h = 2C \cdot \log(\frac{1}{\delta}) \opt{\Dbold} \le 8 C \cdot \log(\frac{1}{\delta}) \opt{\Dbold}$, $|V| = O \big( \frac{\log(1/\delta)}{\delta} \big)$, $\alpha = \opt{\Dbold}$, and $\beta = \delta \opt{\Dbold}$, and failure probability $\frac{\gamma}{2}$, running Algorithm~\ref{alg:empirical_myerson_finite} with at least:
\[
\Theta \big( \delta^{-2} \log(\tfrac{1}{\delta}) \big( n \delta^{-1} \log(\tfrac{1}{\delta}) \log (\tfrac{n}{\delta}) + \log(\tfrac{1}{\delta \gamma}) \big) \big)
\]
samples from $\Dboldhat$ returns a mechanism $\Mhat$ that, with probability at least $1 - \frac{\gamma}{2}$, gets an expected revenue at least:
\[
\rev{\Mhat}{\Dboldhat} \ge \opt{\Dboldhat} - \delta \opt{\Dbold} ~.
\]

Finally, note that if we run $\Mhat$ on $\Dbold$ rounding values down to the closest element of $V$, effectively we are running $\Mhat$ on $\Dboldhat$ by the definition of $\Dboldhat$.
Hence, we have:
\[
\rev{\Mhat}{\Dbold} = \rev{\Mhat}{\Dboldhat} ~.
\]

Putting together the sequence of equations and $\delta = \frac{\epsilon}{8}$ proves the desired sample complexity upper bound for MHR value distributions.
\end{proof}

\section{\Signalsmodel}
\label{sec:signal}

\subsection{Single Bidder Upper Bound	}
\label{sec:signal_ub_single_agent}

In this subsection we present a proof of the $O \big( q(\epsilon)^{-1} \epsilon^{-3} \log (\tfrac{1}{\epsilon q(\epsilon)}) \big)$ upper bound for the single bidder case of the \signalsmodel, i.e., $n=1$.
In the special case of a single bidder, an auction is simply a posted price.
The main challenge is the scarcity of samples with the same signal as the buyer (in fact, there might be none).
The idea is to find an auxiliary distribution such that (1) we have enough samples from it to find a nearly optimal price for it, and (2) it is stochastically dominated by the buyer's prior.
Then, we can lower bound the revenue of the auction with the optimal revenue of the auxiliary distribution.
To this end, we consider a number of signals in the samples that are just below the signal of the bidder, and use the mixture of the corresponding distributions as our auxiliary distribution. 
The price posted is the best price for the empirical distribution of values corresponding to these signals, guarded against using too few of these values. The algorithm is summarized in Algorithm \ref{alg:SingleBuyerSignalUB}.


\begin{algorithm}
\begin{algorithmic}[1]
\STATE \textbf{Samples:~} 
$(\hat{v}_1, \hat{\sigma}_1), (\hat{v}_2, \hat{\sigma}_2), \dots, (\hat{v}_m, \hat{\sigma}_m)$, where $\hat{\sigma}_1 \geq \hat{\sigma}_2 \geq \dots \geq \hat{\sigma}_m$.

\STATE \textbf{Parameter:~}
$\epsilon \in (0, 1)$ (multiplicative revenue loss)

\STATE \textbf{Input:~} 
Signal $\sigma$ (the value $v$ is unobserved); suppose $\hat{\sigma}_k \geq \sigma \geq \hat{\sigma}_{k+1}$.

\medskip

\STATE Let $c = \Theta \big( \epsilon^{-3} \log (\tfrac{m}{\epsilon}) \big)$, and $\ell = \min \left\lbrace c , m - k \right\rbrace$.

\STATE Post {\em the $\epsilon$-guarded empirical reserve price} w.r.t.\ $v_{k+1}, v_{k+2}, \dots, v_{k+\ell}$, defined as:
\[
p_\emp \defeq \argmax_{\hat{v}_{k + j} : \epsilon \ell \le j \le \ell} \hat{v}_{k + j} \cdot \big| \big\{ \hat{v}_{k + i} : \hat{v}_{k + i} \ge \hat{v}_{k + j}, 1 \le i \le \ell \big\} \big|
\]
\end{algorithmic}
\caption{$1 - O(\epsilon)$ approximation for the single bidder case in the \signalsmodel}
\label{alg:SingleBuyerSignalUB}
\end{algorithm}



\begin{theorem}
	\label{thm:signals_single}
    The expected revenue of Algorithm~\ref{alg:SingleBuyerSignalUB} is a $1 - O(\epsilon)$ approximation in the \signalsmodel with $m$ samples from regular and $q$-bounded distributions if $m$ is at least:
	\[
	\Theta \big( q(\epsilon)^{-1} \epsilon^{-3} \log (\tfrac{1}{\epsilon
 q(\epsilon)}) \big) 
	~.
	\] 
\end{theorem}

\subsubsection{The Analysis in an Idealized World}

Before we dive into the technical details of the actual analysis, let us first considered an idealized world assuming the followings:
\begin{enumerate}
	\item $\hat{\sigma}_i = \frac{m-i}{m-1}$; and
	\item the signal is uniformly distributed over $\hat{\sigma}_1, \hat{\sigma}_2, \dots, \hat{\sigma}_m$.
\end{enumerate} 
In other words, we assume that there is exactly one sample for each signal in the support, and the underlying signal distribution is a uniform one over its support.
The sample values $\hat{v}_i$'s are still independent samples from the corresponding distributions.
The actual signal $\sigma$ of the bidder, as well as her value, are independently sampled from the corresponding distributions.
The idealized assumption allows us to focus on the first idea of our analysis, namely, instead of aiming for an expected revenue comparable to $\opt{D^{\sigma}}$, it suffices to lower bound the revenue of the algorithm with the optimal revenue w.r.t.\ to a sample signal that is not too far from $\sigma$ from below

We first introduce the guarded optimal revenue.
Let $\opteps{\epsilon}{D}$ denote the $\epsilon$-guarded optimal expected revenue w.r.t.\ a distribution $D$ (whose cdf is denoted as $F$).
That is, we define:
\[
\opteps{\epsilon}{D} \defeq \max_{p : 1 - F(p) \ge \epsilon} p \big(1 - F(p) \big)
~.
\]

We need two lemmas from previous works.
We present the proof sketches of these lemmas for completeness.

\begin{lemma}
	[E.g., \citet{Dhangwatnotai2014revenue}]
	\label{lem:regular_opt_eps}
	For any regular distribution $D$, we have:
	\[
	\opteps{\epsilon}{D} \ge (1 - \epsilon) \opt{D} 
	~.
	\]
\end{lemma}

\begin{proof}[Proof Sketch]
    If the optimal price w.r.t.\ distribution $D$ has a sale probability, a.k.a., its quantile, of at least $\epsilon$, we have $\opteps{\epsilon}{D} = \opt{D}$.
    Otherwise, suppose the optimal price w.r.t.\ distribution $D$ has a quantile less than $\epsilon$.
	By the assumption that $D$ is regular, we get that the expected revenue as a function of the quantile of the price is a concave function.
	By concavity, we get that the expected revenue of a price with quantile $\epsilon$ is at least a $1 - \epsilon$ approximation.
	So the lemma follows.
\end{proof}

For any given distribution $D$, suppose there are $m$ samples $v_1 \ge v_2 \ge \dots \ge v_m$.
The $\epsilon$-guarded empirical pricing is the optimal price w.r.t.\ a uniform distribution over the $m$ samples values subject to having sale probability at least $\epsilon$ (e.g., \citet{Dhangwatnotai2014revenue}). 
That is, we define it to be:
\[
\argmax_{v_j : j \ge \epsilon m} v_j \cdot j 
~.
\]

In particular, the price $p_\emp$ in Algorithm~\ref{alg:SingleBuyerSignalUB} is the $\epsilon$-guarded empirical price w.r.t.\ the a distribution obtained by mixing $\dsigmahat{k+1}, \dsigmahat{k+1}, \dots, \dsigmahat{k+\ell}$.

\begin{lemma}[E.g., \citet{HuangMR15}]
	\label{lem:delta_guarded}
	For any given distribution $D$, the $\epsilon$-guarded empirical price with $m$ samples is a $1 - \epsilon$ approximation with probability at least $1 - \gamma$ if $m$ is at least:
	\[
	\Theta \big( \epsilon^{-3} \log( \tfrac{1}{\epsilon \gamma} ) \big)
	~.
	\]
\end{lemma}

\begin{proof}[Proof Sketch]
    For any price with quantile at least $\epsilon$, by Bernstein inequality, we get that its quantile w.r.t.\ the empirical distribution is within a $1 \pm \frac{\epsilon}{2}$ factor of its quantile w.r.t.\ the true distribution $D$ with probability at least $1 - \gamma$, when we have $m \ge \Theta \big( \epsilon^{-3} \log( \tfrac{1}{\epsilon \gamma} ) \big)$ samples.
	As a result, the $\epsilon$-guarded empirical price from these samples is a $1 - \epsilon$ approximation.
\end{proof}

Equipped with these lemmas, we first establish the following lemma which lower bounds the expected revenue \emph{conditioned on the realization of signals}.
We stress that this lemma does not use the idealized assumption made at the beginning of this subsection and will be reused in the actual proof of Theorem~\ref{thm:signals_single} in the next subsection.

\begin{lemma}
	\label{lem:pemp}
    For any $1 \le k \le m - c$, conditioned on the realization of sample signals $\hat{\sigma}_1, \hat{\sigma}_2, \dots, \hat{\sigma}_m$, and the actual signal $\sigma$ of the bidder, which is sandwiched between $\hat{\sigma}_k$ and $\hat{\sigma}_{k+1}$, the expected revenue of the empirical price $p_\emp$ on $\dsigmahat{k}$ is at least a $1 - 2\epsilon$ approximation to the optimal revenue w.r.t.\ $\dsigmahat{k+c}$ with probability at least $1 - \frac{\epsilon}{m}$.
	That is, we have:
	\begin{equation}
		\rev{p_\emp}{\dsigma{}} \geq (1 - 2\epsilon) \opt{\dsigmahat{k+c}}
		~.
	\end{equation}
	%
\end{lemma}

\begin{proof}
    By $k \le m - c$, we have $\ell = \min \{ m - k, c \} = c$.
	Let $\widehat{D}$ denote the mixture of distributions $\dsigmahat{k+1}, \dsigmahat{k+2}, \dots, \dsigmahat{k+c}$.
    By the stochastic dominance assumption on the distributions w.r.t.\ different signals, we have:
	\[
	\dsigmahat{k+1} \succeq \widehat{D} \succeq \dsigmahat{k+c} ~.
	\]
	
	The lemma follows from a sequence of inequalities as follows:
	\begin{align*}
	\rev{p_\emp}{\dsigma{}} 
    & \ge \rev{p_\emp}{\widehat{D}} && \text{($\sigma \ge \hat{\sigma}_{k+1}$, and $\dsigma{k+1} \succeq \widehat{D}$)} \\
    & \ge (1 - \epsilon) \opteps{\epsilon}{\widehat{D}} && \text{(Lemma~\ref{lem:delta_guarded})} \\[.5ex]
	& \ge (1 - \epsilon) \opteps{\epsilon}{\dsigmahat{k+c}} && \text{($\widehat{D} \succeq \dsigma{k+c}$)} \\[.5ex]
	& \ge (1 - \epsilon)^2 \opt{\dsigmahat{k+c}} && \text{(Lemma~\ref{lem:regular_opt_eps})} \\[.5ex]
	& \ge (1 - 2\epsilon) \opt{\dsigmahat{k+c}}
	~.
	\end{align*}

	Here, the second inequality holds with probability at least $1 - \frac{\epsilon}{m}$, while the other inequalities holds with certainty.

    There is a caveat. 
    The sample values $v_{k+1}, v_{k+1}, \dots, v_{k+c}$ are not i.i.d.\ samples from $\widehat{D}$.
    Instead, there is exactly one sample from each $\dsigmahat{k+i}$, $1 \le i \le c$, which form the mixture $\widehat{D}$.
    The reason that Lemma~\ref{lem:delta_guarded} still works is because its proof relies on a Bernstein inequality, with the revenue of a give price when the value is $v_{k+i}$, $1 \le i \le c$, as the random variables;
    it only requires (1) boundedness of the random variables and (2) the expectated average of the random variables is equal to the expected revenue of the price w.r.t.\ $\widehat{D}$.
    In particular, the second condition is satisfied by $v_{k+i}$, $1 \le i \le c$, even though they are not i.i.d.\ from $\widehat{D}$.
\end{proof}

By union bound, we may assume that the above lemma holds for all $k$ with probability at least $1 - \epsilon$.
Then, we can lower bound the expected revenue achieved by Algorithm~\ref{alg:SingleBuyerSignalUB} in the idealized world as follows:
\begin{align*}
	 \textsc{Alg} & = \frac{1}{m} \sum_{k = 1}^m \rev{p_\emp}{\dsigmahat{k} ~|~ \sigma = \hat{\sigma}_k} && \text{(idealized assumption)} \\
	 & \ge \frac{1}{m} \sum_{k = 1}^{m-c} \rev{p_\emp}{\dsigmahat{k} ~|~ \sigma = \hat{\sigma}_k} && \text{($\rev{\cdot}{\cdot} \ge 0$)} \\
	 & \ge \frac{1}{m} \sum_{k = 1}^{m-c} (1 - 2\epsilon) \opt{\dsigmahat{k+c}} && \text{(Lemma~\ref{lem:pemp})} \\
	 & = \frac{1}{m} \sum_{k = c+1}^m (1 - 2\epsilon) \opt{\dsigmahat{k}}
	 ~.
\end{align*}

Comparing the final bound (ignoring the $1 - 2\epsilon$ factor) with the optimal expected revenue, i.e., 
\[
\frac{1}{m} \sum_{k = 1}^m \opt{\dsigmahat{k}}
~,
\] 
according to the idealized assumption, the only difference is that the former drops the top $\frac{c}{m}$ fraction of the signals in the calculation.
When $m \ge \Theta \big( q(\epsilon)^{-1} \epsilon^{-3} \log (\tfrac{1}{\epsilon q(\epsilon)}) \big)$, this is at most a $q(\epsilon)$ fraction, which contributes at most an $\epsilon$ fraction of the optimal expected revenue by the assumption of $q$-boundedness.
Putting together, we get the $1 - 3\epsilon$ approximation guarantee with probability at least $1 - \epsilon$.
Hence, in expectation it is a $1 - 4 \epsilon$ approximation. 

\subsubsection{Proof of Theorem~\ref{thm:signals_single}}

Without the idealized assumption of having a set of perfectly representative sample signal as in the previous subsection, we need a more subtle treatment on the randomness of the signals. 
In particular, it is generally not true that $\sigma$ has the same probability of being sandwiched between $\hat{\sigma}_k$ and $\hat{\sigma}_{k+1}$ for all $k$.
As a result, even though Lemma~\ref{lem:pemp} still holds, that is, when the actual signal $\sigma$ is sandwiched between $\hat{\sigma}_k$ and $\hat{\sigma}_{k+1}$, the expected revenue is still lower bounded by a $1 - 2\epsilon$ factor of the optimal revenue of $\hat{\sigma}_{k+c}$, we cannot finish the proof as in the previous subsection with the idealized assumption.

The main idea is to divide the realization of all the sample value-signal pairs, $(\hat{v}_j,\hat{\sigma}_j)$ for 
$j \in [m]$, and the eventual realization of $(v,\sigma)$ of the actual bidder, into 3 stages as follows. 
\begin{enumerate}
    \item Sample $m+1$ signals $\sigma^\prime_1, \sigma^\prime_2, \dots, \sigma^\prime_{m+1}$ i.i.d.\ from  $\fdsigma$; suppose $\sigma^\prime_1 \geq \sigma^\prime_2 \ge \dots \ge \sigma^\prime_{m+1}$.
\item Pick $i \in [m+1]$ uniformly at random and let $\sigma = \sigma^\prime_i$; let $\hat{\sigma}_1, \hat{\sigma}_2, \dots, \hat{\sigma}_m$ be the other $m$ signals.
\item Sample $m+1$ values, $\hat{v}_j \sim \dsigmahat{j}$ and $v\sim\dsigma{}$, independently.
\end{enumerate}

We will lower bound the expected revenue of the chosen price successively over the randomization in the different stages.

First, let us fix the realization of random variables in step 1 and 2, and take expectations over only the randomness in step 3.
Then, we have that $\hat{\sigma}_j = \sigma'_j$ for $1 \le j \le i-1$ and $\hat{\sigma}_j = \sigma'_{j+1}$ for $i \le j \le m$.
Further, the actual signal $\sigma = \sigma'_i$ of the bidder is sandwiched between $\hat{\sigma}_{i-1} = \sigma'_{i-1}$ and $\hat{\sigma}_i = \sigma'_{i+1}$.
By Lemma~\ref{lem:pemp}, for any $1 \le i \le m+1-c$, we get that the following bound holds with probability at least $1 - \frac{\epsilon}{m}$:
\begin{align*}
\rev{p_\emp}{\dsigma{}} & \geq (1 - 2\epsilon) \opt{\dsigmahat{i-1+c}} \\
& = (1 - 2\epsilon) \opt{D^{\sigma'_{i+c}}}
~.
\end{align*}
By union bound, we get that it holds for all $i$ with probability at least $1 - \epsilon$.

Next, consider the randomness in step 2 (together with step 3).
We have the following bound where the expectation is taken over the randomness of $i$, which is uniformly chosen from $[m + 1]$.
\begin{align*}
\E \big[ \rev{p_\emp}{\dsigma{}} ~|~ \sigma'_1, \sigma'_2, \dots, \sigma'_{m+1} \big] & \geq \frac{1}{m+1} \sum_{i = 1}^{m-c} (1 - 2\epsilon) \opt{D^{\sigma'_{i+c}}} \\
& = \frac{1}{m+1} \sum_{i = c+1}^{m+1} (1 - 2\epsilon) \opt{D^{\sigma'_{i}}}
~.
\end{align*}

Finally, we will lower bound the revenue of the above auction over the randomness of all three stages.
Taking expectations over the randomness of the realization $\sigma'_1, \sigma'_2, \dots, \sigma'_{m+1}$ on both sides of the above inequality, we get that the expected revenue achieved by the algorithm is lower bounded by the followings with probability at least $1 - \epsilon$:
\begin{equation}
\label{eqn:signal_single_alg}
\E \big[ \rev{p_\emp}{\dsigma{}} \big] = (1 - 2\epsilon)  \E \bigg[ \frac{1}{m+1} \sum_{i = c+1}^{m+1} \opt{D^{\sigma'_{i}}} \bigg]
~.
\end{equation}

By definition, the optimal expected revenue that we compare to is equal to:
\begin{equation}
\label{eqn:signal_single_opt}\int_0^{1} \opt{D^\sigma} d\fdsigma = \E \bigg[ \frac{1}{m+1} \sum_{i = 1}^{m+1} \opt{D^{\sigma^\prime_{i}}} \bigg]
~,
\end{equation}
where the equality holds by the linearity of expectations.
At this point, if we compare the lower bound of the expected revenue of the algorithm in Eqn.~\eqref{eqn:signal_single_alg} and the optimal in Eqn.~\eqref{eqn:signal_single_opt}, once again the only difference is that the former drops a $\frac{c}{m+1}$ mass of signals in the calculation.
By the assumption that $m \ge \Theta \big( q(\epsilon)^{-1} \epsilon^{-3} \log (\tfrac{1}{\epsilon q(\epsilon)}) \big) = q(\epsilon)^{-1} c$, this is at most a $q(\epsilon)$ fraction which contributes at most an $\epsilon$ portion of the optimal expected revenue by $q$-boundedness.
So we get that the expected revenue of Algorithm~\ref{alg:SingleBuyerSignalUB} is a $1 - 3\epsilon$ approximation with probability at least $1 - \epsilon$.
Hence, it is a $1 - 4 \epsilon$ approximation in expectation.

%
%

\subsection{Multiple Bidders}
\label{sec:signal_ub_multi}

In this subsection, we show how to solve the more general case of multiple bidders in the \signalsmodel, building on techniques we have developed for the single bidder case and those in the \crmodel with multiple bidders and regular value distributions. 
Formally, we will prove the following sample complexity upper bound.

\begin{theorem}[Upper bound for multiple agents]
\label{thm:multibuyeralg}
There exists a mechanism for selling to $n$ agents in the \signalsmodel, that has a sample complexity, against regular distributions which have a $q$-bounded tail in the signal space, of:
\[
O \left( n^2 q(\epsilon)^{-1} \epsilon^{-4} \log^2 ( \tfrac{n}{\epsilon} ) \right)
~.
\]
\end{theorem}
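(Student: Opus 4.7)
The plan is to combine the signal-interpolation idea of Algorithm~\ref{alg:SingleBuyerSignalUB} with the multi-agent empirical Myerson machinery of Section~\ref{sec:no_signal}. First, I would partition the $m$ samples into $n$ disjoint groups of size $m/n$, assigning one group to each agent. Sort the actual signals so that $\sigma_1 \geq \cdots \geq \sigma_n$. For each agent $i$, extract from its group the $\ell = \tilde{\Theta}(n\epsilon^{-4})$ samples $(\hat{v}_{i,1},\hat{\sigma}_{i,1}), \ldots, (\hat{v}_{i,\ell},\hat{\sigma}_{i,\ell})$ whose signals lie just below $\sigma_i$, and define $\hat{D}_i$ as the uniform mixture of the conditional distributions $D^{\hat{\sigma}_{i,1}}, \ldots, D^{\hat{\sigma}_{i,\ell}}$. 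Because the groups are disjoint, the empirical $\hat{D}_i$'s (and the values drawn therein) are mutually independent across agents, which is essential for invoking the product-concentration inequality.

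Next, each $\hat{\sigma}_{i,j} \leq \sigma_i$ implies (via the analogue of Claim~\ref{claim:onebuyeralg4}) that $D^{\sigma_i} \succeq \dtrunc{\hat{D}_i}{\epsilon}$, so the true product $\prod_i D^{\sigma_i}$ component-wise stochastically dominates the auxiliary product $\prod_i \dtrunc{\hat{D}_i}{\epsilon}$. Revenue monotonicity (Theorem~\ref{thm:revenueMonotonicity}) then implies that the optimal auction for the auxiliary, when run on the true distribution, earns at least $\opt{\prod_i \dtrunc{\hat{D}_i}{\epsilon}}$. The auction I would actually deploy is the empirical Myerson auction on the $n\ell$ value samples, as in Algorithm~\ref{alg:empirical_myerson_regular}. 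By the concentration inequality (Theorem~\ref{thm:prodconcentrate}) combined with the sample-complexity analysis of Theorem~\ref{thm:CR_regular}, $\ell = \tilde{\Theta}(n\epsilon^{-4})$ independent samples per agent suffice to guarantee that this empirical Myerson auction obtains at least $(1-O(\epsilon))\,\opt{\prod_i \dtrunc{\hat{D}_i}{\epsilon}}$ on the auxiliary, and hence also on the true product.

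Finally, I would take expectation over $\vec{\sigma}\sim\fdsigma^n$. Restricting to signal vectors satisfying the small-tail condition (Definition~\ref{def:onebuyeralg3}) costs only an $\epsilon$ fraction of $\optt{\Dcal}$, and under this restriction every agent's window lies in a signal region of $\fdsigma$-mass at least $q(\epsilon)$. A standard Chernoff argument then shows that each group of size $m/n$ contains $\ell$ samples inside that region with high probability provided $(m/n)\cdot q(\epsilon) = \Omega\bigl(\ell \log(n/\epsilon)\bigr)$, i.e.\ $m = \Omega\bigl(n^2 q(\epsilon)^{-1} \epsilon^{-4} \log^2(n/\epsilon)\bigr)$, after a union bound over the $n$ agents and over the $\tilde O(1)$ failure events of the empirical Myerson step. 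This matches the claimed bound.

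The main obstacle will be ensuring that the single-agent per-signal interpolation of Section~\ref{sec:signal_ub_single_agent} lifts cleanly to $n$ agents without compounding losses: the stochastic-dominance step must hold for all agents simultaneously while still leaving each window large enough for the multi-agent Myerson approximation. The disjoint-partition construction is the simplest way to keep the $n$ empirical marginals independent so that Theorem~\ref{thm:prodconcentrate} applies directly; a more intricate argument that shares samples between agents could plausibly save a factor of $n$, but the disjoint construction is the most direct route to the bound as stated.
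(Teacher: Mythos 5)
There is a genuine gap at the step ``obtains at least $(1-O(\epsilon))\opt{\prod_i \dtrunc{\hat{D}_i}{\epsilon}}$ on the auxiliary, and hence also on the true product.'' The auction you actually deploy is the Myerson auction that is optimal for the \emph{empirical} product distribution $\bar{\vec{E}}$ built from the sampled values, not for the auxiliary mixture $\prod_i \dtrunc{\hat{D}_i}{\epsilon}$. Revenue monotonicity (Theorem~\ref{thm:revenueMonotonicity}) only transfers revenue upward for a mechanism that is \emph{optimal for the dominated distribution}, so to invoke it you would need the true product $\prod_i D^{\sigma_i}$ to stochastically dominate $\bar{\vec{E}}$ itself. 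It dominates the mixture $\prod_i \hat{D}_i$, but because of the randomness in the sampled values it need not dominate the empirical distribution (the samples can come out high). Your earlier sentence about ``the optimal auction for the auxiliary, when run on the true distribution'' is correct but concerns a mechanism you cannot compute; it does not cover the empirical Myerson auction you run. This is exactly the caveat the paper flags, and its proof closes it with an extra quantile-coupling argument: for each agent it maps every sampled value to the value of the same quantile under $D^{\sigma_i}$, producing a product of genuine i.i.d.\ empirical distributions $\vec{E}$ that \emph{does} dominate $\bar{\vec{E}}$, then applies Theorem~\ref{thm:revenueMonotonicity} to get $\rev{M_\emp}{\vec{E}} \ge \rev{M_\emp}{\bar{\vec{E}}}$ and Theorem~\ref{thm:prodconcentrate} to relate $\rev{M_\emp}{\vec{E}}$ to $\rev{M_\emp}{\vec{D}}$. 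Some such coupling (or an equivalent device) is needed; without it the chain of inequalities does not go through.

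A second problem is quantitative. Your per-agent budget $\ell = \tilde{\Theta}(n\epsilon^{-4})$ implicitly assumes the preconditioning of Algorithm~\ref{alg:empirical_myerson_regular}, i.e.\ a constant-factor estimate of the optimal revenue obtained via VCG with duplicates (Algorithm~\ref{alg:approxopt}). That estimate relies on regularity, but here each agent's samples come from a mixture of the conditionals $D^{\hat{\sigma}_{k_i+1}},\dots,D^{\hat{\sigma}_{k_i+\ell}}$, and mixtures of regular distributions need not be regular. The paper therefore falls back on an $n$-approximation (summing per-buyer guarded monopoly revenues), which widens the discretization range and forces $\ell = \tilde{\Theta}(n^2\epsilon^{-4})$ per agent; the stated bound then follows because all $m$ samples are shared across agents ($\ell \le q(\epsilon)m + n$). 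Under your disjoint-partition scheme, patching the estimation step the same way would give $m = \tilde{\Omega}(n^3 q(\epsilon)^{-1}\epsilon^{-4})$, overshooting the claimed bound by a factor of $n$; as written, your bound is reached only because the unjustified constant-factor estimate and the factor-$n$ loss from partitioning happen to cancel. Either share the samples across agents as the paper does (handling independence by conditioning on the signals), or supply a constant-factor revenue estimator that works for non-regular mixtures.
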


\subsubsection{Algorithm}

\paragraph{High-level ideas.}
We combine the ideas from the single-agent signal case (Section \ref{sec:signal_ub_single_agent}) and the multi-agent no-signal case (Section \ref{sec:no_signal}).
For each agent, consider $\ell = \Theta(\frac{n^2}{\epsilon^4} \log \frac{n}{\epsilon})$ sample signals that are closest to the agent's signal from below, and using the corresponding $\ell$ sample values in place of the actual samples from his true distribution, we run the the empirical Myerson's auction with preconditioning in Section \ref{sec:no_signal}.
Like in the single agent case, we would like to show that the expected revenue of the auction is almost as good as the optimal revenue when each agent's distribution is replaced by the distribution whose signal is the $\ell$-th closest to the agent's signal from below.
Then, we can use the small tail assumption in the signal space to show that the auction gives nearly optimal revenue.

There are two caveats, one in the algorithm and one in the analysis.
First, the implementation of the empirical Myerson's auction in Section \ref{sec:no_signal} uses the VCG mechanism with duplicate agents to estimate the optimal revenue up to a constant factor, exploiting regularity of the distributions. 
The algorithm in the \signalsmodel, however, needs to estimate the optimal revenue when each agent's distribution is replaced by the mixture of the $\ell$ distributions whose signals are closest to the agent's signal from below. 
The mixture of regular distributions is not necessarily regular.
Thus, we turn to a coarser $n$-approximation. 

Further, the analysis for the single-agent case uses a strong notion of revenue monotonicity, which holds 
\emph{regardless of the mechanism being used}.\footnote{This is true in the single agent case since these mechanisms are posted price mechanisms.}
With multiple agents, we need to make do with the  weaker notion of revenue monotonicity in \prettyref{thm:revenueMonotonicity}, 
which only holds for  \emph{the optimal auction for the dominated distribution}.
This is still not directly applicable because we can only find the optimal auction for the \emph{empirical} distribution w.r.t.\ the mixture  of the $\ell$ distributions whose signals are closest to the agent's signal from below, which is not necessarily dominated by the \emph{true} distribution  for the agent's signal, due to the randomness in the sample values.
To this end, we further introduce a coupling argument to show that the optimal auction for the empirical distribution gives essentially the same revenue guarantee. 

\begin{algorithm}
\begin{algorithmic}[1]
\STATE \textbf{Samples:~} 
$(\hat{v}_{1}, \hat{\sigma}_{1}), (\hat{v}_{2}, \hat{\sigma}_{2}), \dots, (\hat{v}_{m}, \hat{\sigma}_{m})$, where $\hat{\sigma}_{1} \geq \hat{\sigma}_{2} \geq \dots \geq \hat{\sigma}_{m}$.

\STATE \textbf{Parameter:~}
$\epsilon$ (multiplicative revenue loss)

\STATE \textbf{Input:~} 
For each bidder $i$, signal $\sigma_i$ (value $v_i$ is unobserved); 
suppose $\hat{\sigma}_{k_i} \geq \sigma_i \geq \hat{\sigma}_{k_i+1}$.

\medskip

%
\STATE Let $c = \Theta(\frac{n^2}{\epsilon^4} \log^2 \frac{n}{\epsilon})$.
\STATE Let the auxiliary distribution for each bidder $i$, denoted as $\widehat{D}_i$, be the mixture of $D^{\hat{\sigma}_{k_i+1}}, D^{\hat{\sigma}_{k_i+2}}, \dots, D^{\hat{\sigma}_{k_i+c}}$. 
Let $\widehat{F}_i$ be the corresponding cdf. 

\medskip

\item[] \textbf{Step 1: Estimating the Optimal Revenue}
    \STATE For any $i \in [n]$, find a $2$-approximation of $\max_{p : \widehat{F}^i(p) \le \frac{1}{2}} p \big( 1 - \widehat{F}^i(p) \big)$, denoted as $\sr_i$:\\
    \[
        \tfrac{1}{2} \sr_i \le \max_{p : \widehat{F}^i (p) \le \frac{1}{2}} p \big( 1 - \widehat{F}^i (p) \big) \le \sr_i
        ~.
    \]
    \label{step:signal_approx_opt}
%
\STATE Let $\approxopt = \sum_{i = 1}^{n} \sr_i$.

\medskip

\item[] \textbf{Step 2: Empirical Myerson with Preconditioning}

\STATE Let $V = \{ \frac{2}{\epsilon} \approxopt, \frac{2 (1 - \epsilon)}{\epsilon} \approxopt, \frac{2 (1 - \epsilon)^2}{\epsilon} \approxopt, \dots, \frac{\epsilon}{n^2} \approxopt, 0 \big\}$.
\STATE For each buyer $i$, round $\hat{v}_{k_i+1}, \dots, \hat{v}_{k_i+c}$ down to the closest value in $V$.
\STATE Let $\bar{E}^i$ denote the uniform distribution over the rounded values.
\STATE Run Myerson auction w.r.t.\ $\bar{\mathbf{E}} = \bar{E}^1 \times \dots \times \bar{E}^n$.
\end{algorithmic}
\caption{$M_{\text{emp}}$, a $1 - O(\epsilon)$ approximately optimal signal auction for multiple agents}
\label{alg:signal_empmyerson}
\end{algorithm}

\paragraph{Algorithm.}
We present it in Algorithm \ref{alg:signal_empmyerson}.
Step~\ref{step:signal_approx_opt} can be implemented, for example, using the approach by \citet{HuangMR15}, with an extra complication that $\hat{v}_{k_i + 1}, \dots, \hat{v}_{k_i + \ell}$ are not i.i.d.\ samples from $\widehat{D}^i$.
Note that this extra complication is the same problem that we have in the single-buyer case (Lemma \ref{lem:pemp}) and, thus, can be handled the same way. 
We will omit the details.

\paragraph{Analysis.}
Similar to the single-buyer case, we divide the realization of all the random variables into 3 stages as follows. 
\begin{enumerate}
\item Sample $m+n$ signals $\sigma^\prime_1, \sigma^\prime_2, \dots, \sigma^\prime_{m+n}$ i.i.d.\ from  $\fdsigma$; suppose $\sigma^\prime_1 \geq \sigma^\prime_2 \ge \dots \ge \sigma^\prime_{m+n}$.
\item Pick a subset of $n$ indices $i_1 \ne \dots \ne i_n \in [m+n]$ uniformly at random and let $\sigma_j = \sigma^\prime_{i_j}$ for all $j \in [n]$; let $\hat{\sigma}_1 \ge \dots \ge \hat{\sigma}_m$ be the other $m$ signals.
(Note that the  $k_i$ in Algorithm \ref{alg:SingleBuyerSignalUB} is fixed by the end of this stage, and is equal to $j_i-1$.)
\item Sample $m+n$ values, $\hat{v}_j \sim D^{\hat{\sigma}_j}$ for $j \in [m]$ and $v_i \sim \dsigma{i}$ for $i \in [n]$, independently.
\end{enumerate}





\noindent
In the following discussion, let $\underline{D}^i = D^{\sigma_{k_i+\ell}}$ with CDF $\underline{F}^i$ and let $\underline{\vec{D}} = \underline{D}^1 \times \dots \times \underline{D}^n$.
For notation convenient, let $\underline{D}^i$ be a point mass at $0$ for $i > m$

\begin{lemma}
\label{lem:signalmulti_approxopt}
With probability at least $1 - O(\epsilon)$, we have:
\[
    \frac{1}{2} \opt{\underline{\vec{D}}} \le \approxopt \le n \opt{\vec{\widehat{D}}}
    ~.
\]
This lemma holds more generally in the matroid setting.
\end{lemma}

\begin{proof}
The first inequality holds because 
\begin{align*}
    \sr_i & \ge \max_{p  : \bar{F}^i (p) \le \frac{1}{2}} p \big( 1 - \bar{F}^i (p) \big) & & \text{(definition of $\sr_i$)} \\[1ex]
    & \ge \max_{p : \underline{F}^i (p) \le \frac{1}{2}} p \big( 1 - \underline{F}^i (p) \big) & & \text{($\bar{F}^i \succeq \underline{F}^i$)} \\
    & \ge \frac{1}{2} \max_{p \ge 0} p \big( 1 - \underline{F}^i (p) \big) & & \text{(regularity of $\underline{D}^i$)}
\end{align*}
and $\sum_{i \in [n]} \max_{p \ge 0} p \big( 1 - \underline{F}^i (p) \big) = \sum_{i \in [n]} \opt{\underline{D}^i} \ge \opt{\underline{\vec{D}}}$. 
The second inequality holds because $\sr_i \le \opt{\vec{\widehat{D}}}$.
\end{proof}

Next, let $\Dboldbar$ be the distribution obtained by rounding the values sampled from $\vec{\widehat{D}}$ down to the closest value in $V$, the discretized value set defined in Algorithm~\ref{alg:signal_empmyerson}.

\begin{lemma}
    \label{lem:multi_signal_opt_relation}
    With probability at least $1 - O(\epsilon)$, we have:
    \[
        \opt{\vec{\bar{D}}} \geq \big( 1 - O(\epsilon) \big) \cdot \opt{\vec{\underline{D}}} - O(\epsilon) \cdot \opt{\vec{D}} \enspace.
    \]
    This lemma holds more generally in the matroid setting.
\end{lemma}

\begin{proof}
    First, $\vec{\widehat{D}}$ stochastically dominates $\vec{\underline{D}}$ and therefore, its discretized version $\Dboldbar$ also dominates the discretized version of $\vec{\underline{D}}$.
    
    It remains the show the inequality by bounding the optimal revenue of the discretized version of $\vec{\underline{D}}$.
    By Lemma~\ref{lem:signalmulti_approxopt} and Lemma~\ref{lem:tailbound_regular}, truncating $\vec{\underline{D}}$ at $\frac{2}{\epsilon} \approxopt \ge \frac{1}{\epsilon} \opt{\underline{\vec{D}}}$ decreases the optimal revenue by at most a $1 - O(\epsilon)$ factor.
    By Lemma \ref{lem:CR_bounded_multiplicative}, rounding the values down to the closest power of $1 - \epsilon$ decreases the optimal revenue by at most another $1 - \epsilon$ factor.
    Finally, by Lemma \ref{lem:signalmulti_approxopt} rounding the values less than $\frac{\epsilon}{n^2} \approxopt \le \frac{\epsilon}{n} \opt{\vec{\widehat{D}}} \le \frac{\epsilon}{n} \opt{\Dbold}$ down to $0$ decreases the revenue by at most $\epsilon \opt{\Dbold}$.
\end{proof}

As a simple corollary of the above lemma and Theorem \ref{thm:prodconcentrate},\footnote{Again, there is an extra complication that the sample values are not i.i.d.\ from the mixture distributions, which can be handled using the approach in Lemma~\ref{lem:pemp}.} we have the following lemma.

\begin{lemma}
\label{lem:signal_multi_emprev}
With probability $1 - O(\epsilon)$, we have
\[
    \opt{\bar{\vec{E}}} \geq \big( 1 - O(\epsilon) \big) \cdot \opt{\vec{\underline{D}}} - O(\epsilon) \cdot \opt{\vec{D}} \enspace.
\]
\end{lemma}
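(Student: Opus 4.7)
The plan is to derive the lemma from the previous lemma by applying Theorem~\ref{thm:prodconcentrate} to a suitably chosen mechanism. First I would condition on the sample signals $\hat{\sigma}_1,\ldots,\hat{\sigma}_m$, which fixes the mixtures $\bar{D}^i$ and hence the product $\bar{\vec{D}}$. Let $M^*$ be an optimal auction for $\bar{\vec{D}}$, so that $\rev{M^*}{\bar{\vec{D}}} = \opt{\bar{\vec{D}}}$. Since $M^*$ is a valid DSIC/IR mechanism on $V^n$, we trivially have $\opt{\bar{\vec{E}}} \ge \rev{M^*}{\bar{\vec{E}}}$. It therefore suffices to show that $\rev{M^*}{\bar{\vec{E}}}$ concentrates around $\rev{M^*}{\bar{\vec{D}}}$ to within additive error $O(\epsilon)\opt{\vec{D}}$ with probability $1 - O(\epsilon)$ over the sample values; the conclusion then follows by chaining with the previous lemma.

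For the concentration I would apply Theorem~\ref{thm:prodconcentrate} to the normalized revenue $f(\vec{v}) = \rev{M^*}{\vec{v}}/h$, where $h = \tfrac{2}{\epsilon}\approxopt$ upper bounds any value in $V$ so that $f \in [0,1]$. The key scaling estimates are $\approxopt \le n\,\opt{\bar{\vec{D}}} \le n\,\opt{\vec{D}}$, where the first inequality is Lemma~\ref{lem:signalmulti_approxopt} and the second follows from Proposition~\ref{prop:opt_monotone} combined with $\bar{D}^i \preceq \dsigma{i}$ (the mixture consists of signals below $\sigma_i$). Choosing $\delta = \Theta(\epsilon^2/n)$ makes the required additive error at most $2\delta \cdot h = O(\epsilon)\opt{\vec{D}}$, and the exponent in Theorem~\ref{thm:prodconcentrate} then demands $\ell = \Theta\bigl(n^2 \epsilon^{-4} \log^2(n/\epsilon)\bigr)$, matching the choice in Algorithm~\ref{alg:signal_empmyerson}.

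The main obstacle, as the footnote warns, is that Theorem~\ref{thm:prodconcentrate} is stated for samples that are iid within each column, whereas our column samples $\hat{v}^i_1,\ldots,\hat{v}^i_\ell$ are independent but not identically distributed: each $\hat{v}^i_j \sim D^{\hat{\sigma}_{k_i+j}}$, a component of the mixture $\bar{D}^i$ rather than the mixture itself. I would resolve this exactly as in the proof of Lemma~\ref{lem:pemp}. The argument behind Theorem~\ref{thm:prodconcentrate} ultimately reduces the concentration to Bernstein's inequality (Lemma~\ref{lem:chernoff}), which applies to sums of independent but not identically distributed bounded random variables. The random-permutation coupling used in that proof still produces a sequence $\vec{y}_1,\ldots,\vec{y}_\ell$ whose marginals agree with $\bar{\vec{D}}$ up to an $O(1/\ell)$ discrepancy arising from the "without replacement" nature of sampling distinct slots; this is absorbed by the choice of $\ell$ and does not affect the Bernstein tail.

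Combining the pieces, with probability $1 - O(\epsilon)$ (intersection of the event that the previous lemma holds and the concentration succeeds),
\[
\opt{\bar{\vec{E}}} \;\ge\; \rev{M^*}{\bar{\vec{E}}} \;\ge\; \opt{\bar{\vec{D}}} - O(\epsilon)\opt{\vec{D}} \;\ge\; \bigl(1 - O(\epsilon)\bigr)\opt{\vec{\underline{D}}} - O(\epsilon)\opt{\vec{D}},
\]
which is the claimed bound.
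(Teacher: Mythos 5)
Your route is the same one the paper takes: it proves this lemma as a one-line corollary of the preceding lemma plus Theorem~\ref{thm:prodconcentrate} applied to a single fixed mechanism (no union bound over mechanisms is needed in this direction, as you note), with the non-iid complication deferred to the device of Lemma~\ref{lem:pemp}; your scaling, $h = \tfrac{2}{\epsilon}\approxopt \le \tfrac{2n}{\epsilon}\opt{\vec{D}}$ via Lemma~\ref{lem:signalmulti_approxopt} and $\bar{\vec{D}} \preceq \vec{D}$, with $\delta = \Theta(\epsilon^2/n)$ and $\ell = \tilde\Theta(n^2\epsilon^{-4})$, is exactly what yields the additive $O(\epsilon)\opt{\vec{D}}$ error.

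Two steps need repair, though. First, $\bar{\vec{E}}$ is the uniform distribution over the samples \emph{after} they are rounded down into $V$, so the expectation of $\rev{M^*}{\bar{\vec{E}}}$ over the sample values is the revenue of $M^*$ on the discretized mixture, not on $\bar{\vec{D}}$; for a fixed mechanism, revenue is not monotone under rounding values down (a reserve sitting just above a rounded-down point mass loses that mass entirely), so ``$\rev{M^*}{\bar{\vec{E}}}$ concentrates around $\rev{M^*}{\bar{\vec{D}}}$'' with $M^*$ optimal for the un-discretized $\bar{\vec{D}}$ fails as written. The fix is to take $M^*$ optimal for the discretized $\bar{\vec{D}}$ and use what the proof of the preceding lemma actually establishes (via Lemma~\ref{lem:tailbound_regular}, Lemma~\ref{lem:CR_bounded_multiplicative}, and zeroing values below $\tfrac{\epsilon}{n^2}\approxopt$): the discretized optimum is already at least $(1-O(\epsilon))\opt{\vec{\underline{D}}} - O(\epsilon)\opt{\vec{D}}$, and then your chain goes through with the discretized distribution as the concentration target. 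Second, your account of the non-iid issue mischaracterizes it: after the random permutations each row tuple has marginal \emph{exactly} the product of mixtures (there is no $O(1/\ell)$ marginal discrepancy); what is lost is independence across rows, because each column is sampled without replacement from the $\ell$ component distributions. The repair in the spirit of Lemma~\ref{lem:pemp} is to condition on the permutations---the row tuples then use disjoint, mutually independent samples, so Bernstein applies to independent but non-identically distributed terms---and then argue that the permutation-dependent mean averages to (and concentrates around) $\E_{\vec{x} \sim \bar{\vec{D}}}[f(\vec{x})]$. The paper itself only gestures at this in a footnote, but your stated justification would not turn into a proof as written.
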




Next, we prove the key lemma that lower bounds the expected revenue of the empirical Myerson auction w.r.t.\ $\bar{\vec{E}}$ if we run it on $\vec{D}$, conditioned on the realization of signals.

\begin{lemma}
With probability $1 - O(\epsilon)$ (over randomness of the sample values), we have
\[
\E \big[ \rev{M_{\emp}}{\vec{D}} | \hat{\sigma}_1, \dots, \hat{\sigma}_m, \sigma_1, \dots, \sigma_n \big] \geq \big( 1 - O(\epsilon) \big) \cdot \opt{\vec{\underline{D}}} - O(\epsilon) \cdot \opt{\vec{D}} \enspace.
\]
\end{lemma}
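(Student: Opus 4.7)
The target is $\rev{M_\emp}{\vec{D}} \ge \opt{\bar{\vec{E}}} - O(\epsilon\opt{\vec{D}})$ with probability $1-O(\epsilon)$ over the sample values, conditional on the fixed signals; combined with Lemma~\ref{lem:signal_multi_emprev} this immediately yields the claim. The plan is to travel from $\bar{\vec{E}}$ to $\bar{\vec{D}}$ via the concentration inequality of Theorem~\ref{thm:prodconcentrate}, and then from $\bar{\vec{D}}$ to $\vec{D}$ via Theorem~\ref{thm:revenueMonotonicity} aided by a coupling argument that handles the fact that $M_\emp$ is only optimal for $\bar{\vec{E}}$, not for $\bar{\vec{D}}$.

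For the first step, I would apply Theorem~\ref{thm:prodconcentrate} uniformly over the class of Myerson auctions on the discretized support $V$, following the union-bound strategy of Theorem~\ref{thm:CR_finite}: there are at most $(n|V|)^{n|V|}$ such auctions, and with $\ell = \Theta(n^2 \epsilon^{-4} \log^2(n/\epsilon))$ samples per bidder, this guarantees that $|\rev{M}{\bar{\vec{E}}} - \rev{M}{\bar{\vec{D}}}| \le O(\epsilon \opt{\vec{D}})$ simultaneously for every such $M$ with probability $1-O(\epsilon)$. The technical caveat that $\hat v_{k_i+1},\dots,\hat v_{k_i+\ell}$ are one stratified draw from each of $\ell$ component distributions, rather than i.i.d.\ draws from the mixture $\bar D^i$, only tightens concentration and is handled as in Lemma~\ref{lem:pemp}. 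Specializing uniform concentration to $M_\emp$ (which lies in this class) and to the true optimum $M^\star$ for $\bar{\vec{D}}$, and using that $M_\emp$ maximizes revenue on $\bar{\vec{E}}$, I would deduce that $M_\emp$ is $O(\epsilon\opt{\vec{D}})$-approximately optimal for $\bar{\vec{D}}$: $\rev{M_\emp}{\bar{\vec{D}}} \ge \opt{\bar{\vec{D}}} - O(\epsilon\opt{\vec{D}}) \ge \opt{\bar{\vec{E}}} - O(\epsilon\opt{\vec{D}})$.

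The second step is where the real work lies. Since every mixture component $D^{\hat\sigma_{k_i+j}}$ is dominated by $D^{\sigma_i}$ by the signal-monotonicity assumption, we have $\vec{D} \succeq \bar{\vec{D}}$ component-wise. If $M_\emp$ were the exact optimum for $\bar{\vec{D}}$, Theorem~\ref{thm:revenueMonotonicity} would immediately supply $\rev{M_\emp}{\vec{D}} \ge \rev{M_\emp}{\bar{\vec{D}}}$ and the proof would be complete. Because $M_\emp$ is only approximately optimal for $\bar{\vec{D}}$, I would imitate the quantile-coupling construction from the proof of Proposition~\ref{prop:opt_monotone}: for each bidder $i$, draw a common uniform quantile $q_i$ and set $\bar v_i = F_{\bar D^i}^{-1}(q_i)$, $v_i = F_{D^{\sigma_i}}^{-1}(q_i)$, so that $v_i \ge \bar v_i$ pointwise. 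The construction produces a truthful lift of $M_\emp$ to $\vec{D}$ whose revenue is at least $\rev{M_\emp}{\bar{\vec{D}}}$ (the winner is determined in quantile space, and the $\vec{D}$-side threshold value is always at least the $\bar{\vec{D}}$-side threshold). It then remains to show that $M_\emp$'s own revenue on $\vec{D}$ differs from this lifted revenue by at most $O(\epsilon\opt{\vec{D}})$, which follows from the approximate optimality of $M_\emp$ for $\bar{\vec{D}}$ and a symmetric application of uniform concentration to the lifted mechanism (also a Myerson auction in the discretized class). The main obstacle is precisely this last comparison: Theorem~\ref{thm:revenueMonotonicity} is proved by an indirect minimax/convexity argument rather than a pointwise bound, so porting it to an approximately-optimal mechanism through quantile coupling requires care to ensure that the lifted threshold payments stay within an $O(\epsilon\opt{\vec{D}})$ slack of $M_\emp$'s actual payments on $\vec{D}$, and that the slack does not compound multiplicatively in $n$.
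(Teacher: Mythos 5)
Your first step (uniform concentration over the discretized Myerson class, giving $\rev{M_{\emp}}{\bar{\vec{D}}} \ge \opt{\bar{\vec{E}}} - O(\epsilon\,\opt{\vec{D}})$) is fine and mirrors what the paper does via Lemma~\ref{lem:signal_multi_emprev}. The gap is in your second step, the passage from $\bar{\vec{D}}$ to $\vec{D}$. You want to conclude $\rev{M_{\emp}}{\vec{D}} \ge \rev{M_{\emp}}{\bar{\vec{D}}} - O(\epsilon\,\opt{\vec{D}})$ from (i) the quantile-coupled lift of $M_{\emp}$ and (ii) the approximate optimality of $M_{\emp}$ for $\bar{\vec{D}}$, but no such implication holds: the lift is a \emph{different} mechanism from $M_{\emp}$ run on $\vec{D}$, Theorem~\ref{thm:prodconcentrate} only relates one fixed mechanism's empirical and true revenues (never two different mechanisms on the same distribution), and revenue-closeness to optimality on the dominated distribution gives no control over behavior on the dominating one. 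Concretely, take two bidders with $\bar{D}^1$ a point mass at $10$ and $\bar{D}^2$ a point mass at $0$, and the DSIC mechanism $M$ that serves bidder $2$ at price $1$ whenever $v_2 \ge 1$ and otherwise serves bidder $1$ at price $10$ when $v_1 \ge 10$. Then $\rev{M}{\bar{\vec{D}}} = 10 = \opt{\bar{\vec{D}}}$, so $M$ is \emph{exactly} optimal for $\bar{\vec{D}}$; yet for the dominating product distribution with $D^2$ a point mass at $1$ (and $D^1$ unchanged) its revenue collapses to $1$. (Putting mass $\delta$ at $10$ in $\bar{D}^2$ makes the same example $O(\delta)$-approximately optimal for any $\delta$, with the same collapse.) This is exactly why Theorem~\ref{thm:revenueMonotonicity} asserts monotonicity only for a \emph{canonical} optimal auction of the dominated distribution: $M_{\emp}$ is the canonical optimum for $\bar{\vec{E}}$, not for $\bar{\vec{D}}$, and being revenue-close to $\opt{\bar{\vec{D}}}$ does not make it structurally close to the canonical optimum for $\bar{\vec{D}}$. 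You flagged this comparison as the ``main obstacle,'' but the proposal leaves it unproved, and the deduction you sketch for it is false in general.

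The paper closes this hole by applying revenue monotonicity at the only place where $M_{\emp}$ is exactly (and, after the perturbation of Remark~\ref{rem:uniqueMechanism}, uniquely) optimal, namely at $\bar{\vec{E}}$ itself, and by coupling at the level of \emph{empirical} distributions rather than between $\bar{\vec{D}}$ and $\vec{D}$: for each bidder $i$ and each sample $\hat{v}_{k_i+j}$, take its quantile with respect to its own conditional $D^{\hat{\sigma}_{k_i+j}}$ and push it through $D^{\sigma_i}$ to obtain $\tilde{v}_j \ge \hat{v}_{k_i+j}$; the $\tilde{v}_j$ are i.i.d.\ from $D^{\sigma_i}$, so the resulting empirical product distribution $\vec{E}$ both dominates $\bar{\vec{E}}$ and is an honest empirical distribution of $\vec{D}$. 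Theorem~\ref{thm:revenueMonotonicity} then gives $\rev{M_{\emp}}{\vec{E}} \ge \rev{M_{\emp}}{\bar{\vec{E}}} = \opt{\bar{\vec{E}}}$, and Theorem~\ref{thm:prodconcentrate} (with the union bound over the discretized mechanism class, since $M_{\emp}$ is sample-dependent) converts this into $\rev{M_{\emp}}{\vec{D}} \ge \big(1-O(\epsilon)\big)\opt{\bar{\vec{E}}}$, after which Lemma~\ref{lem:signal_multi_emprev} finishes. To repair your argument, replace your second step with this construction; the route through approximate optimality on $\bar{\vec{D}}$ cannot be made to work.
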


\begin{proof}
Let us first explain the proof structure.
If $\vec{D}$ stochastically dominates $\bar{\vec{E}}$, then the lemma follows by Lemma \ref{lem:signal_multi_emprev} and revenue monotonicity (Theorem \ref{thm:revenueMonotonicity}).
However, despite the fact that $\vec{D}$ stochastically dominates $\bar{\vec{D}}$, it does not necessarily dominates $\bar{\vec{E}}$ due to randomness in the sample values.
To get around this obstacle, we designed for each buyer $i$ a set of random values coupled with the sample values $\hat{v}_{k_i + j}$, $1 \le j \le \ell$, and the uniform distribution over them $E^i$ such that (1) $\vec{E}$ stochastically dominate $\bar{\vec{E}}$, and (2) the revenue of any optimal mechanism w.r.t.\ some distribution over the set of discretized values when we run it on $\vec{E}$ is a $1 - O(\epsilon)$ approximation of its revenue on $\vec{D}$.
Then, we can show that the revenue of the empirical Myerson auction (w.r.t.\ $\bar{\vec{E}}$) on $\vec{D}$ is at least a $1 - O(\epsilon)$ fraction of its revenue on $\vec{E}$, which is greater than or equal to its revenue on $\bar{\vec{E}}$.

Concretely, for every $1 \le j \le \ell$, let $\hat{q}_j = \big( 1 - F^{\hat{\sigma}_{k_i+j}}(\hat{v}_{k_i + j}) \big)$ be the quantile of $\hat{v}_{k_i + j}$ w.r.t.\ $D^{\hat{\sigma}_{k_i+j}}$; let $\tilde{v}_j$ be the value with quantile $\hat{q}_j$ w.r.t.\ $D^i$.
Note that $\hat{q}_j$ is uniform from $[0, 1]$ over random realization of $\hat{v}_{k_i + j}$ conditioned on $\hat{\sigma}_{k_i + j}$.
So $\tilde{v}_j$, $1 \le j \le \ell$, are i.i.d.\ samples from $D^i$.
Let $E^i$ be the uniform distribution over $\tilde{v}_j$, $1 \le j \le \ell$.

By our construction and that $D^i$ stochastically dominates $D^{\hat{\sigma}_{k_i+j}}$ for any $1 \le j \le \ell$, we get that $\vec{E}$ stochastically dominates $\bar{\vec{E}}$.
Further, by revenue monotonicity (Theorem \ref{thm:revenueMonotonicity}), we have $\rev{M_{\emp}}{\vec{E}} \ge \rev{M_{\emp}}{\bar{\vec{E}}}$. Recall that we can alter $\vec{E}$ such that the optimal auction is unique (Remark~\ref{rem:uniqueMechanism}).
Thus, by Theorem \ref{thm:prodconcentrate}, we have that $\rev{M_{\emp}}{\vec{D}} \ge \big( 1 - O(\epsilon) \big) \rev{M_{\emp}}{\vec{E}}$.
Putting together with Lemma \ref{lem:signal_multi_emprev} proves the lemma.
\end{proof}


As a simple corollary, we have the following lemma.

\begin{lemma}
\label{lem:signal_multi_rev}
With probability $1 - O(\epsilon)$, we have
\[
\E \big[ \rev{M_{\emp}}{\vec{D}} \big] \geq \big( 1 - O(\epsilon) \big) \E \big[ \opt{\vec{\underline{D}}} \big] - O(\epsilon) \E \big[ \opt{\vec{D}} \big] \enspace.
\]
\end{lemma}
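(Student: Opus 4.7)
The plan is to derive Lemma~\ref{lem:signal_multi_rev} from the immediately preceding lemma by averaging over the signal realizations, exactly as the phrase ``simple corollary'' suggests. Let me write $\Sigma$ as shorthand for the tuple $(\hat\sigma_1,\dots,\hat\sigma_m,\sigma_1,\dots,\sigma_n)$ of signals and write $G$ for the ``good event'' (over the randomness of the sample values, conditional on $\Sigma$) under which the conclusion of the preceding lemma holds. By that lemma, $\Pr[G \mid \Sigma]\ge 1 - O(\epsilon)$, and on $G$ one has
\[
\E\!\big[\rev{M_{\emp}}{\vec D}\mid \Sigma, G\big]\ \ge\ (1-O(\epsilon))\,\opt{\vec{\underline D}} - O(\epsilon)\,\opt{\vec D},
\]
while conditional on $\Sigma$ the quantities $\opt{\vec{\underline D}}$ and $\opt{\vec D}$ are deterministic constants.

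First, I would integrate out the sample randomness for fixed $\Sigma$. Since $\rev{M_{\emp}}{\vec D}\ge 0$ always, I can discard the contribution on $G^c$, giving
\[
\E_{\text{samples,values}}\!\big[\rev{M_{\emp}}{\vec D}\mid \Sigma\big]\ \ge\ \Pr[G\mid\Sigma]\cdot\Big((1-O(\epsilon))\,\opt{\vec{\underline D}}-O(\epsilon)\,\opt{\vec D}\Big).
\]
The mild nuisance here is that the parenthesised quantity might be negative when $\opt{\vec{\underline D}}$ happens to be tiny compared to $\opt{\vec D}$; in that regime multiplying by $\Pr[G\mid\Sigma]\le 1$ actually \emph{increases} the bound, so instead I would use the trivial bound $\E[\rev{M_{\emp}}{\vec D}\mid\Sigma]\ge 0\ge (1-O(\epsilon))\opt{\vec{\underline D}}-O(\epsilon)\opt{\vec D}$ directly. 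In the opposite regime where the parenthesised quantity is non-negative, $\Pr[G\mid\Sigma]\ge 1-O(\epsilon)$ lets me absorb the factor into the $O(\epsilon)$ slack, since $(1-O(\epsilon))^2 = 1-O(\epsilon)$. Either way,
\[
\E\!\big[\rev{M_{\emp}}{\vec D}\mid \Sigma\big]\ \ge\ (1-O(\epsilon))\,\opt{\vec{\underline D}}-O(\epsilon)\,\opt{\vec D}.
\]

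Finally, I would take expectation over the signals $\Sigma$ (both $\opt{\vec{\underline D}}$ and $\opt{\vec D}$ being measurable functions of $\Sigma$) and apply the tower property to obtain
\[
\E\!\big[\rev{M_{\emp}}{\vec D}\big]\ \ge\ (1-O(\epsilon))\,\E\!\big[\opt{\vec{\underline D}}\big]-O(\epsilon)\,\E\!\big[\opt{\vec D}\big],
\]
which is the stated inequality. (The ``with probability $1-O(\epsilon)$'' qualifier in the lemma statement is vacuous since the conclusion is an inequality between deterministic quantities; the probabilistic qualifier is merely inherited verbatim from the preceding lemma.) The only substantive step is the sign-handling described above, and that is just a two-case argument; no new probabilistic tools beyond non-negativity of revenue are needed.
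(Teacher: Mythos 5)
Your proposal is correct and matches the paper's intent: the paper offers no explicit proof, stating the lemma as a ``simple corollary'' of the preceding conditional bound obtained by averaging over the signal realizations, which is exactly your argument. Your additional care in handling the failure event via non-negativity of revenue and the sign of the right-hand side simply fills in details the paper leaves implicit (including the observation that the ``with probability $1-O(\epsilon)$'' qualifier is vestigial once all randomness is integrated out).
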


Finally, we relate $\E \big[ \opt{\vec{\underline{D}}} \big]$ with $\E \big[ \opt{\vec{D}} \big]$ by proving the following lemma.

\begin{lemma}
\label{lem:signal_multi_rev2}
Over the randomness of all three stages, we have
\[
\E \big[ \opt{\vec{\underline{D}}} \big] \ge ( 1 - \epsilon ) \E \big[ \opt{\vec{D}} \big]
\]
\end{lemma}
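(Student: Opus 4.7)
The plan is to exploit the relationship $\ell / (m+n) = \Theta(q(\epsilon))$, together with the $q$-bounded tail property (Definition~\ref{def:onebuyeralg3}), which says that signals in the top $q(\epsilon)$ quantile contribute at most an $\epsilon$ fraction of $\optt{\mathcal{D}}$. The intuition is that replacing $\sigma_i$ with $\underline{\sigma}_i = \hat\sigma_{k_i + \ell}$ shifts the quantile of agent $i$'s signal downward by roughly $q(\epsilon)$, and by the small-tail assumption, such a shift can only destroy a small fraction of the optimal revenue.

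First I would reformulate the sampling. Since the $n$ real signals $\sigma_1, \ldots, \sigma_n$ and the $m$ sample signals $\hat{\sigma}_1, \ldots, \hat{\sigma}_m$ are all i.i.d.\ from $\fdsigma$, we can equivalently draw $m+n$ i.i.d.\ signals, sort them in decreasing order as $s_1 \ge s_2 \ge \cdots \ge s_{m+n}$, and then pick a uniformly random subset $R \subseteq [m+n]$ of size $n$ to serve as the ranks of the real signals; the remaining ranks are the sample signals. Under this view, if $r_i \in R$ is the rank of $\sigma_i$, then $\underline{\sigma}_i = s_{r_i^\star}$ where $r_i^\star$ is the $\ell$-th rank strictly greater than $r_i$ that lies in $[m+n] \setminus R$. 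Because $R$ is uniform, a standard hypergeometric concentration bound shows that $r_i^\star \le r_i + \ell(1 + O(\epsilon))$ with probability $1 - O(\epsilon)$, simultaneously for all $i$ (taking a union bound over $n$ agents and using that $\log n$ factors are absorbed in the $\tilde O$ of the sample complexity).

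Next I would use the $q$-bounded tail property to restrict attention to the event $\mathcal{E}$ that every $\sigma_i$ has quantile $\fdsigma(\sigma_i) \le 1 - q(\epsilon)$; by Definition~\ref{def:onebuyeralg3}, we have $\E[\mathds{1}\{\mathcal{E}\} \opt{\vec{D}}] \ge (1-\epsilon)\E[\opt{\vec{D}}]$. On $\mathcal{E}$, together with the concentration of $r_i^\star$ around $r_i + \ell$, the quantile $\fdsigma(\underline{\sigma}_i)$ is at least $\fdsigma(\sigma_i) - \ell/(m+n)(1+O(\epsilon)) \ge \fdsigma(\sigma_i) - q(\epsilon)(1+O(\epsilon))$. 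I would then couple $\underline{\vec{D}}$ with a product distribution $\vec{D}'$ whose $i$-th marginal is $D^{\sigma'_i}$ where $\sigma'_i$ has the exact shifted quantile $\fdsigma(\sigma_i) - q(\epsilon)$. By stochastic dominance (since $\underline{\sigma}_i$ has quantile above that of $\sigma_i'$ with high probability on $\mathcal{E}$) and Proposition~\ref{prop:opt_monotone} applied to the two product distributions, $\opt{\underline{\vec{D}}} \ge \opt{\vec{D}'}$.

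Finally, I would bound $\E[\opt{\vec{D}'}]$ from below by $\E[\opt{\vec{D}}](1 - O(\epsilon))$. Here the key observation is that the marginal distribution of $\sigma'_i$, when we apply a quantile shift of $q(\epsilon)$ starting from $\sigma_i \sim \fdsigma$, is just $\fdsigma$ conditioned on being outside the top $q(\epsilon)$, rescaled. A second application of the $q$-bounded tail property transfers this back to a $(1-\epsilon)$ multiplicative loss in the optimal expected revenue. Combining the two $(1-\epsilon)$ losses and absorbing the $O(\epsilon)$ constants by a slight rescaling of $\epsilon$ yields the claimed inequality. The main obstacle will be the careful handling of the joint dependence among the $r_i^\star$'s (they are not independent across agents), which is what forces the use of a coupling plus union bound rather than a direct per-coordinate comparison; this is also why we pay a factor of $n$ in the sample complexity $\ell = \Theta\big(\tfrac{n^2}{\epsilon^4} \log^2 \tfrac{n}{\epsilon}\big)$.
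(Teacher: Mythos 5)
Your plan can likely be pushed through, but it is a genuinely different and much heavier route than the paper's, and as sketched it has concrete holes. The paper's proof uses no concentration at all: conditioning on the sorted combined signals $\sigma'_1 \ge \dots \ge \sigma'_{m+n}$, the uniformly random choice of which $n$ of them are the bidders makes $\E\big[\opt{\underline{\vec{D}}} \mid \sigma'_1,\dots,\sigma'_{m+n}\big]$ an average over index sets $\{i_1,\dots,i_n\}$ of $\opt{D^{\sigma'_{i_1+\ell}}\times\cdots\times D^{\sigma'_{i_n+\ell}}}$; re-indexing, this equals the same average over sets drawn from $\{\ell+1,\dots,\ell+m+n\}$, which is at least the average restricted to sets inside $\{\ell+1,\dots,m+n\}$, i.e., the contribution from profiles avoiding the top $\ell$ ranks. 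Since $\ell \le q(\epsilon)m+n$, taking expectation over the signals this is lower bounded by the truncated expectation in Definition~\ref{def:onebuyeralg3}, giving the exact $(1-\epsilon)$ bound with a single use of the tail assumption, no quantile estimates, and no failure events.

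The gaps in your version: (1) the step from ranks to population quantiles, ``$\fdsigma(\underline{\sigma}_i) \ge \fdsigma(\sigma_i) - \tfrac{\ell}{m+n}(1+O(\epsilon))$'', is the part that actually requires a probabilistic argument (a Chernoff bound on the number of samples landing in the population-quantile window of width $q(\epsilon)$ just below $\sigma_i$, which in turn requires checking that $m\,q(\epsilon) \ge (1+\Omega(\epsilon))\,\ell$ for the chosen constants); the hypergeometric concentration you invoke only controls the interleaving of bidder signals among the samples, which is bounded by $n-1$ deterministically and is not the issue. (2) Because the distributions are merely regular, $\opt{\vec{D}}$ is unbounded, so ``with probability $1-O(\epsilon)$'' does not automatically translate into an $O(\epsilon)$ loss in expectation; you must establish the failure probability conditionally on the bidders' signals (possible here, since the randomness is in the other samples), a point your sketch skips. (3) The final step, as phrased (the marginal of $\sigma'_i$ is ``$\fdsigma$ conditioned on being outside the top $q(\epsilon)$, rescaled''), is a per-coordinate statement; applied naively across the $n$ independent coordinates it costs a factor $(1-q(\epsilon))^n$, i.e., up to $n\,q(\epsilon)$, which you cannot afford. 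It can be repaired by the joint change of variables $W_i = U_i - q(\epsilon)$ on quantiles, under which the shifted profile restricted to $\{W_i \ge 0 \ \forall i\}$ has exactly the law appearing in Definition~\ref{def:onebuyeralg3} (the normalizations cancel); note also that intersecting with your event $\mathcal{E}$ effectively demands a $2q(\epsilon)$-tail bound, which the assumption does not literally provide, so the events and the shift width must be arranged so that only a $q(\epsilon)$-shift is ever charged to the tail assumption. All of these complications are avoided by the exchangeability argument above.
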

\begin{proof}
Given Lemma \ref{lem:signal_multi_rev}, it suffices to relate $\E \big[ \opt{\vec{\underline{D}}} \big]$ with $\E \big[ \opt{\vec{D}} \big]$.
First, we consider the expectation over the randomness of the last two stages.
Conditioned on the realization of signals $\sigma^\prime_1, \dots, \sigma^\prime_{m+n}$, over the randomness of the choice of $i_1, \dots, i_n$ and random realization of the corresponding values, we can write the expectation of the optimal revenue w.r.t.\ $\vec{\underline{D}}$ as follows.
%
%
%
\begin{eqnarray*}
\E \big[ \opt{\underline{\vec{D}}} | \sigma^\prime_1, \dots, \sigma^\prime_{m+n} \big] & = & \frac{1}{{m+n \choose n}} \sum_{\{i_1, \dots, i_n\} \subset [m+n]} \opt{D^{\sigma^\prime_{i_1 + \ell}} \times \dots \times D^{\sigma^\prime_{i_n + \ell}}} \\
& = & \frac{1}{{m+n \choose n}} \sum_{\{i_1, \dots, i_n\} \subset \{ \ell+1, \dots, \ell + m + n\}} \opt{D^{\sigma^\prime_{i_1}} \times \dots \times D^{\sigma^\prime_{i_n}}} \\
& \ge & \frac{1}{{m+n \choose n}} \sum_{\{i_1, \dots, i_n\} \subset \{ \ell+1, \dots, m + n\}} \opt{D^{\sigma^\prime_{i_1}} \times \dots \times D^{\sigma^\prime_{i_n}}}
\end{eqnarray*}

Next, we consider the expectation of over the randomness of the first stage.
Note that $\ell \le q(\epsilon) m + n$.
Taking expectation over random realization of $\sigma^\prime_1, \dots, \sigma^\prime_{m+n}$, the RHS of the above inequality is lower bounded by
\[
\mathbf{E}_{\sigma_1, \dots, \sigma_n \sim \fdsigma} [ \mathds{1}\{\forall i \in [n], \fdsigma(\sigma_i) \leq 1-q(\epsilon)\} \opt{\dsigma {1} \times \dots \times \dsigma {n}}]
\]
By the small tail assumption in signal space (Definition~\ref{def:onebuyeralg3}), this is at least $(1- \epsilon) \E \big[ \opt{\vec{D}} \big]$.
So the lemma follows.
\end{proof}

\begin{proof}[Proof of Theorem \ref{thm:multibuyeralg}]
It follows from Lemma \ref{lem:signal_multi_rev} and \ref{lem:signal_multi_rev2}.
\end{proof}

\subsection{Lower bounds for $n=1$}
\label{app:signal_lb}


\begin{theorem}
The sample complexity of any single agent mechanism against any $q$-bounded distribution in the Signals model is 
$$\Omega\left( \frac{\log\left( 1/q(\epsilon) \right) }{\epsilon^4} \right).$$
\end{theorem}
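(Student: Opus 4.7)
The plan is to construct a family of hard distributions in the Signals model by embedding scaled copies of HuangMR15's hard instance across the signal dimension, then to aggregate their hardness via an information-theoretic direct-sum argument. I would partition the signal interval $[0,1]$ into $K = \Theta(\log(1/q(\epsilon)))$ disjoint bands $I_1, \ldots, I_K$, each carrying probability mass on the order of $\epsilon$, with the highest band matching the top-$q(\epsilon)$ quantile from Definition~\ref{def:onebuyeralg3} and the leftover signal mass carrying a trivial zero-revenue conditional. Each band will carry a roughly equal share $\Theta(1/K)$ of the total optimal revenue, so that failing on any band costs more than the allowed $\epsilon$ slack.

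Within each band $I_j$ I would plant a scaled copy of the HuangMR15 hard regular distribution, parameterized by a hidden bump location $\delta_j \in \{\delta_j^{(0)}, \delta_j^{(1)}\}$ chosen adversarially. The scalings are arranged so that (a) higher-signal bands first-order stochastically dominate lower-signal ones (meeting the model's monotonicity assumption), and (b) samples with signal in band $I_{j'}$ are essentially uninformative about $\delta_j$ for $j' \ne j$. One verifies regularity of each conditional, stochastic dominance across bands, and the $q$-bounded tail property by appropriate choice of the scalings.

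For the lower bound, I would appeal to HuangMR15's reduction of the $\epsilon$-approximation problem to noisy-coin classification: distinguishing the two possible values of $\delta_j$ within $I_j$ requires $\Omega(\epsilon^{-3})$ samples whose signal lies in $I_j$, which translates to $m \cdot p_j = \Omega(\epsilon^{-3})$ and hence $m = \Omega(\epsilon^{-4})$ \emph{per band}. The crucial step is a direct-sum / Fano argument over the $K$ independent hidden parameters: because the $\delta_j$'s are independent and the in-band samples are disjoint, the mechanism must simultaneously identify all of them, and the per-band complexities add rather than take a maximum. This gives $m = \Omega(K \cdot \epsilon^{-4}) = \Omega(\log(1/q(\epsilon))/\epsilon^4)$, as claimed.

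The hardest step will be the direct-sum aggregation: a naive union bound over bands only gives $\max_j \epsilon^{-4}$, losing the $\log(1/q(\epsilon))$ factor. Obtaining the additive aggregation requires both the adversarial construction (so that identifying each $\delta_j$ is truly forced) and a Bayesian-risk argument in the style of Fano on the joint hypothesis $(\delta_1, \ldots, \delta_K)$. A secondary subtlety is balancing regularity, cross-band stochastic dominance, and the $q$-bounded tail property while leaving enough structural freedom inside each band for the HuangMR15 classification reduction to apply; this tension is what dictates the geometric placement of bands at the top of the signal spectrum and the specific form of the per-band scaling.
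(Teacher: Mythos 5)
There is a genuine gap, and it is in the two places you yourself flag as the ``hardest step.'' First, your parameterization is off: you use $K=\Theta(\log(1/q(\epsilon)))$ bands, each of signal mass $\Theta(\epsilon)$ and revenue share $\Theta(1/K)$. With the \cite{HuangMR15} gadget, failing to distinguish the hidden parameter in a band only loses a $\Theta(\epsilon)$ \emph{fraction of that band's} revenue, i.e.\ $\Theta(\epsilon/K)$ of the total, so it is false that ``failing on any band costs more than the allowed $\epsilon$ slack''; the mechanism can afford to fail on many (even all) of your bands. Second, and more fatally, the additive ``direct-sum/Fano'' aggregation you hope for cannot hold for this construction: all bands draw their in-band samples from the \emph{same} pool of $m$ samples, and the only requirement your reduction imposes is $m\,p_j=\Omega(\epsilon^{-3})$ in-band samples per band. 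With every $p_j=\Theta(\epsilon)$ this is the single constraint $m=\Omega(\epsilon^{-4})$, identical for all bands; indeed a draw of $\tilde{O}(\epsilon^{-4})$ total samples already supplies \emph{every} band with $\tilde{\Theta}(\epsilon^{-3})$ in-band samples simultaneously, so your instance is learnable with $\tilde{O}(\epsilon^{-4})$ samples and no argument can extract an $\Omega(K\epsilon^{-4})$ lower bound from it. Sample-complexity hardness here is budgeted by in-band samples, and those sum to at most $m$ across bands, so your configuration caps out at $m=\Omega(K\epsilon^{-3})$, an $\epsilon^{-1}$ factor short.

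The paper's construction fixes exactly this accounting. It uses $N\approx \log(1/q(\epsilon))/\epsilon$ conditionals (not $\log(1/q(\epsilon))$), with geometrically scaled copies $c_i=(1-2\epsilon_0)^{-2i}$ of the \cite{HuangMR15} pair (this gives the cross-signal stochastic dominance you also wanted) and signal probabilities $p_i\propto 1/c_i$, so every conditional carries an \emph{equal} revenue share $\approx 1/N \ll \epsilon$. Then no Fano-style joint-hypothesis argument is needed: a simple averaging (Markov) argument shows any $(1-\epsilon)$-approximate mechanism must be $(1-3\epsilon)$-approximate on at least half of the $N$ conditionals, each such conditional forces $\Omega(\epsilon^{-3})$ samples \emph{with that exact signal} (via the KL-divergence classification reduction, Theorem~4.2 and Lemma~4.4 of \cite{HuangMR15}), and since samples with different signals are disjoint these requirements add by mere counting: $m\ge (N/2)\cdot\Omega(\epsilon^{-3})=\Omega\big(\log(1/q(\epsilon))\,\epsilon^{-4}\big)$, with $\log(1/q(\epsilon))$ entering through the relation between the scale range $\alpha=c_N$ and the tail mass of the geometric signal distribution. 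If you want to salvage your write-up, replace the $\Theta(\log(1/q(\epsilon)))$ equal-mass bands by these $\Theta(\log(1/q(\epsilon))/\epsilon)$ equal-revenue conditionals and drop the direct-sum step in favor of the averaging-plus-disjointness count.
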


\begin{proof}

Our instance will have $\frac{\log\frac{1}{\alpha}}{2\log\left(1-18\epsilon\right)} + 1$ conditional distributions. Let $N = \frac{\log\frac{1}{\alpha}}{2\log\left(1-18\epsilon\right)} \approx \frac{\log \alpha}{\epsilon}$. For the $i$-th conditional distribution (that is the distribution conditioned on signal $\sigma_i$) we will have two candidate distributions $D^{\sigma_i}_1$ and $D^{\sigma_i}_2$, and it will be impossible to extract high revenue from that conditional, without seeing at least $\frac{1}{\epsilon^3}$ samples (from \textit{that} conditional), even when having exact knowledge of the rest of the distributions. The expected revenue from each conditional will be approximately the same, thus, in order for a pricing algorithm to achieve high expected revenue overall, it will have to achieve high expected revenue in most of the conditionals.

Our construction works as follows: for each $i \in \left[ 0, N \right]$ let $$c_i = \frac{1}{\left( 1 - 2 \epsilon_0 \right)^{2i}} .$$  $D^{\sigma_i}_1$ is the distribution with c.d.f $$F^{i}_1 \left( x \right) = 1 - \frac{c_i}{x+c_i} $$ and p.d.f $$f^{i}_1 = \frac{c_i}{\left(x+c_i\right)^2} .$$ $D^{\sigma_i}_2$ has c.d.f

\[ F^{i}_2 \left( x \right) =
   \begin{cases}
     1 - \frac{c_i}{x+c_i}       & \quad \text{if } x \in \left[ 0 , \frac{\left( 1 - 2\epsilon_0\right) c_i}{2 \epsilon_0}\right] \\
     1 - \frac{c_i \left(1 - 2\epsilon_0 \right) ^2}{x - c_i \left( 1-2\epsilon_0 \right)}  & \quad \text{if } x > \frac{\left( 1 - 2\epsilon_0\right) c_i}{2 \epsilon_0}\\
   \end{cases}
 \]

and p.d.f.
\[ f^{i}_2 \left( x \right) =
   \begin{cases}
     \frac{c_i}{\left(x+c_i\right)^2}      & \quad \text{if } x \in \left[ 0 , \frac{\left( 1 - 2\epsilon_0\right) c_i}{2 \epsilon_0}\right] \\
     \frac{ c_i \left( 1 - 2\epsilon_0\right)^2}{\left( x - c_i \left( 1-2\epsilon_0 \right) \right)^2}  & \quad \text{if } x > \frac{\left( 1 - 2\epsilon_0\right) c_i}{2 \epsilon_0}\\
   \end{cases}
 \]

where $\epsilon_0 = 9\epsilon$.

The revenue curves as a function of the price are 
\begin{equation*}
\begin{gathered}
  R^i_1 \left( x \right) = \frac{x c_i}{x + c_i} \\
\end{gathered}\hspace{6em}
\begin{gathered}
  R^i_2 \left( x \right) =
  \begin{cases}
     \frac{x c_i}{x+c_i}      & \quad \text{if } x \in \left[ 0 , \frac{\left( 1 - 2\epsilon_0\right) c_i}{2 \epsilon_0}\right] \\
     \frac{x c_i \left(1 - 2\epsilon_0 \right) ^2}{x - c_i \left( 1-2\epsilon_0 \right)}  & \quad \text{if } x > \frac{\left( 1 - 2\epsilon_0\right) c_i}{2 \epsilon_0}\\
   \end{cases} \\
\end{gathered}
\end{equation*}

and as a function of the quantile space
\[ R^i_1 \left( q \right) = c_i \left( 1 - q \right) \]
\[ R^i_2 \left( q \right) =
  \begin{cases}
     c_i \left( \left(1 - 2\epsilon_0\right)^2 + q \left( 1 - 2\epsilon_0 \right) \right)  & \quad \text{if } q \in \left[ 0 , 2 \epsilon_0\right]\\
     c_i \left( 1 - q \right)      & \quad \text{if } q \in \left[ 2 \epsilon_0, 1\right] \\
   \end{cases} \]

Conditioned on seeing signal $\sigma_i$ the Myerson revenue is $c_i$ for $D^{\sigma_i}_1$ and $c_i \left(1-2\epsilon_0\right)$ for $D^{\sigma_i}_2$. Notice that both $D^{\sigma_i}_1$ and $D^{\sigma_i}_2$  first-order stochastically dominate both $D^{\sigma_{i-1}}_1$ and $D^{\sigma_{i-1}}_2$ (and are dominated by the $D^{\sigma_{i+1}}$'s). Also, we have that $c_0 = 1$ and $c_N = \left( 1 - 2 \epsilon_0 \right)^{-2N} = \left( 1 - 18 \epsilon \right)^{- 2 \frac{1}{2} \log_{1 - 18 \epsilon} \frac{1}{\alpha}} = \alpha$, thus the gap between the highest and the lowest revenue is $\alpha$.

In order to make the expected revenue from each conditional approximately the same, we set the probability $p_i$ of seeing signal $\sigma_i$ equal to $\gamma \cdot \frac{1}{c_i}$, where $\gamma$ is the normalizing factor, i.e.  
$$ \gamma = \frac{1}{\sum_{j=0}^{N} \frac{1}{c_j} }$$.

The optimal expected revenue is upper bounded by:
\[ OPT = \sum_{i=0}^{N} p_i \opt{{D^{\sigma_i}}} \leq \sum_{i=0}^{N} c_i p_i  = \gamma (N+1) \]

Dropping $p_N$ in probability mass in signal space results in dropping about $\gamma$ in terms of revenue, i.e. $\frac{1}{N}$ of the total revenue. Dropping $p_N + p_{N-1}$ results in dropping a $\frac{2}{N}$ fraction of the total revenue, and so on; thus, in order to drop an $\epsilon$ fraction of the total revenue, and since $N$ is approximately $\frac{\log \alpha}{\epsilon}$, we need to drop $\sum_{ i = N - log \alpha}^{N} p_i$ in probability. This simplifies to 
\begin{equation*}
\left( \frac{(1-2\epsilon)}{4\epsilon(1-\epsilon)} \right)^2  \cdot \frac{1}{\alpha} \cdot \frac{1}{\alpha^{2 \log(1-2\epsilon)}}.
\end{equation*}

Using this, we can get the following relation between $q(\epsilon)$ and $\alpha$:
\begin{equation}
\label{eq:alpha-to-qe}
\log \alpha \approx \log \left( \frac{1}{q(\epsilon)}\right)
\end{equation}

\begin{claim}
Any pricing algorithm that is $(1-\epsilon)$ approximate for this instance must be $(1- 3 \epsilon)$ approximate for at least half of the signals.
\end{claim}

\begin{proof}
Assume that we have a $(1-\epsilon)$ approximate algorithm, and there exists a set of signals $A$ such that $|A| > \frac{N+1}{2}$ and for every signal $\sigma_i \in A$: $\E \left[ R | \sigma_i \right] < \left( 1- 3 \epsilon \right) \opt{D^{\sigma_i}}$. We will show that the total expected revenue $R$ of the algorithm is less than $\left( 1 -\epsilon \right)$ of $OPT$.

\begin{align*}
R & = \sum_{i=0}^{N} p_i \E \left[ R | \sigma_i \right] = \sum_{i \in A} p_i \E \left[ R | \sigma_i \right] + \sum_{i \notin A} p_i \E \left[ R | \sigma_i \right] \\
& < \sum_{i \in A} \left( 1 - 3\epsilon \right) p_i \opt{D^{\sigma_i}} + \sum_{i \notin A} p_i \opt{D^{\sigma_i}} = OPT - 3 \epsilon \sum_{i \in A} p_i \opt{D^{\sigma_i}} \\
& < OPT - 3 \epsilon |A| \left( 1 - 2 \epsilon_0 \right) \gamma
\end{align*}

Remember that $\gamma \geq \frac{1}{N+1} OPT$, so $3 \epsilon |A| \left( 1 - 2 \epsilon_0 \right) \gamma \geq \frac{3}{2} \left( 1 - 2 \epsilon_0 \right) \epsilon OPT > \epsilon OPT$.

Thus we have:
\[ R < \left( 1 - \epsilon \right) OPT \] a contradiction.

\end{proof}

We will proceed to show that in order to $(1-3 \epsilon)$ approximate the revenue for a given signal $\sigma_i$, any pricing algorithm requires $\Omega\left( \frac{1}{\epsilon^3}\right)$ samples from that signal. Thus, since any $(1- \epsilon)$ approximate pricing algorithm requires $\Omega\left( \frac{1}{\epsilon^3}\right)$ approximate in at least $\frac{N+1}{2} > \frac{\log\frac{1}{\alpha}}{4 \log\left(1-18\epsilon\right)} $ of the signals, the total number of signals required is $\Omega\left( \frac{\log\frac{1}{\alpha}}{2\log\left(1-18\epsilon\right)}\frac{1}{\epsilon^3} \right)$, which is $\Omega\left( \frac{\log\alpha }{\epsilon^4} \right)$, and thus using equation~\ref{eq:alpha-to-qe} we get the desired lower bound for $q$-bounded distributions.

\begin{lemma}
Any pricing algorithm that is $(1-3 \epsilon)$ approximate for both $D^{\sigma_i}_1$ and $D^{\sigma_i}_2$ requires $\Omega\left( \frac{1}{\epsilon^3}\right)$ samples from the distribution conditioned on $\sigma_i$.
\end{lemma}

\begin{proof}
First, notice that even exact knowledge of the distributions conditioned on the rest of the signals does not help here; we're back to a one-distribution problem.

We will follow a similar approach to~\cite{HuangMR15}: we will first bound the KL divergence of $D^{\sigma_i}_1$ and $D^{\sigma_i}_2$, then show that the $(1-9\epsilon) = (1-\epsilon_0)$ optimal price sets are disjoint. We state here, without proofs, a theorem and a lemma from \cite{HuangMR15} that we use:

\begin{theorem}[Theorem 4.2 from \cite{HuangMR15}]\label{thm:samplebound}
If value distributions $D_1$ and $D_2$ have disjoint $(1-3\epsilon')$ approximate price sets, and there is a pricing algorithm that is $(1-\epsilon')$ approximate for both $D_1$ and $D_2$, then the algorithm uses at least $\frac{4}{9} \frac{1}{D_{KL}\left(D_1\Vert D_2\right) + D_{KL}\left(D_2\Vert D_1\right)}$ samples.
\end{theorem}

\begin{lemma}[Lemma 4.4 from \cite{HuangMR15}]\label{lem:KLbound}
If distributions $D_1$ and $D_2$ with densities $f_1$ and $f_2$ satisfy that $\left(1+\epsilon'\right)^{-1} \leq \frac{f_1(\omega)}{f_2(\omega)} \leq (1+\epsilon')$ for every $\omega \in \Omega$, and there is a subset of outcomes $\Omega'$ such that $p_1(\omega) = p_2(\omega)$ for every $\omega \in \Omega'$, then
\[ D_{KL}\left(D_1\Vert D_2\right) + D_{KL}\left(D_2\Vert D_1\right) \leq (\epsilon')^2 \left( 1 - p_1\left( \Omega' \right)\right)\]
\end{lemma}

Notice that:

\[ \frac{f^{i}_1}{f^i_2} =
   \begin{cases}
     1			 & \quad \text{if } x \in \left[ 0 , \frac{\left( 1 - 2\epsilon_0\right) c_i}{2 \epsilon_0}\right] \\
     \frac{c_i}{ c_i \left( 1 - 2\epsilon_0\right)^2} \cdot \frac{\left( x - c_i \left( 1-2\epsilon_0 \right) \right)^2}{\left(x+c_i\right)^2}  & \quad \text{if } x > \frac{\left( 1 - 2\epsilon_0\right) c_i}{2 \epsilon_0}\\
   \end{cases}
 \]

Thus it holds that $\frac{f^{i}_1}{f^i_2} \in \left[ \left( 1-2\epsilon_0 \right)^2, \left(  1-2\epsilon_0 \right)^2 \right]$, and thus $\frac{f^{i}_1}{f^i_2} \in \left[ \left( 1 + \frac{\epsilon_0}{1-2 \epsilon_0} \right)^{-1}, \left( 1 + \frac{\epsilon_0}{1-2 \epsilon_0} \right) \right]$.

$D^{\sigma_i}_1$ and $D^{\sigma_i}_2$ are identical for $x \leq \frac{\left( 1 - 2\epsilon_0\right) c_i}{2 \epsilon_0}$. In quantile space, this corresponds to $q \geq 2 \epsilon_0$. Thus, by applying Lemma~\ref{lem:KLbound} we have that:
\begin{equation}\label{eq:KLbound}
D_{KL}\left(D^{\sigma_i}_1\Vert D^{\sigma_i}_2\right) + D_{KL}\left(D^{\sigma_i}_2\Vert D^{\sigma_i}_1\right) \leq \left(\frac{\epsilon_0}{1-2 \epsilon_0} \right)^2 2 \epsilon_0 = \frac{2 \epsilon_0^3}{\left( 1-2 \epsilon_0 \right)^2}
\end{equation}

Moreover, the $\left(1-\epsilon_0\right)$ price sets are disjoint, by comparing the revenue as a function of the price:

\[ R^i_1(x) \geq  (1-\epsilon_0)c_i  \text{ iff } x \geq \frac{c_i(1-\epsilon_0)}{\epsilon_0} \] and 
\[ R^i_2(x) \geq (1-\epsilon_0)(1-2\epsilon_0)c_i   \text{ iff } x \in \left[ \frac{(1-3\epsilon_0+2\epsilon_0^2)c_i}{3\epsilon_0-2\epsilon_0^2} , \frac{c_i(1-\epsilon_0)(1-2\epsilon_0)}{\epsilon_0} \right] \]

By combining equation~\ref{eq:KLbound} with the above observation, and Theorem~\ref{thm:samplebound} ($\epsilon' = 3\epsilon$) we get the desired bound.
\end{proof}

\end{proof}

\section{Extensions}
\label{sec:extensions}

An earlier version of this paper section sketched how to use the techniques developed in this paper to show sample complexity upper bounds for several natural extensions of the settings discussed in the Section~\ref{sec:no_signal} and Section~\ref{sec:signal}.
We omit these extensions in the current version since these bounds are subsumed by subsequent works.
Instead, we briefly summarize the technical ingredients developed in this paper and the scopes in which they hold.
We also introduce a new ingredient called Mapping Viewpoint that is necessary for generalizing our techniques to derive sample complexity upper bounds for problems in the downward-closed setting in the \crmodel. 

\begin{itemize}
    \item \textbf{Tail-bound for Regular Distributions:}
        Lemma~\ref{lem:tailbound_regular} proves that for any $\epsilon$, values larger than a certain threshold can be rounded down to the threshold while preserving at least a $1 - \epsilon$ fraction of the optimal revenue.
        This result holds for \emph{any problem in the downward-closed setting}.
        It implies a reduction from the case of regular distributions to that of bounded distributions.
        Generalizing the extreme value theorem for MHR distributions by \citet{CaiD15extreme} to derive better tail-bounds for MHR distributions is beyond the scope of this paper; 
        we remark that the sample complexity upper bounds for regular distributions apply to MHR distributions since the latter is a special case of the former.
    \item \textbf{Discretization in Value Space:}
        Lemma~\ref{lem:CR_bounded_additive} (resp., Lemma~\ref{lem:CR_bounded_multiplicative}) shows that it suffices to consider values that are multiples of $\epsilon$ (resp., powers of $1 + \epsilon$) for the purpose of getting an $\epsilon$ additive (resp., multiplicative $1 - \epsilon$) approximation, for \emph{arbitrary single parameter problems}, even those beyond the downward-closed setting.
        As a corollary of this result (and tail-bounds for regular and MHR distributions in the corresponding settings), it suffices to consider a value set of size $\tilde{O}(\epsilon^{-1})$.
        Section~\ref{sec:no_signal} uses the extreme value theorem by \citet{CaiD15extreme}
    \item \textbf{Ranking Viewpoint:}
        Theorem~\ref{thm:CR_bounded_additive} shows that when the value set is finite, Myerson's optimal auction can be characterized by a total order over $n |V|$ bidder-value pairs (according to their virtual values) and a dummy element (corresponding to having virtual value $0$), for any problem in the single-item setting.
        We shall refer to this observation as the ranking viewpoint, and show in Lemma~\ref{lem:ranking_viewpoint} that it holds more generally for \emph{any problem in the matroid setting}. 
        As a result, it suffices to consider auctions that correspond to these $(n|V|+1)!$ possible rankings.
        Readers who are familiar with standard arguments in statistical learning theory may realize the ranking viewpoint together with the discretization in value space immediately lead to an \emph{information theoretic} sample complexity upper bound.
    \item \textbf{Mapping Viewpoint (Discretization in Virtual Value Space):}
        In order to generalize our results to problems in the downward-closed setting, we will show in Lemma~\ref{lem:mapping_viewpoint} that to get an $\epsilon$ additive (resp., $1 - \epsilon$ multiplicative) approximation, it suffices to approximately maps values to virtual values up to a $\epsilon$ additive (resp., $1 - \epsilon$ multiplicative) approximation.
        As a result, there always exists an $\epsilon$ additive (resp., $1 - \epsilon$ multiplicative) approximately optimal auction that can be characterized by $n$ mappings from the value set to $\tilde{O}(\epsilon^{-1})$ discretized virtual values; 
        there are $\tilde{O}(\epsilon^{-1})^{n|V|}$ such mappings.
        We shall refer to this result as the mapping viewpoint, which holds for \emph{any single parameter problems in the downward-closed setting}.
    \item \textbf{Concentration Inequality for Product Distributions:}
        Theorem~\ref{thm:prodconcentrate} show a Bernstein-style concentration inequality for product distributions, which allows us to derive constructive and computationally efficient sample complexity upper bounds.
        The theorem holds for \emph{any single parameter problems}.%
        \footnote{In fact, it goes beyond auctions and holds generally for any bounded functions and any product distributions.}
    \item \textbf{Strong Revenue Monotonicity:}
        Theorem~\ref{thm:revenueMonotonicity} shows that Myerson's optimal auction w.r.t.\ a distribution $\Dboldhat$ gets at least the optimal revenue w.r.t.\ $\Dboldhat$ on any distribution that first-order stochastically dominates $\Dboldhat$. This result holds for \emph{any single parameter problem in the matroid setting}.
\end{itemize}

Table~\ref{tab:lemma_scope} summarizes the scopes in which different technical ingredients hold, and Table~\ref{tab:lemma_dependence} shows how the sample complexity upper bounds in different settings depend on these ingredients.

\begin{table}
    \renewcommand\arraystretch{1.2}
    \centering
    \begin{tabular}{|c|c|c|c|}
        \hline
        & Single-item & Matroid & Downward-closed \\
        \hline 
        Regular Tail Bound & \cmark & \cmark & \cmark \\
        \hline
        Value Discretization & \cmark & \cmark & \cmark \\
        \hline
        Ranking Viewpoint & \cmark & \cmark & \\
        \hline
        Mapping Viewpoint & \cmark & \cmark & \cmark \\
        \hline
        Concentration for Product Distributions & \cmark & \cmark & \cmark \\
        \hline
        Strong Revenue Monotonicity & \cmark & \cmark & \\
        \hline
    \end{tabular}
    \caption{Summary of the scopes in which different lemmas hold}
    \label{tab:lemma_scope}
    %
    \bigskip
    %
    \renewcommand\arraystretch{1.2}
    \centering
    \begin{tabular}{|c|c|c|c|}
        \hline
        & \multicolumn{2}{c|}{\CRmodel} & \multirow{2}{*}{\Signalsmodel} \\
        \cline{2-3}
        & Regular (MHR) & Bounded & \\
        \hline 
        Regular Tail Bound & \cmark & & \cmark \\
        \hline
        Value Discretization & \cmark & \cmark & \cmark \\
        \hline
        Ranking/Mapping Viewpoint & \cmark & \cmark & \\
        \hline
        Concentration for Product Distributions & \cmark & \cmark & \cmark \\
        \hline
        Strong Revenue Monotonicity & & & \cmark \\
        \hline
    \end{tabular}
    \caption{Summary of how the sample complexity upper bounds in different models depend on different ingredients}
    \label{tab:lemma_dependence}
\end{table}

\paragraph{Ranking Viewpoint.}

We first formalize the ranking viewpoint and argue that it genearlizes to the matroid setting.

\begin{lemma}[Ranking Viewpoint]
    \label{lem:ranking_viewpoint}
    For any single parameter problem in the matroid setting with a finite set of values $V$, there is a total order $\sigma$ over $n |V|$ bidder-value pairs and a dummy element $\perp$, where $\sigma(i, v)$'s and $\sigma(\perp)$ are the ranks of the elements, such that the following auction is incentive compatible and revenue optimal:
    \begin{enumerate}
        \item Given any value profile $\vec{v} = (v_1, v_2, \dots, v_n)$, pick bidders into the winner set one by one:
            \begin{enumerate}
                \item Initialize $\vec{\alloc} = \emptyset$.
                \item While there exists bidder $i$ such that (i) it is feasible to add bidder $i$, i.e., $\vec{\alloc} \cup \{ i \} \in \allocset$, and (ii) its rank is higher than the dummy element, i.e., $\sigma(i, v_i) < \sigma(\perp)$, add the bidder with the highest rank, i.e., the smallest $\sigma(i, v_i)$, to the winner set $\vec{\alloc}$.
            \end{enumerate}
        \item For each bidder $i$ in the winner set $\vec{\alloc}$, charge a price that equals the minimum value that would have still let bidder $i$ win in the previous step.
    \end{enumerate}
\end{lemma}

\begin{proof}
    Let $\sigma$ be the total order derived from the ironed virtual values, where the dummy element $\perp$ corresponds to having ironed virtual value $0$; break ties in lexicographical order of the bidders, and the order of the values.
    More precisely, let it be such that $\sigma(i, v_i) < \sigma(\perp)$ if and only if the ironed virtual value of $v_i$ w.r.t.\ $D_i$ is positive, and $\sigma(i, v_i) < \sigma(j, v_j)$ if and only if one of the following conditions hold:
    \begin{enumerate}
        \item The ironed virtual value $\bar{\phi}_i(v_i)$ of value $v_i$ w.r.t.\ $D_i$ is strictly larger than that of $v_j$ w.r.t.\ $D_j$.
        \item The ironed virtual values are equal, but $i < j$.
        \item The ironed virtual value are equal and $i = j$, but $v_i > v_j$.
    \end{enumerate}

    By the definition of Myerson's optimal auction, its allocation rule pick a winner set $\vec{\alloc} \in \allocset$ such that the sum of the corresponding ironed virtual values, i.e., $\sum_{i \in \vec{\alloc}} \bar{\phi}_i(v_i)$, is maximized.
    Further, it is known that in the matroid setting, the following greedy algorithm solves the above maximization problem exactly (see, e.g., Theorem 40.1 in \citet{Schrijver/2003/Springer}):
    \begin{enumerate}[label={(\alph*)}]
        \item Initialize $\vec{\alloc} = \emptyset$.
        \item While there exists bidder $i$ such that (i) it is feasible to add bidder $i$, i.e., $\vec{\alloc} \cup \{ i \} \in \allocset$, and (ii) its ironed virtual value is positive, i.e., $\bar{\phi}_i(v_i) > 0$, add the bidder with the largest ironed virtual value to the winner set $\vec{\alloc}$.
    \end{enumerate}

    By the choice of the total order $\sigma$, the above allocation rule is identical to that in the lemma.
    Further by Myerson's characterization, the unique payment rule that makes the auction incentive compatible is the threshold pricing stated in the lemma.
\end{proof}

\paragraph{Mapping Viewpoint.}

\begin{lemma}[Mapping Viewpoint]
    \label{lem:mapping_viewpoint}
    Consider any single parameter problem in the downward-closed setting with a finite set of values $V$.
    Let $v_{\max}$ be the largest number in $V$.
    Then, there is a mapping $\sigma$ from $n |V|$ bidder-value pairs to non-negative multiples of $\frac{\epsilon}{n}$ that are at most $v_{\max}$, such that the following auction is incentive compatible and an $\epsilon$ additive approximation:
    \begin{enumerate}
        \item Given any value profile $\vec{v} = (v_1, v_2, \dots, v_n)$, pick a feasible winner set $\vec{\alloc} \in \allocset$ that maximizes the sum of values of the $\sigma$ mapping, i.e., $\sum_{i \in \vec{\alloc}} \sigma(i, v_i)$ (wlog, no $i \in \vec{\alloc}$ has $\sigma(i, v_i) = 0$).
        \item For each bidder $i$ in the winner set $\vec{\alloc}$, charge a price that equals the minimum value that would have still let bidder $i$ win in the previous step.
    \end{enumerate}
\end{lemma}

\begin{proof}
    Consider the following mapping $\sigma(i, v_i)$.
    If the ironed virtual value $\bar{\phi}_i(v_i) \le 0$, let $\sigma(i, v_i) = 0$;
    otherwise, define $\sigma(i, v_i)$ by rounding the ironed virtual value $\bar{\phi}_i(v_i)$ down to the closest multiple of $\frac{\epsilon}{n}$ from below.
    Here, $\sigma(i, v_i) \le v_{\max}$ follows by the fact that ironed virtual values are no greater than the corresponding values.
    Then, consider the allocation $\vec{\alloc}$ that maximizes $\sum_{i \in \vec{\alloc}} \sigma(i, v_i)$, which wlog does not take any $i$ whose $\sigma(i, v_i) = 0$ (and thus $\bar{\phi}_i(v_i) \le 0$).

    First, this allocation rule is monotone in the sense that for any bidder $i$, having a larger value $v_i$ weakly increases the allocation $x_i$ to the bidder. 
    To see this, fix any bidder $i$, $\sigma(i, v_i)$ as a function of $v_i$ is obtained by rounding the ironed virtual value $\bar{\phi}_i(v_i)$.
    Since $\bar{\phi}_i(v_i)$ is non-decreasing in $v_i$, so is $\sigma(i, v_i)$.
    Therefore, the allocation in the lemma, which maximizes $\sum_{i \in \vec{\alloc}} \sigma(i, v_i)$, is monotone.

    As a result, Myerson's characterization asserts that the threshold payment in the lemma is the unique payment rule that makes the auction DSIC and IR. 
    Further, the expected revenue is equal to the expectation of the virtual welfare, i.e.:
    \[
        \E \bigg[ \sum_{i \in \vec{\alloc}} \phi_i(v_i) \bigg]
        ~.
    \]

    Further, all values within the same ironed interval have the same ironed virutal values and thus, the same $\sigma(i, v_i)$, which implies they have the same allocation.
    We get that the expected virtual welfare is identical to the expected ironed virtual welfare, i.e.:
    \[
        \E \bigg[ \sum_{i \in \vec{\alloc}} \bar{\phi}_i(v_i) \bigg]
        ~.
    \]

    By that $\|\vec{\alloc}\|_\infty \le 1$, $\vec{\alloc}$ also maximizes $\sum_{i \in \vec{\alloc}} \bar{\phi}_i(v_i)$ up to an $\epsilon$ additive approximation.
    Putting together shows the stated approximation guarantees.
\end{proof}

\subsection*{Acknowledgements}
This work was done in part while the authors were visiting the Simons Institute for the Theory of Computing.
We thank Matt Weinberg for the useful discussions about Revenue Monotonicity.
We also thank Zhihao Gavin Tang for spotting a bug in our original proof for strong revenue monotonicity.

\bibliographystyle{plainnat}
\bibliography{references}

\begin{thebibliography}{28}
\providecommand{\natexlab}[1]{#1}
\providecommand{\url}[1]{\texttt{#1}}
\expandafter\ifx\csname urlstyle\endcsname\relax
  \providecommand{\doi}[1]{doi: #1}\else
  \providecommand{\doi}{doi: \begingroup \urlstyle{rm}\Url}\fi

\bibitem[Azar et~al.(2013)Azar, Daskalakis, Micali, and
  Weinberg]{Azar2013optimal}
Pablo Azar, Constantinos Daskalakis, Silvio Micali, and S~Matthew Weinberg.
\newblock Optimal and efficient parametric auctions.
\newblock In \emph{Proceedings of the 24th Annual ACM-SIAM Symposium on
  Discrete Algorithms}, pages 596--604. SIAM, 2013.

\bibitem[Babichenko et~al.(2016)Babichenko, Barman, and
  Peretz]{BabichenkoBP/2016/MOR}
Yakov Babichenko, Siddharth Barman, and Ron Peretz.
\newblock Empirical distribution of equilibrium play and its testing
  application.
\newblock \emph{Mathematics of Operations Research}, 42\penalty0 (1):\penalty0
  15--29, 2016.

\bibitem[Balcan et~al.(2008)Balcan, Blum, Hartline, and
  Mansour]{Balcan2008reducing}
Maria-Florina Balcan, Avrim Blum, Jason~D Hartline, and Yishay Mansour.
\newblock Reducing mechanism design to algorithm design via machine learning.
\newblock \emph{Journal of Computer and System Sciences}, 74\penalty0
  (8):\penalty0 1245--1270, 2008.

\bibitem[Bhattacharya et~al.(2013)Bhattacharya, Koutsoupias, Kulkarni,
  Leonardi, Roughgarden, and Xu]{Bhattacharya2013NearOptimal}
Sayan Bhattacharya, Elias Koutsoupias, Janardhan Kulkarni, Stefano Leonardi,
  Tim Roughgarden, and Xiaoming Xu.
\newblock Near-optimal multi-unit auctions with ordered bidders.
\newblock In \emph{Proceedings of the 14th ACM Conference on Electronic
  Commerce}, pages 91--102, New York, NY, USA, 2013. ACM.

\bibitem[Bulow and Klemperer(1996)]{BulowKlemperer}
Jeremy Bulow and Paul Klemperer.
\newblock Auction versus negotiations.
\newblock \emph{The American Economic Review}, 86\penalty0 (1):\penalty0
  180--194, 1996.

\bibitem[Cai and Daskalakis(2015)]{CaiD15extreme}
Yang Cai and Constantinos Daskalakis.
\newblock Extreme value theorems for optimal multidimensional pricing.
\newblock \emph{Games and Economic Behavior}, 92:\penalty0 266--305, 2015.

\bibitem[Chen et~al.(2023)Chen, Huang, Majdi, and Yan]{ChenHMY:arXiv:2022}
Ziyun Chen, Zhiyi Huang, Dorsa Majdi, and Zipeng Yan.
\newblock Strong revenue (non-)monotonicity of single-parameter auctions.
\newblock In \emph{Proceedings of the 24th ACM Conference on Economics and
  Computation}, 2023.

\bibitem[Cole and Roughgarden(2014)]{ColeR14}
Richard Cole and Tim Roughgarden.
\newblock The sample complexity of revenue maximization.
\newblock In \emph{Proceedings of the 46th Annual ACM SIGACT Symposium on the
  Theory of Computing}, pages 243--252, 2014.

\bibitem[Dhangwatnotai et~al.(2014)Dhangwatnotai, Roughgarden, and
  Yan]{Dhangwatnotai2014revenue}
Peerapong Dhangwatnotai, Tim Roughgarden, and Qiqi Yan.
\newblock Revenue maximization with a single sample.
\newblock \emph{Games and Economic Behavior}, 2014.

\bibitem[Dughmi et~al.(2014)Dughmi, Han, and Nisan]{Dughmi2014sampling}
Shaddin Dughmi, Li~Han, and Noam Nisan.
\newblock Sampling and representation complexity of revenue maximization.
\newblock In \emph{Proceedings of the 10th Conference on Web and Internet
  Economics}, pages 277--291. Springer, 2014.

\bibitem[Elkind(2007)]{Elkind2007designing}
Edith Elkind.
\newblock Designing and learning optimal finite support auctions.
\newblock In \emph{Proceedings of the 18th Annual ACM-SIAM Symposium on
  Discrete Algorithms}, pages 736--745. Society for Industrial and Applied
  Mathematics, 2007.

\bibitem[Even-Dar et~al.(2008)Even-Dar, Feldman, Mansour, and
  Muthukrishnan]{Even2008position}
Eyal Even-Dar, Jon Feldman, Yishay Mansour, and S~Muthukrishnan.
\newblock Position auctions with bidder-specific minimum prices.
\newblock In \emph{Proceedings of 4th International Workshop on Internet and
  Network Economics}, pages 577--584. Springer, 2008.

\bibitem[Gonczarowski and Nisan(2017)]{GonczarowskiN/2017/STOC}
Yannai~A Gonczarowski and Noam Nisan.
\newblock Efficient empirical revenue maximization in single-parameter auction
  environments.
\newblock In \emph{Proceedings of the 49th Annual ACM SIGACT Symposium on
  Theory of Computing}, pages 856--868. ACM, 2017.

\bibitem[Gonczarowski and Weinberg(2021)]{GonczarowskiW:JACM:2021}
Yannai~A Gonczarowski and S~Matthew Weinberg.
\newblock The sample complexity of up-to-$\epsilon$ multi-dimensional revenue
  maximization.
\newblock \emph{Journal of the ACM}, 68\penalty0 (3):\penalty0 1--28, 2021.

\bibitem[Guo et~al.(2019)Guo, Huang, and Zhang]{GuoHZ:STOC:2019}
Chenghao Guo, Zhiyi Huang, and Xinzhi Zhang.
\newblock Settling the sample complexity of single-parameter revenue
  maximization.
\newblock In \emph{Proceedings of the 51st Annual ACM Symposium on Theory of
  Computing}, pages 662--673. ACM, 2019.

\bibitem[Guo et~al.(2021)Guo, Huang, Tang, and Zhang]{GuoHTZ:COLT:2021}
Chenghao Guo, Zhiyi Huang, Zhihao~Gavin Tang, and Xinzhi Zhang.
\newblock Generalizing complex hypotheses on product distributions: auctions,
  prophet inequalities, and {P}andora's problem.
\newblock In \emph{Proceedings of the 34th Annual Conference on Learning
  Theory}, pages 2248--2288. PMLR, 2021.

\bibitem[Hart and Reny(2015)]{HartR/2015/TE}
Sergiu Hart and Philip~J Reny.
\newblock Maximal revenue with multiple goods: Nonmonotonicity and other
  observations.
\newblock \emph{Theoretical Economics}, 10\penalty0 (3):\penalty0 893--922,
  2015.

\bibitem[Hartline and Roughgarden(2009)]{HR09}
Jason~D Hartline and Tim Roughgarden.
\newblock Simple versus optimal mechanisms.
\newblock In \emph{Proceedings of the 10th ACM Conference on Electronic
  Commerce}, pages 225--234. ACM, 2009.

\bibitem[Huang et~al.(2018)Huang, Mansour, and Roughgarden]{HuangMR15}
Zhiyi Huang, Yishay Mansour, and Tim Roughgarden.
\newblock Making the most of your samples.
\newblock \emph{SIAM Journal on Computing}, 47\penalty0 (3):\penalty0 651--674,
  2018.

\bibitem[Leonardi and Roughgarden(2012)]{LeonardiRoughgarden2012}
Stefano Leonardi and Tim Roughgarden.
\newblock Prior-free auctions with ordered bidders.
\newblock In \emph{Proceedings of the 44th Annual ACM Symposium on Theory of
  Computing}, pages 427--434, New York, NY, USA, 2012. ACM.

\bibitem[Morgenstern and Roughgarden(2015)]{MR15}
Jamie Morgenstern and Tim Roughgarden.
\newblock The pseudo-dimension of nearly-optimal auctions.
\newblock In \emph{Proceedings of the 29th Conference on Neural Information
  Processing Systems}, pages 136--144, 2015.

\bibitem[Myerson(1981)]{Myerson}
Roger~B. Myerson.
\newblock Optimal auction design.
\newblock \emph{Mathematics of Operations Research}, 6\penalty0 (1):\penalty0
  58--73, 1981.

\bibitem[Ostrovsky and Schwarz(2011)]{Ostrovsky2011reserve}
Michael Ostrovsky and Michael Schwarz.
\newblock Reserve prices in internet advertising auctions: A field experiment.
\newblock In \emph{Proceedings of the 12th ACM conference on Electronic
  commerce}, pages 59--60. ACM, 2011.

\bibitem[Roberts et~al.(2013)Roberts, Gunawardena, Kash, and
  Key]{Roberts2013ranking}
Bem Roberts, Dinan Gunawardena, Ian~A Kash, and Peter Key.
\newblock Ranking and tradeoffs in sponsored search auctions.
\newblock In \emph{Proceedings of the 14th ACM Conference on Electronic
  Commerce}, pages 751--766. ACM, 2013.

\bibitem[Roughgarden and Schrijvers(2016)]{RoughgardenS/2016/EC}
Tim Roughgarden and Okke Schrijvers.
\newblock Ironing in the dark.
\newblock In \emph{Proceedings of the 17th ACM Conference on Economics and
  Computation}, pages 1--18. ACM, 2016.

\bibitem[Schrijver(2003)]{Schrijver/2003/Springer}
Alexander Schrijver.
\newblock \emph{Combinatorial optimization: polyhedra and efficiency},
  volume~24.
\newblock Springer Science \& Business Media, 2003.

\bibitem[Sivan and Syrgkanis(2013)]{Sivan2013vickrey}
Balasubramanian Sivan and Vasilis Syrgkanis.
\newblock Vickrey auctions for irregular distributions.
\newblock In \emph{Proceedings of the 9th International Conference on Web and
  Internet Economics}, pages 422--435. Springer, 2013.

\bibitem[Thompson and Leyton-Brown(2013)]{Thompson2013revenue}
David~RM Thompson and Kevin Leyton-Brown.
\newblock Revenue optimization in the generalized second-price auction.
\newblock In \emph{Proceedings of the 14th ACM Conference on Electronic
  Commerce}, pages 837--852. ACM, 2013.

\end{thebibliography}

\end{document}